\documentclass{llncs}

\usepackage{amsmath,amssymb}
\usepackage{arrows}
\usepackage{color}
\usepackage{xcolor}
\usepackage{graphicx}
\usepackage{xspace}
\usepackage{array}
\usepackage{subfig}
\usepackage[linesnumbered,ruled]{algorithm2e}

\usepackage{mathrsfs}

%
%

\newcommand{\calC}{{\cal C}}
\newcommand{\calF}{{\cal F}}
\newcommand{\calI}{{\cal I}}

\newcommand{\calN}{{\cal N}}
\newcommand{\calM}{{\cal M}}
\newcommand{\calP}{{\cal P}}
\newcommand{\calS}{{\cal S}}
\newcommand{\calT}{{\cal T}}
\newcommand{\calV}{{\cal V}}
\newcommand{\calX}{{\cal X}}
\newcommand{\calY}{{\cal Y}}

\newcommand{\RED}[1]{\textcolor{red}{#1}}
\newcommand{\BROWN}[1]{\textcolor{brown}{#1}}

\newcommand{\MAGENTA}[1]{\textcolor{magenta}{#1}}

\newcounter{dgnot}
\newenvironment{dgnot}[1][]{\refstepcounter{dgnot}\par\medskip
   \noindent \textbf{\RED{NfD~\thedgnot.}  #1} \rmfamily}{\medskip}

\newcounter{mknot}
\newenvironment{mknot}[1][]{\refstepcounter{mknot}\par\medskip
   \noindent \textbf{\MAGENTA{NfM~\themknot.}  #1} \rmfamily}{\medskip}

\newcounter{vcnot}
\newenvironment{vcnot}[1][]{\refstepcounter{vcnot}\par\medskip
   \noindent \textbf{\BROWN{NfV~\thevcnot.}  #1} \rmfamily}{\medskip}

%
%
\newcommand{\pfend}{$ $\hfill $\bullet$}
\renewcommand{\pfend}{\qed}
\def\deriv{\noindent\hspace*{.20in}\vspace{0.1in}}

\def\hint#1#2{\newline#1\hspace*{.25in}\{$#2$\}\vspace{0.1in}\newline\hspace*{.20in}\vspace{0.1in}}
\def\nh#1{\newline#1\hspace*{.25in}\vspace{0.1in}\newline\hspace*{.20in}\vspace{0.1in}}

%
%
\newcommand{\SET}[1]{\{ #1 \}}
\newcommand{\ZET}[2]{\SET{ \, #1 \mid #2 \,}}
\newcommand{\pws}[1]{\mathcal{P} #1}
\newcommand{\fpws}[1]{\mathcal{P}_{\omega} #1}
\newcommand{\compose}{\mathop{\raisebox{0.5pt}{\scriptsize $\circ$}}}
\newcommand{\nats}{\mathbb{N}}
\renewcommand{\succ}{\mathtt{Succ}}

%
%
\newcommand{\closurename}[1]{\calC_{#1}}
\newcommand\cmbisim{ \, \cong_{\closurename{}}  }
\newcommand\gcmbisim{\, \sim \, }

\newcommand\altcmbisim{ \, \simeq_{\closurename{}} }

\newcommand{\closure}[2]{\closurename{#1} \mkern1mu #2}
\newcommand{\forwclosurename}{\stackrel{\rightarrow}{\closurename{}}}
\newcommand{\forwclosure}[1]{\forwclosurename\!\!#1}
\newcommand{\backclosurename}{\stackrel{\leftarrow}{\closurename{}}}
\newcommand{\backclosure}[1]{\backclosurename\!\!#1}

\newcommand{\interiorname}[1]{\calI_{#1}}
\newcommand{\interior}[2]{\interiorname{#1}\,#2}
%
%
\newcommand{\Set}{\textbf{Set}}
\newcommand{\fmorph}[3]{[\![ {#3} ]\!]^{#2}_{#1}}
\newcommand{\beq}[2]{\approx^{{#2}}_{{#1}}}
\newcommand\ebisim{ \, \simeq_{\eta} \, }

%
%
\newcommand{\SLCS}{\mathtt{SLCS}\xspace}
\newcommand{\msep}{\,\mid\,}
\newcommand{\form}{\Phi}
\newcommand{\lneg}{\neg}
\newcommand{\lfalse}{\bot}
\newcommand{\ltrue}{\top}
\newcommand{\lrcsname}{{\stackrel{\rightarrow}{\rho}}}
\newcommand{\lrcs}[2]{\mbox{\footnotesize $\lrcsname$}\, #1[#2]}
\newcommand{\lrcdname}{{\stackrel{\leftarrow}{\rho}}}
\newcommand{\lrcd}[2]{\mbox{\footnotesize $\lrcdname$}\, #1[#2]}
\newcommand{\lnear}{\calN}

\newcommand{\lsurr}{\calS}
\newcommand{\lprop}{\calP}

\newcommand{\model}{\calM}
\newcommand{\pevalname}{\calV}
\newcommand{\peval}[1]{\pevalname \, #1}
\newcommand{\invpevalname}{\pevalname^{\mbox{-} \mkern-2mu \mbox{\scriptsize 1}}}
\newcommand{\invpeval}[1]{\invpevalname \mkern-1.5mu #1}

\newcommand\slcssubname{\mathtt{slcs}}
\newcommand\slcseq{\simeq_{\slcssubname}}
\newcommand\SMCSm{\SLCS^-}
\newcommand\slcsmsubname{\slcssubname^-}
\newcommand\slcsmeq{\simeq_{\slcsmsubname}}
\newcommand\topochecker{{\tt topochecker}}
\newcommand\voxlogica{{\tt VoxLogicA}}

\newcommand{\INFLOG}{IML\xspace}
\newcommand{\linfand}[1]{\bigwedge_{#1}}
\newcommand\inflogname{\mathtt{IML}}
\newcommand\inflogeq{\simeq_{\inflogname}}

\newcommand{\closurefunctor}{\textbf{C}}

\begin{document}

\mainmatter



\title{Towards Spatial Bisimilarity for Closure Models: \\
Logical and Coalgebraic Characterisations\thanks{Research partially supported by the MIUR Project PRIN 2017FTXR7S ``IT-MaTTerS'' (Methods and Tools for Trustworthy Smart Systems).}}
\titlerunning{Spatial Bisimilarity}

\author{Vincenzo~Ciancia\inst{1} \and Diego~Latella\inst{1}  \and Mieke~Massink\inst{1} \and Erik~de~Vink\inst{2}}
\institute{
CNR-ISTI, Pisa, Italy,\\
\email{$\{$V.Ciancia, D.Latella, M.Massink$\}$@cnr.it}
\and
Eindhoven University of Technology, The Netherlands,\\
\email{evink@win.tue.nl}
}

\authorrunning{Ciancia et al.}

\maketitle

\pagestyle{plain}

\begin{abstract}
The topological interpretation of modal logics provides descriptive
languages and proof systems for reasoning about points of topological
spaces. Recent work has been devoted to model checking of 
\emph{spatial logics} on \emph{discrete} spatial structures, such as
finite graphs and digital images, with applications in various case
studies including medical image analysis. These recent developments
required a generalization step, from topological spaces to
\emph{closure spaces}. In this work we initiate the study of
bisimilarity and minimization algorithms that are consistent with the
closure spaces semantics. For this purpose we employ coalgebraic
models. We present a coalgebraic definition of bisimilarity for
quasi-discrete models, which is adequate with respect to
a spatial logic with reachability operators, complemented by a free
and open-source minimization tool for finite models. We also discuss the non-quasi-discrete
case, by providing a generalization of the well-known set-theoretical
notion of \emph{topo-bisimilarity}, and a categorical definition, in
the same spirit as the coalgebraic rendition of \emph{neighbourhood
  frames}, but employing the covariant power set functor, instead of
the contravariant one. We prove its adequacy with respect to
\emph{infinitary modal logic}.

\end{abstract}

\begin{keywords}
 Spatial Logics,
Bisimilarity,
Coalgebra,
Closure Spaces.
\end{keywords}

\section{Introduction}
\label{sec:Introduction}

Traditional modal logic enjoys a topological interpretation, according to which
the modal formula $\diamond\mkern1mu \Phi$ is true at a point~$x$ of a topological space,
whenever $x$ belongs to the \emph{topological closure} of the set of points at
which $\Phi$ is true. This fundamental observation has led to a variety of
extensions of the basic framework, with different proof systems and
computational properties, cf.~\cite{HBSL}.

\emph{Model checking} has been studied for the case of spatial logics only
recently. In order to retain the topological flavour, but aiming at analysis of
more general structures, also encompassing graphs, the \emph{Spatial Logic for
Closure Spaces} ($\SLCS$) has been proposed by Ciancia et al.\ in~\cite{CLLM14}
together with an algorithm for model checking of finite models. The logic
$\SLCS$ is interpreted on \emph{closure spaces} (a generalization of topological
spaces where the closure operator is not necessarily idempotent). We refer the
reader to~\cite{CLLM16} for a full account of the logic and its main features
and properties -- including its extension with a \emph{collective} fragment. The
logic and its model checkers \topochecker~\cite{CGLLM15} and
\voxlogica~\cite{BCLM19} have been applied to several case
studies~\cite{CLLM16,CGLLM15,CGGLLM18} including a declarative approach to
medical image analysis~\cite{BCLM16,BCLM19,BCLM19a,BBCLM19}. An encoding of the
discrete Region Connection Calculus RCC8D of~\cite{RLG13} into the collective
variant of $\SLCS$ has been proposed in~\cite{CLM19}. The logic has also
inspired other approaches to spatial reasoning in the context of signal temporal
logic and system monitoring~\cite{BBLN17,NBCLM18} and in the verification of cyber-physical systems \cite{TKG17}.

In this work, we initiate the study of bisimilarity and minimization algorithms
for spatial structures, employing equivalence relations on points of a closure
space that are \emph{adequate} with respect to spatial logical equivalence. That
is, we require that two points are bisimilar if and only if they satisfy the
same formulas of a (chosen) spatial logical language. For the topological case,
one such equivalence has been provided by Aiello and Van Benthem~\cite{BeB07},
under the name of \emph{topo-bisimilarity}. This relation is adequate with
respect to logical equivalence of basic infinitary modal logic, i.e.\ a boolean
logic with one modal operator, and infinitary conjunction/\-disjunction. In
contrast, besides basic modalities, the logic $\SLCS$ features operators that
make use of reachability via paths of bounded and unbounded length (for
instance, the \emph{surrounded} and \emph{touch} operators of~\cite{CLLM16}).
Although the study of such operators has not been developed in full detail in
the classical spatial logics literature, they have proved useful in case
studies. For instance, the ability to identify two areas, characterised by given
logical formulas, that additionally are in contact with each other, while
retaining the point-based approach of topo-logics, has been the key to derive a
segmentation algorithm that labels brain tumours in three-dimensional medical
images, with accuracy in par with manual segmentation, and best-in-class machine
learning methods~\cite{BCLM19}.

In the present paper, we focus on two different, related problems. First of all,
we identify a spatial definition of bisimilarity for quasi-discrete models
(those that correspond to graphs), and a minimization algorithm for
\emph{finite} models, in the setting of logics with reachability. This is
directly aimed at supporting the future developments of the spatial model
checking methodology that is currently in use, e.g.\ in~\cite{BCLM19}. In
Section~\ref{sec:CMBisimilarity} we present a set-theoretical definition, and
provide some examples. In Section~\ref{sec:CMCoalgebraically}, we provide a
coalgebraic rendition of such an equivalence. In Section~\ref{sec:SLCSEq}, we
prove adequacy with respect to logical equivalence of a logic with two
reachability operators (corresponding to the two directions of ``reaching'' and
``being reached''). In Section~\ref{sec:tool} we introduce an open source tool
that is able to minimize finite models via coalgebraic partition refinement.

The second research question that we address here, is whether the theory of
topo-bisimilarity of~\cite{BeB07}, characterising infinitary modal logic
(without reachability operators), can be generalised to closure models (not
limited to the quasi-discrete ones). In Section~\ref{sec:generalization}, we
first provide a consistent generalization, obtained by appropriately replacing
the notion of an \emph{open neighbourhood} with one that is equivalent in the
restricted setting of topological spaces, but not in the more general one. The
defined equivalence relation is adequate for infinitary modal logic when
interpreted on closure spaces. Then, we provide a coalgebraic definition. We
prove that logical equivalence of infinitary modal logic can be characterised as
behavioural equivalence for coalgebras of the \emph{closure functor}
$\pws(\pws(-))$. The notion we propose is similar in spirit to
\emph{neighbourhood frames} (see~\cite{HKP09}), although we use the
\emph{covariant} power set, therefore staying closer to the more classical
 literature on coalgebras in Computer Science.
 
Although the results we present are sound and stable, we consider them as a
preliminary foundation. Future work will be devoted to the characterisation of
logical equivalence for variants of the considered logics (for instance, those
that cannot express one-step modalities, logics with distances, etc.). We
provide some discussion on these matters in Section~\ref{sec:discussion}.

\section{Preliminaries}
\label{sec:Preliminaries}

Given set~$X$ and relation $R \subseteq X \times X$, we let $R^{\,t}$
denote the \emph{transitive closure} of~$R$ and let $R^{-1}$ denote
the inverse of~$R$, i.e.\ $R^{-1} = \ZET{(x_1,x_2)}{(x_2,x_1)\in R}$.
For $x \in X$, we let $[x]_R$ denote the equivalence class of~$x$ (we
will omit the subscript whenever this does not cause confusion).  We
let $\pws{}$ denote the covariant powerset functor; for $f : X \to Y$
and $A \subseteq X$, its action on arrows $\pws{}f \, A$, often
abbreviated to $f \, A$, is defined as $\ZET{fa}{a \in A}$.
Similarly, $\fpws{X}$ denotes the covariant \emph{finite} powerset
functor. For $f : X \to Y$ a function, we denote by $f^{-1} : \pws Y \to \pws X$ its ``relational'' inverse, that is the function mapping $B \subseteq Y$ to $\{ x \in X \mid f x \in B \}$. We will often use currying for function type definitions and
applications, when this does not create confusion.

\begin{definition}\label{def:ClosureSpaces}
 A \emph{closure space} is a pair $(X,\closurename{})$ where $X$ is a
 non-empty set (of {\em points}) and \emph{$\closurename{} : \pws{X} \to
   \pws{X}$} is a function satisfying the following axioms.
 \begin{enumerate}
 \item $\closure{}{\emptyset} = \emptyset$
 \item $A \subseteq \closure{}{A}$ for all  $A \subseteq X$
 \item $\closure{}{(A_1 \cup A_2)} = \closure{}{A_1} \cup
   \closure{}{A_2}$ for all $A_1,A_2\subseteq X$
 \end{enumerate}
\end{definition}

\noindent
The definition of a closure space goes back to Eduard Čech.  By the Kuratowski
definition, topological spaces coincide with the sub-class of closure spaces for
which also the \emph{idempotence} axiom $\closure{}{(\closure{}{A})} =
\closure{}{A}$ holds. The \emph{interior} operator is the dual of closure:
$\interior{}{A} = \overline{\closure{}{(\overline{A})}}$.
%
%
Given a relation $R \subseteq X \times X$, the function
$\closurename{R}: \pws{X} \to \pws{X}$ with \emph{$\closure{R}(A) = A
  \cup \ZET{x}{\exists \mkern1mu a \in A \colon a \, R \,x}$}
satisfies the axioms of Definition~\ref{def:ClosureSpaces}, thus
making $(X,\closurename{R})$ a closure space. We say that
$(X,\closurename{R})$ is \emph{based on}~$R$. It can be shown that the
sub-class of closure spaces that can be generated by a relation as
above coincides with the class of \emph{quasi-discrete} closure
spaces, i.e.\ closure spaces where every $x \in X$ has a minimal
neighbourhood or, equivalently, for each $A \subseteq X, \,
\closure{}{A} = \bigcup_{a \in A} \closure{}{\SET{a}}$. Thus discrete
structures, like graphs or Kripke structures can be seen
as quasi-discrete closure spaces. With reference to a quasi-discrete
closure space $(X,\closurename{R})$ based on a relation~$R$, we define
the abbreviations $\forwclosurename, \backclosurename : X \to \pws{X}$
by $\forwclosure{x} = \closure{R}(\SET{x})$ and $\backclosure{x} =
\closure{{R^{-1}}}(\SET{x})$.

\begin{definition}\label{def:ContinuousFunction}
  A \emph{continuous function} from closure space
  $(X_1,\closurename{1})$ to closure space $(X_2,\closurename{2})$ is
  a function $f : X_1 \to X_2$ such that, for all sets $A \subseteq
  X_1$, it holds that $f (\closure{1}{A}) \subseteq \closure{2}{(f\,A)}$.
\end{definition}

\noindent
We fix a set $AP$ of \emph{atomic predicates}.  A closure model
$\model=((X,\closurename{}), \pevalname)$ is a pair with
$(X,\closurename{})$ a closure space, and $\pevalname: AP \to \pws{X}$
the (atomic predicate) valuation (function). We define $\invpevalname:
\pws{X} \to \pws{AP}$ with \emph{$\invpeval{A} = \ZET{p \in
    AP}{\exists \mkern1mu a \in A \colon a \in \peval p}$} and we let
$\invpeval{}{x}$ abbreviate $\invpeval{}{\SET{x}}$. We say that a
closure model $\model = ((X,\closurename{}), \pevalname)$ is
quasi-discrete if $(X,\closurename{})$ is quasi-discrete.  A
quasi-discrete closure model $((X,\closurename{R}), \pevalname)$ is
\emph{finitely closed} if $\forwclosure{}{x}$ is finite for all $x \in
X$. Similarly, we say that $\model$ is \emph{finitely backward closed}
if $\backclosure{}{x}$ is finite for all $x \in X$.

In the following definition, $(\nats, \closurename{\succ})$ is the
quasi-discrete closure space of the natural numbers $\nats$ with the
{\em successor} relation $\succ$.

\begin{definition}
  \label{def:qspath}
  A quasi-discrete \emph{path} $\pi$ in $(X,\closurename{})$ is a
  continuous function from $(\nats, \closurename{\succ})$ to
  $(X,\closurename{})$.
\end{definition}


\noindent
We recall some basic definitions from coalgebra. See e.g.~\cite{Rut00}
for more details. For a functor $\calF : \Set \to \Set$ on the
category $\Set$ of sets and functions, a \emph{coalgebra}~$\calX$
of~$\calF$ is a set~$X$ together with a mapping $\alpha : X \to \calF
X$.  A \emph{homomorphism} between two $\calF$-coalgebras $\calX =
(X,\alpha)$ and~$\calY = (Y,\beta)$ is a function $f : X \to Y$ such
that $(\calF f) \compose \alpha = \beta \compose f$. An
$\calF$-coalgebra $(\Omega_{\calF},\omega_{\calF})$ is called
\emph{final}, if there exists, for every $\calF$-coalgebra
$\calX=(X,\alpha)$, a unique homomorphism
$\fmorph{\calF}{\calX}{\cdot} : (X,\alpha) \to
(\Omega_{\calF},\omega_{\calF})$. Two elements~$x_1, x_2$ of an
$\calF$-coalgebra~$\calX$ are called {\em behavioural equivalent} with
respect to~$\calF$ if $\fmorph{\calF}{\calX}{x_1} =
\fmorph{\calF}{\calX}{x_2}$, denoted $x_1 \beq{\calF}{\calX} x_2$.  In
the notation $\fmorph{\calF}{\calX}{\cdot}$ as well as
$\beq{\calF}{\calX}$, the indication of the specific coalgebra~$\calX$
will be omitted when clear from the context.
A functor~$\calF$ is called \emph{$\kappa$-accessible} if it preserves
$\kappa$-filtered colimits for some cardinal number~$\kappa$. However,
in the category~$\Set$, we have the following characterization of
accessibility: for every set~$X$ and any element $\xi \in \calF X$,
there exists a subset $Y \subseteq X$ with ${|} Y {|} < \kappa$, such
that $\xi \in \calF Y$. It holds that a functor has a final coalgebra
if it is $\kappa$-accessible for some cardinal
number~$\kappa$. See~\cite{AP04:tcs}.

\section{Bisimilarity for Quasi-discrete Closure Models}
\label{sec:CMBisimilarity}

In this section we give a back-and-forth definition of bisimilarity in
quasi-discrete closure spaces, and an alternative
characterization that makes explicit use of the underlying closure.

\begin{definition}
  \label{def:CMB} 
  Given quasi-discrete closure model $\model=((X,\closurename{R}),
  \pevalname)$ based on~$R$, a non-empty relation $B \subseteq X
  \times X$ is a \emph{bisimulation relation} if for all $x_1,x_2 \in
  X$ such that $(x_1,x_2) \in B$, all five conditions below hold:
  \begin{enumerate}
  \item $\invpeval{x_1}=\invpeval{x_2}$
  \item for all $x_1' \in \forwclosure{x_1}$ there exists $x_2' \in
    \forwclosure{x_2}$ such that $(x_1',x_2') \in B$
  \item for all $x_2' \in \forwclosure{x_2}$ there exists $x_1' \in
    \forwclosure{x_1}$ such that $(x_1',x_2') \in B$
  \item for all $x_1' \in \backclosure{x_1}$ there exists $x_2' \in
    \backclosure{}{x_2}$ such that $(x_1',x_2') \in B$ 
  \item for all $x_2' \in \backclosure{}{x_2}$ there exists $x_1' \in
    \backclosure{}{x_1}$ such that $(x_1',x_2') \in B.$
  \end{enumerate}
  We say that $x_1$ and~$x_2$ are \emph{bisimilar} (written $x_1
  \cmbisim^{\model} x_2$) if there exists a bisimulation relation~$B$
  for~$X$ such that $(x_1,x_2) \in B$.
\end{definition}

\noindent
In the sequel, for the sake of notational simplicity, we will write
$\cmbisim$ instead of $\cmbisim^{\model}$ whenever this does not cause
confusion.

  \begin{figure}[h]
    \centering
    \centerline{\resizebox{0.7in}{!}
      {\includegraphics{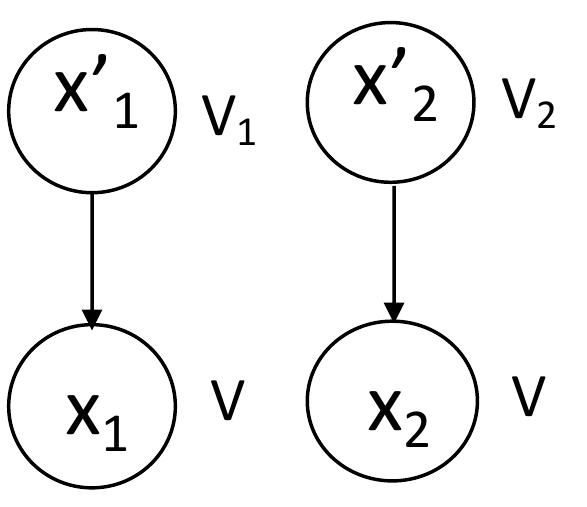}}}
    \caption{A model}\label{fig:RemOneCounterex}
  \end{figure}

\begin{remark}
  \label{rem:BaFbis}
  Bisimilarity for quasi-discrete closure models is reminiscent to strong
  back-and-forth bisimilarity~\cite{De+90} and is stronger than a
  spatial version of standard bisimilarity that would include only
  items 1 to~3 above.

  In order to illustrate this, consider the model~$\model$ of
  Figure~\ref{fig:RemOneCounterex}, where $\model =
  ((X,\closurename{}),\peval)$ with $X = \SET{x_1,x_2,x'_1, x'_2}$,
  the closure operator~$\closurename{}$ defined by
  $\closure{}{x_j}=\SET{x_j}$ and $\closure{}{x'_j}=\SET{x_j,x'_j }$
  for $j = 1,2$ and valuation~$\peval$ such that $\invpeval x'_1 \neq
  \invpeval x'_2$ but $\invpeval x_1 = \invpeval x_2$. Furthermore,
  let $B$ be the reflexive and symmetric closure of
  $\SET{(x_1,x_2)}$. Then $B$ would be a standard bisimulation showing
  the points $x_1$ and~$x_2$ bisimilar, since only items 1 to 3 of
  Definition~\ref{def:CMB} are considered.  However, we have $x_1
  \not\cmbisim x_2$ according to Definition~\ref{def:CMB}, because of
  items 4 and~5. As we will see in Section~\ref{sec:SLCSEq}, this is
  directly related to the semantics of logic operator~$\lrcdname$.
\end{remark}

\noindent
Given a quasi-discrete closure model $\model=((X,\closurename{R}),
\pevalname)$, it is easy to see that $\cmbisim^{\model}$ is an
equivalence relation and it is itself a bisimulation relation, namely
the union of all bisimulation relations, i.e.\ the largest (coarsest)
bisimulation relation.

In the following, we provide an alternative, equivalent, definition of
bisimilarity, which will prove useful for the developments in
Section~\ref{sec:CMCoalgebraically} and Section~\ref{sec:SLCSEq}.


\begin{definition}
  \label{def:altCMB}
  Given quasi-discrete closure model $\model=((X,\closurename{R}),
  \pevalname)$ based on $R$, a non-empty equivalence relation $B
  \subseteq X \times X$ is a \emph{bisimulation relation} if for all
  $x_1,x_2 \in X$ such that $(x_1,x_2) \in B$ it holds that
  \begin{enumerate}
  \item $\invpeval{x_1}=\invpeval{x_2}$, and
    \smallskip
  \item for all equivalence classes $C \in X/B$ both the following conditions hold:
    \begin{enumerate}
    \item $(\forwclosure{x_1}) \cap C \not= \emptyset$ iff
      $(\forwclosure{x_2}) \cap C \not= \emptyset$
    \item $(\backclosure{x_1}) \cap C \not= \emptyset$ iff
      $(\backclosure{}{x_2}) \cap C \not= \emptyset.$
    \end{enumerate}
  \end{enumerate}
  We say that $x_1$ and~$x_2$ are \emph{bisimilar}, notation $x_1
  \altcmbisim^{\model} x_2$) if there exists a bisimulation
  relation~$B$ such that $(x_1,x_2) \in B$.
\end{definition}

\noindent
In the following, for the sake of notational simplicity, we will write
$\altcmbisim$ instead of $\altcmbisim^{\model}$ whenever this does not
cause confusion.

Given a quasi-discrete closure model $\model = ((X,\closurename{R}),
\pevalname)$, also for~$\altcmbisim^{\model}$ it is easy to see that
it is an equivalence relation and that it is in fact the largest
(coarsest) bisimulation relation. In addition, it is straightforward
to show that $\cmbisim^{\model}$ is a bisimulation relation according
to Definition~\ref{def:altCMB}. So, ${\cmbisim^{\model}} \, \subseteq
\, {\altcmbisim^{\model}}$. Moreover, it also holds that
$\altcmbisim^{\model}$ is a bisimulation relation according to
Definition~\ref{def:CMB} and therefore ${\altcmbisim^{\model}} \,
\subseteq \, {\cmbisim^{\model}}$. Consequently the two equivalences
coincide.

\begin{figure}
\centerline{\resizebox{4cm}{!}{\includegraphics{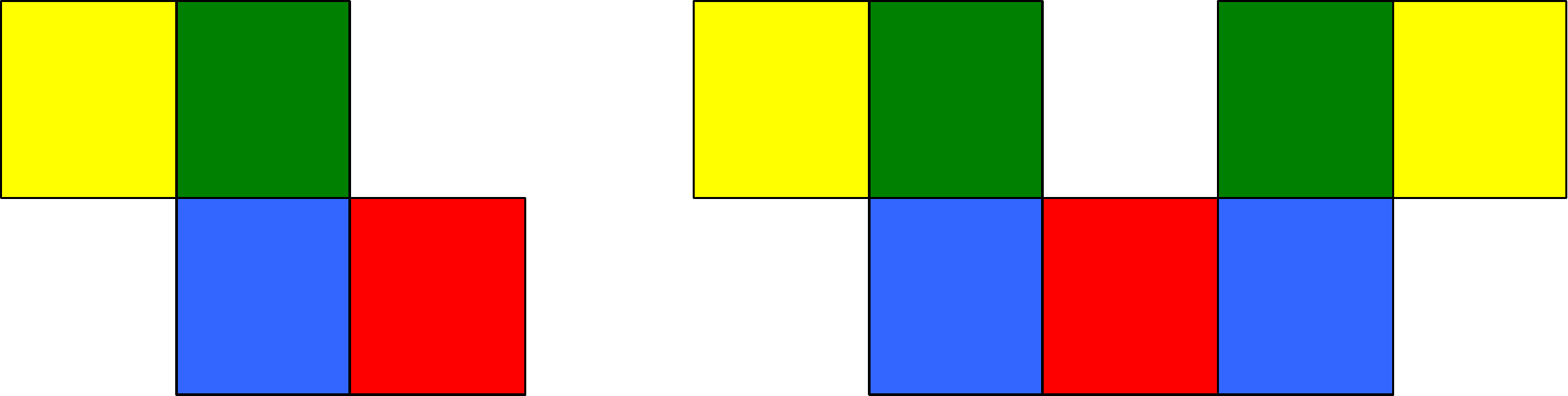}}}
\caption{Model $\model_1$.}\label{fig:ExaUno}
\end{figure}

\begin{example}
  \label{exa:ExaUno}
  In Figure~\ref{fig:ExaUno} a quasi-discrete closure model $\model_1
  = ((X_1,\closurename{R_1}),\peval_{\!1})$ is shown where $X_1$
  contains 11 elements, each represented by a coloured square box,
  which we call a \emph{cell}. The relation~$R_1$ is the so-called
  \emph{orthogonal adjacency} relation~\cite{RLG13}, i.e.\ the
  reflexive and symmetric relation such that two cells are related iff
  they share an edge. Note, $X_1$ is not path-connected.
  The set $AP$ of atomic predicates is the set $\SET{\mathtt{red},
    \mathtt{blue}, \mathtt{green}, \mathtt{yellow}}$ and
  $\peval_{\!1}$ associates each predicate (i.e.\ colour) to the set
  of cells of that specific colour\footnote{Spaces like $\model_1$ can
    be thought of as digital images where each cell represents a
    distinct pixel and the background of the image has been filtered
    out.}; in this example, each cell satisfies
  exactly one atomic proposition. The two red cells are
  $\cmbisim^{\model_1\!\!}-$bisimilar. In order to see this, 
  consider the relation~$B_1$ which is the minimal reflexive and
  symmetric binary relation on~$X_1$ such that
  \begin{itemize}
  \item the two red points are related;
  \item the blue (green, yellow, respectively) point of the left-hand
    component is related to each blue (green, yellow, respectively)
    point of the right-hand component.
  \end{itemize}
It is easy to see that $B_1$ satisfies the conditions of
Definition~\ref{def:CMB}. For instance, the (forward and backward) closure of
the left-hand side red cell contains only the cell itself and the blue adjacent
one and, for each such cell, there is one in the right-hand side of the same
colour and related to the former by~$B_1$. Similarly, the closure of the
right-hand side red cell contains only the cell itself and the two blue adjacent
ones and, for each such cell, there is one in the left-hand side with the same
colour and related to the former by~$B_1$. Similar reasoning applies to all
other pairs of~$B_1$. Finally, the two red cell are related by bisimulation
relation~ $B_1$ and so they are bisimilar. 
\end{example}

\begin{figure}
\centering
\subfloat[][]
{
\includegraphics[height=0.7cm]{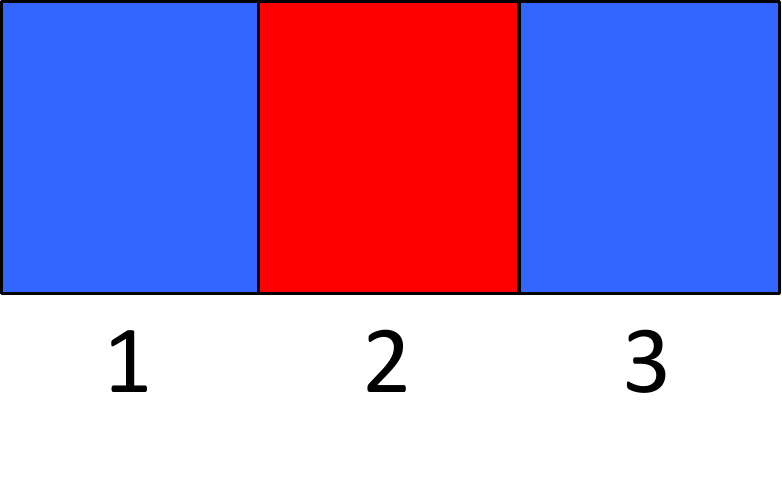}
\label{fig:MatOneTree}
}\hspace{0.1in}
\centering
\subfloat[][]
{
\includegraphics[height=1.6cm]{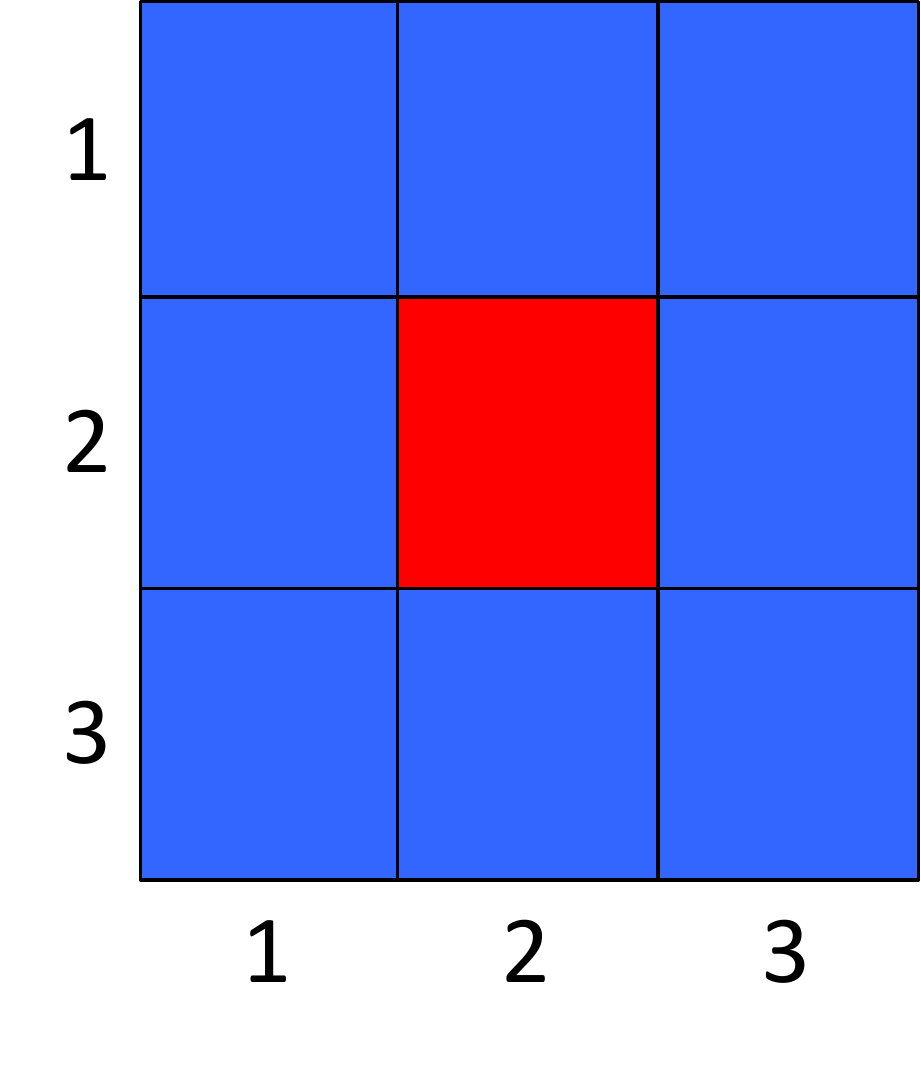}
\label{fig:MatThreeThree}
}\hspace{0.1in}
\centering
\subfloat[][]
{
\includegraphics[height=2cm]{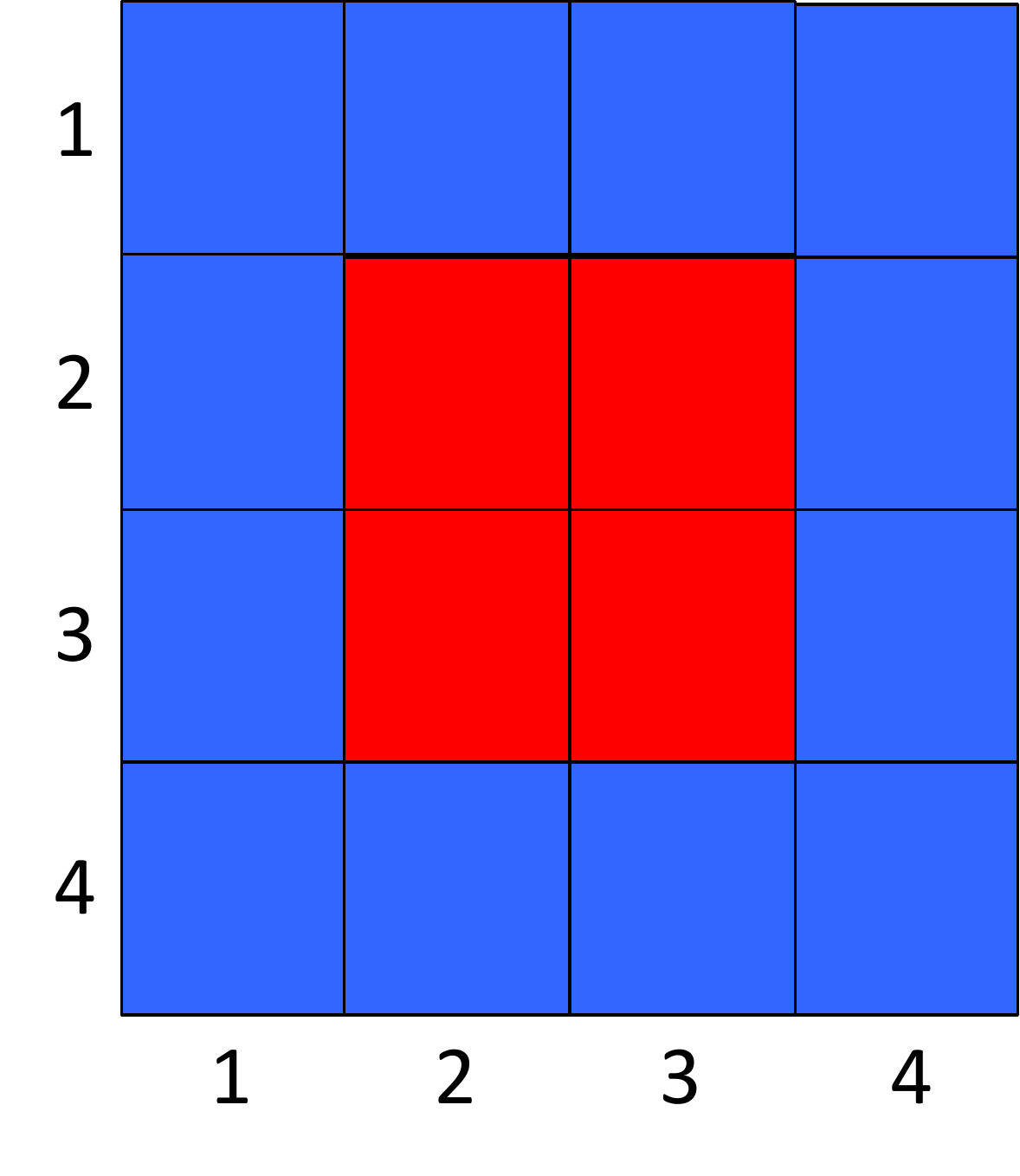}
\label{fig:MatFourFour}
}\hspace{0.1in}
\subfloat[][]
{
\includegraphics[height=3cm]{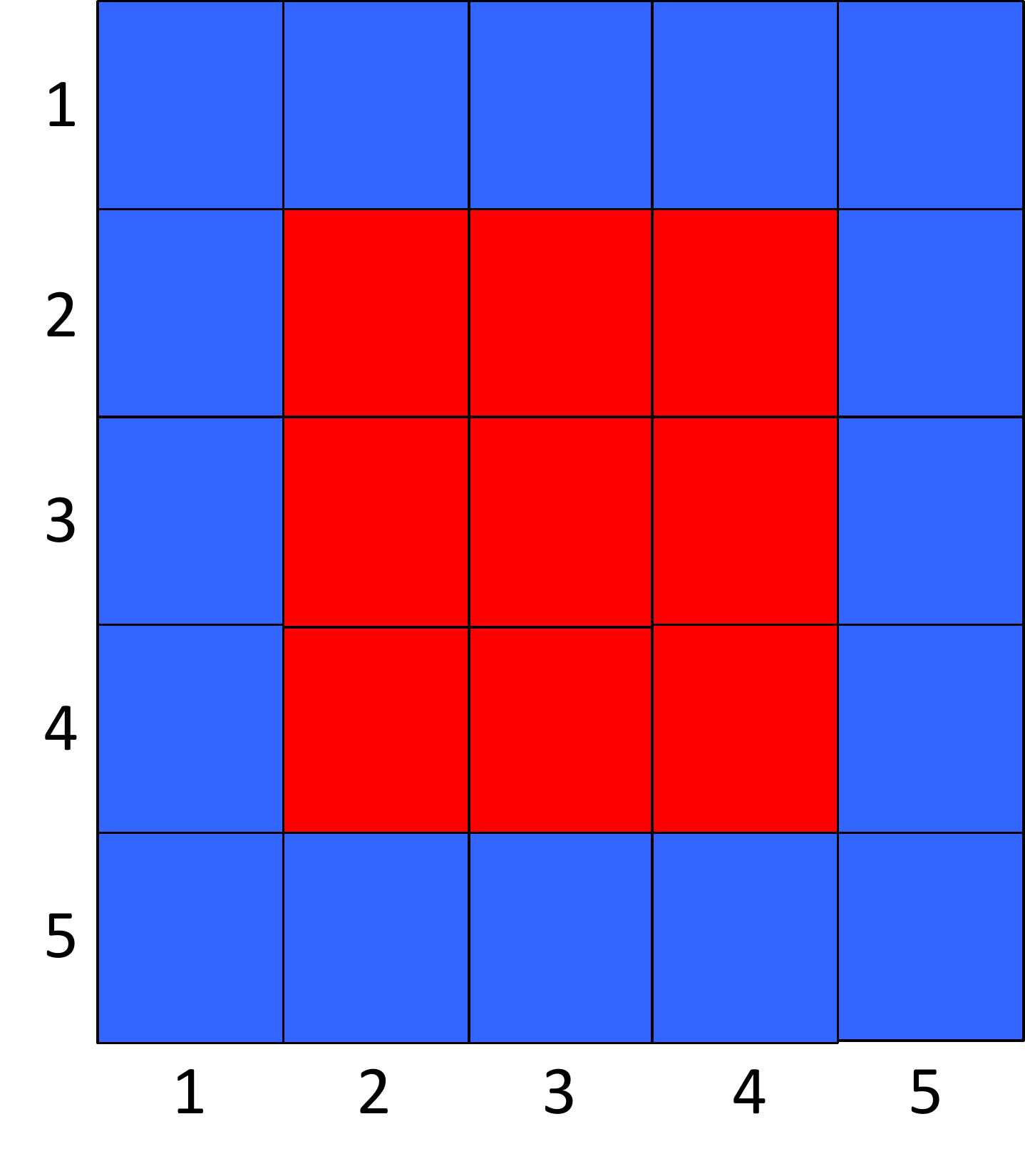}
\label{fig:MatFiveFive}
}
\caption{Models $\model_a$ (a), $\model_b$ (b),
$\model_c$ (c), and $\model_d$ (d).}\label{fig:ExaDue}
\end{figure}

\begin{example}
  \label{exa:ExaDue}
  For the sake of exposition, in this example, we will use matrix
  notation for referring to cells in Figure~\ref{fig:ExaDue}. Thus,
  $a_{2}$ and~$b_{22}$ refer to the red cells in
  Figure~\ref{fig:ExaDue}(a) and~\ref{fig:ExaDue}(b), respectively,
  whereas the four 
  red cells in Figure~\ref{fig:ExaDue}(c) are referred to as $c_{22},
  c_{23}, c_{32}, c_{33}$ and so on. We consider four models $\model_j
  = ((X_j,\closurename{R_j}),\peval_{\!j})$, for $j = a,b,c,d$.
  with
%
$X_a = \ZET{a_i}{1 \leqslant i \leqslant 3}$,
$X_b = \ZET{b_{ij}}{1 \leqslant i,j \leqslant 3}$,
$X_c = \ZET{c_{ij}}{1 \leqslant i,j \leqslant 4}$,
$X_d = \ZET{d_{ij}}{1 \leqslant i,j \leqslant 5}$.
Differently from Example~\ref{exa:ExaUno}, in each of the four models
$\model_j$, we assume an \emph{orthodiagonal adjacency
  relation}~$R_j$, namely, the reflexive and symmetric relation such
that two cells in $\model_j$ are related iff they share an edge \emph{or} a vertex~\cite{RLG13}. For instance, $\SET{(c_{11},c_{12}),
  (c_{11},c_{22})} \subseteq R_c$. This choice of the adjacency
relation simplifies the description of the example\footnote{In fact,
  as we will see, using the orthodiagonal relation makes the corner
  cells of $\model_j$, for $j\in \SET{b,c,d}$, bisimilar to all other
  blue cells of the model, which would not be the case, had we used the orthogonal relation.}.  
The set~$AP$ of atomic propositions is the set $\SET{\mathtt{red},
  \mathtt{blue}}$, with
$\peval_{\!a} \,\mathtt{red} = \SET{a_2}$,
$\peval_{\!b} \,\mathtt{red} = \SET{b_{22}}$,
$\peval_{\!c} \,\mathtt{red} = \bigcup_{i,j=2,2}^{3,3}\SET{c_{ij}}$, and
$\peval_{\!d} \,\mathtt{red} = \bigcup_{i,j=2,2}^{4,4}\SET{d_{ij}}$, with
$\peval_{\!j} \,\mathtt{blue} = X_j \setminus (\peval_{\!j} \,\mathtt{red})$ for $j \in \SET{a,b,c,d}$.
We use the shorthand 
$\gamma \mkern1mu x = \pevalname^{\mbox{\,\scriptsize --1}}_j
\mkern1mu x$ for $x \in X_j$ and $j = a,b,c,d$.
Note, since $R_j$ is reflexive, $\gamma \mkern1mu x$ includes the
point~$x$ itself.
%
%
Moreover, the relations~$B_j$ are defined as $B_j = \ZET{(x,y) \in
  X_j^2}{\gamma \mkern1mu x = \gamma \mkern1mu y}$, for $j \in
\SET{a,b,c}$.

Clearly $a_1 \altcmbisim^{\model_a} a_3$ since $(a_1,a_3) \in B_a$,
which is a bisimulation. Take for instance $\closure{R_a}{\!a_1} =
\SET{a_1,a_2}$ and note that $(a_1,a_3) \in B_a$ with $a_3 \in
\closure{R_a}{\!a_3}$ and, similarly, $(a_2,a_2) \in B_a$ with $a_2
\in \closure{R_a}{\!a_3}$. The reasoning for the other cells of~$X_a$
is similar.

It is also easy to see that for all $x,y \in X_b$ we have $x
\altcmbisim^{\model_b} y$ iff $\gamma \mkern1mu x = \gamma \mkern1mu
y$; for instance, for each element~$z_1$ of~$\closure{R_b}{b_{11}} =
\SET{b_{11}, b_{12}, b_{21},b_{22}}$ there exists an element~$z_2$
of~$\closure{R_b}{b_{23}} = \SET{b_{12},b_{13}, b_{22},
  b_{23},b_{32},b_{33}}$ such that $(z_1,z_2) \in B_b$, which is a
bisimulation containing also $(b_{11}, b_{23})$.

Let us now consider the model $\model_{ab}=((X_a \cup X_b,
\closurename{(R_a \cup R_b)}),\peval_{\!ab})$ where $\peval_{\!ab} \,p
= {\peval_{\!a} \mkern1mu p} \mkern1mu \cup \mkern1mu {\peval_{\!b}
  \mkern1mu p}$. We define the relation $B_{ab} \subseteq (X_a \cup
X_b)$ by $B_{ab} =\ZET{(x,y) \in X_a \cup X_b}{\gamma \mkern1mu x =
  \gamma \mkern1mu y}$. The reader is invited to prove that $a_2
\altcmbisim^{\model_{ab}} b_{22}$.
Similar reasoning shows that $B_c$ is a bisimulation for~$\model_c$
and that the quotient $X_{c/\!\!\altcmbisim^{\model_c}} = \SET{\ZET{x \in
    X_c}{\gamma \mkern1mu x = \mathtt{red}},\ZET{x \in X_c}{\gamma
    \mkern1mu x = \mathtt{blue}}}$ is a two-element set.
Also, for model $\model_{bc} = (((X_b \cup X_c), \closurename{R_b \cup
  R_c}), \peval_{\!bc})$, with $\peval_{\!bc} \mkern1mu p =
\peval_{\!b} \mkern1mu p \cup \peval_{\!c} \mkern1mu p$, we have
$c_{ij} \altcmbisim^{\model_{bc}} b_{22}$ for $i,j \in \SET{2,3}$, due
to the existence of the bisimulation $B_{bc} = \ZET{(x,y) \in X_b \cup
  X_c}{\gamma \mkern1mu x = \gamma \mkern1mu y}$.

In general, if we take as a model the union~$\model_{abc}$ of $\model_a$,
$\model_b$ and~$\model_c$, i.e.\ $\model_{abc}= (((X_a \cup X_b \cup X_c)),
\closurename{R_a \cup R_b \cup R_c}, \peval_{\!abc})$, with $\peval_{\!abc}
\mkern1mu p = \peval_{\!a} \mkern1mu p \cup \peval_{\!b} \mkern1mu p \cup
\peval_{\!c} \, p$, we can easily see that all blue cells are bisimilar to one
another and all red cells are bisimilar to one another (and no blue cell is
bisimilar to any red one). In fact, as hinted above, there exists a minimal
model with just two cells, one red and one blue, that are adjacent (in this
case, the orthogonal and the orthodiagonal relations coincide); all red (blue,
respectively) cells of $\model_{abc}$ are equivalent to the red (blue,
respectively) cell of the minimal model.

Finally, let us consider cell~$d_{33}$ and, say, $d_{22}$ in
model~$\model_d$.  It is easy to see that $d_{22}
\not\altcmbisim^{\model_d}d_{33}$. In fact, $\closure{R_d}{d_{33}} =
\ZET{x\in X_d}{\gamma \mkern1mu x = \mathtt{red}}$ and there is $y \in
\closure{R_d}{d_{22}}$ such that $\gamma \,y = \mathtt{blue}$. Thus, any
bisimulation~$B$ should include $(y,z)$ with $z \in
\closure{R_d}{d_{33}}$, because of the transfer condition 2 of
Definition~\ref{def:CMB} with respect to $(d_{22},d_{33})$, but this
is impossible because $(y,z) \in B$ would violate condition 3 of
Definition~\ref{def:CMB} because $\gamma \mkern1mu y =
\mathtt{blue} \neq \mathtt{red} = \gamma \mkern1mu z$.
\end{example}

%
%
%

\section{Quasi-discrete Closure Models Coalgebraically}
\label{sec:CMCoalgebraically}


Let the quasi-discrete closure model $\model=((X,\closurename{}), \peval)$ be
finitely closed and finitely backward closed (but not necessarily finite).  We
can represent~$\model$ as a pair $(X,\eta_{\model})$, with the function
$\eta_{\model} : X \to (\fpws{AP}) \times (\fpws{X}) \times (\fpws{X})$ such
that $\eta_{\model} x = (\invpeval{x}, \forwclosure{x}, \backclosure{x})$. In
the sequel we will write~$\eta$ instead of~$\eta_{\model}$, when this does not
cause confusion. Note that not all pairs~$(X,\eta)$ represent closure models,
but only those for which $(\eta \,x)_3 = \ZET{x' \in X}{x \in (\eta x')_2}$ and
$(\eta \,x)_2 =\ZET{x' \in X}{x \in (\eta x')_3}$.

\begin{example}
  Model~$\model_a$ of Figure~\ref{fig:ExaDue} corresponds to the pair
  $(\SET{a_1,a_2,a_3},\eta_a)$ with 
  $\eta_a \, a_1 =(\SET{\mathtt{blue}}, \SET{a_1,a_2}, \SET{a_1,a_2})$,
  $\eta_a \, a_2 =(\SET{\mathtt{red}}, \SET{a_1,a_2,a_3},
  \SET{a_1,a_2,a_3})$, and 
  $\eta_a \, a_3 =(\SET{\mathtt{blue}}, \SET{a_2,a_3},
  \SET{a_2,a_3})$. The other models of the figure can be represented
  similarly. 
\end{example}

\noindent 
The following is a reformulation of bisimilarity in terms of the
function~$\eta_{\model}$. 

\begin{definition}\label{def:ETB} 
  An equivalence relation $B \subseteq X \times X$ is an
  $\eta$-bisimulation if $(x_1,x_2) \in B$ implies that the following
  holds:
  \begin{enumerate}
  \item $(\eta \, x_1)_1 = (\eta \, x_2)_1$, and
  \item for all $C \in X/B$ it holds that
    \begin{enumerate}
    \item $(\eta \, x_1)_2 \cap C \not=\emptyset \mbox{ iff } (\eta \,
      x_2)_2 \cap C \neq \emptyset$, and
    \item $(\eta \, x_1)_3 \cap C \neq \emptyset \mbox{ iff } (\eta \,
      x_2)_3 \cap C \neq \emptyset.$
    \end{enumerate}
  \end{enumerate}
  We say that $x_1$ and~$x_2$ are $\eta$-\emph{bisimilar}, notation $x_1
  \ebisim x_2$, if there exists an $\eta$-bisimulation relation~$B$ such
  that $(x_1,x_2) \in B$.
\end{definition}

\begin{example}
  With reference to model~$\model_a$ of Figure~\ref{fig:ExaDue} and
  $(\SET{a_1,a_2,a_3},\eta_a)$ above, it is easy to see that $a_1$
  and~$a_3$ are $\eta_a$-{bisimilar}.
\end{example}

\noindent
The following lemma follows directly from Definition~\ref{def:altCMB}
and Definition~\ref{def:ETB}. 

\begin{lemma}
  \label{lemma:ebisISaltcmbisim}
  $\ebisim$ coincides with $\altcmbisim$.
\end{lemma}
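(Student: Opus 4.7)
The plan is to show the two relations coincide by proving that a relation $B \subseteq X \times X$ is an $\altcm$-bisimulation (in the sense of Definition~\ref{def:altCMB}) if and only if it is an $\eta$-bisimulation (in the sense of Definition~\ref{def:ETB}). Since $\altcmbisim$ is the union of all $\altcm$-bisimulations and $\ebisim$ is the union of all $\eta$-bisimulations, coincidence of these two classes of relations immediately yields ${\ebisim} = {\altcmbisim}$.

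The key observation is that the representation map $\eta : X \to (\fpws{AP}) \times (\fpws{X}) \times (\fpws{X})$ is, by construction, defined by
\[
\eta \, x \;=\; (\invpeval{x},\; \forwclosure{x},\; \backclosure{x}),
\]
so that $(\eta \, x)_1 = \invpeval{x}$, $(\eta \, x)_2 = \forwclosure{x}$, and $(\eta \, x)_3 = \backclosure{x}$. Substituting these three identities into the three clauses of Definition~\ref{def:ETB} reproduces verbatim clauses~1, 2(a), and 2(b) of Definition~\ref{def:altCMB}. Hence, for any equivalence relation~$B$ and any pair $(x_1,x_2) \in B$, the transfer conditions of the two definitions are literally the same predicate on~$B$.

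First I would fix an arbitrary equivalence relation $B$ on~$X$. Then, for the forward direction, assume $B$ is an $\altcm$-bisimulation and pick $(x_1,x_2) \in B$; invoking the identities above, clauses~1 and~2 of Definition~\ref{def:altCMB} translate directly into clauses~1 and~2 of Definition~\ref{def:ETB}, showing $B$ is an $\eta$-bisimulation. The converse direction is symmetric: the same identities rewrite the $\eta$-bisimulation conditions back into the $\altcm$-bisimulation conditions.

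Since the two notions of bisimulation relation are literally the same predicate, their largest instances agree, i.e.\ ${\ebisim} = {\altcmbisim}$. There is no real obstacle here: the entire content of the lemma is the unfolding of the definition of~$\eta$, which is why the statement is flagged as following \emph{directly} from the definitions.
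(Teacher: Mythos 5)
Your proposal is correct and matches the paper's own treatment: the paper gives no explicit proof, stating only that the lemma ``follows directly from Definition~\ref{def:altCMB} and Definition~\ref{def:ETB}'', which is precisely the unfolding of $\eta \, x = (\invpeval{x}, \forwclosure{x}, \backclosure{x})$ that you carry out. The only (harmless) cosmetic difference is that Definition~\ref{def:altCMB} asks for a non-empty equivalence relation while Definition~\ref{def:ETB} does not, but any equivalence relation on the non-empty carrier~$X$ contains the diagonal, so the two classes of relations really do coincide as you claim.
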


\begin{definition}
  \label{def:simplerfunctor}
  The functor $\calT: \Set \to \Set$ assigns to a set~$X$ the product
  set $\fpws{AP} \times \fpws{X} \times \fpws{X}$ and to a mapping $f
  : X \to Y$ the mapping $\calT \mkern1mu f : (\fpws{AP} \to \fpws{X}
  \to \fpws{X}) \to (\fpws{AP} \times \fpws{Y} \times \fpws{Y})$
  where, for all $v \in \fpws{AP}$ and $z,z' \in \fpws{X}$, $(\calT f)
  \mkern1mu v \mkern1mu z \mkern1mu z' = (v,(f z),(f z'))$.
\end{definition}

\noindent
Clearly, the model~$\model$, represented as $(X,\eta)$, can be interpreted
as a coalgebra  of functor~$\calT$.

\begin{lemma}
  \label{lemma:calTfinalco}
  The functor $\calT$ has a final coalgebra.
\end{lemma}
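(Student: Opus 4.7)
The plan is to apply directly the criterion quoted in the Preliminaries: a $\Set$-endofunctor has a final coalgebra as soon as it is $\kappa$-accessible for some cardinal $\kappa$, and in $\Set$ this reduces to showing that every element $\xi \in \calF X$ lies in $\calF Y$ for some $Y \subseteq X$ with $|Y| < \kappa$. For $\calT$, I would establish $\aleph_0$-accessibility, i.e.\ finitariness.

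I first fix an arbitrary set $X$ and an element $\xi \in \calT X = \fpws{AP} \times \fpws{X} \times \fpws{X}$. By construction $\xi$ has the form $(v, z, z')$ with $v \in \fpws{AP}$ and, crucially, with $z, z' \in \fpws{X}$ both finite subsets of $X$. The natural witness is then $Y = z \cup z' \subseteq X$, which is finite as a union of two finite sets, so $|Y| < \aleph_0$. Writing $\iota : Y \hookrightarrow X$ for the inclusion and exploiting the explicit description of $\calT$ on arrows given in Definition~\ref{def:simplerfunctor}, the element $\xi' = (v, z, z') \in \calT Y$ is well defined (since $z, z' \subseteq Y$) and is mapped by $\calT \iota$ to $(v, \iota \, z, \iota \, z') = (v, z, z') = \xi$. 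Hence $\xi$ lies in $\calT Y$ for a $Y \subseteq X$ of cardinality strictly below $\aleph_0$, as required by the characterization of accessibility recalled in the Preliminaries.

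Consequently $\calT$ is $\aleph_0$-accessible, and the general result of Ad\'amek and Porst cited at the end of the Preliminaries yields the existence of a final $\calT$-coalgebra. I anticipate no real obstacle in executing this plan. The only ingredient that might seem delicate is the presence of the constant factor $\fpws{AP}$, which does not depend on~$X$; but this is harmless, since the accessibility condition only constrains the $X$-dependent data, and the constant component is simply carried along unchanged by the action of $\calT$ on arrows.
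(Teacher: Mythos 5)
Your proof is correct and follows essentially the same strategy as the paper: establish that $\calT$ is $\omega$-accessible and then invoke the cited result that accessible functors admit final coalgebras. The only difference is that the paper obtains accessibility abstractly, from the facts that constants, finite products, and $\fpws{}$ are $\omega$-accessible and that accessibility is closed under composition, whereas you verify the element-wise characterization directly via the witness $Y = z \cup z'$; both arguments are sound.
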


\begin{proof}
  Constants, finite products, and the finite powerset are
  $\omega$-accessible functors. The class of $\kappa$-accessible
  functors for any~$\kappa$ is closed with respect to composition, and
  $\kappa$-accessible functors have final coalgebras. \pfend
\end{proof}

\noindent
We recall that two elements~$x_1, x_2$ of an $\calT$-coalgebra~$\calX$
are behavioural equivalent if $\fmorph{\calT}{\calX}{x_1} =
\fmorph{\calT}{\calX}{x_2}$, denoted $x_1 \beq{\calT}{\calX} x_2$,
where $\fmorph{\calT}{\calX}{\cdot}$ is the unique morphism
from~$\calX$ to the final coalgebra of functor~$\calT$.

The following theorem shows that behavioural equivalence and $\eta$-bisimilarity coincide. The proof follows the same pattern as that of Theorem 4.3 in \cite{LMV15a}.

\begin{theorem}
  \label{theo:etabisbeh}
  \mbox{%
  Behavioural equivalence and $\eta$-bisimilarity coincide,
  i.e.\ $\beq{\calT}{} \, = \, \ebisim$.}
\end{theorem}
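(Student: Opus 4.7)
The plan is to establish the two inclusions $\ebisim \, \subseteq \, \beq{\calT}{}$ and $\beq{\calT}{} \, \subseteq \, \ebisim$ by standard coalgebraic techniques. The crucial observation is that, by Definition~\ref{def:simplerfunctor}, for any $f : X \to Y$ the action $(\calT f)(v,z,z') = (v, f\,z, f\,z')$ leaves the first component untouched and applies $f$ pointwise (via $\pws f$) to the second and third components.

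For the inclusion $\ebisim \, \subseteq \, \beq{\calT}{}$, I would equip the quotient $X/\!\!\ebisim$ with a coalgebra structure $\bar\eta : X/\!\!\ebisim \to \calT(X/\!\!\ebisim)$ and show that the canonical surjection $q : X \to X/\!\!\ebisim$ is a $\calT$-coalgebra homomorphism from $(X,\eta)$ to $(X/\!\!\ebisim, \bar\eta)$. Concretely, I set $\bar\eta[x] = ((\eta\,x)_1, q\,(\eta\,x)_2, q\,(\eta\,x)_3)$. Well-definedness of $\bar\eta$ is the point where Definition~\ref{def:ETB} is used: item~1 ensures that the first component is independent of the representative, while item~2(a) guarantees that for every class $C \in X/\!\!\ebisim$ we have $C \in q\,(\eta\,x)_2$ iff $(\eta\,x)_2 \cap C \neq \emptyset$ iff $(\eta\,x')_2 \cap C \neq \emptyset$ iff $C \in q\,(\eta\,x')_2$, so that $q\,(\eta\,x)_2 = q\,(\eta\,x')_2$ whenever $x \ebisim x'$; item~2(b) handles the third component analogously. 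The commutation $(\calT q) \compose \eta = \bar\eta \compose q$ is then immediate from the definitions. By uniqueness of the morphism into the final $\calT$-coalgebra (which exists by Lemma~\ref{lemma:calTfinalco}), the morphism $\fmorph{\calT}{X}{\cdot}$ factors through $q$, so $x_1 \ebisim x_2$ implies $q\,x_1 = q\,x_2$, hence $\fmorph{\calT}{X}{x_1} = \fmorph{\calT}{X}{x_2}$.

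For the converse inclusion $\beq{\calT}{} \, \subseteq \, \ebisim$, I would show that the kernel $K = \ZET{(x_1,x_2)}{\fmorph{\calT}{X}{x_1} = \fmorph{\calT}{X}{x_2}}$ is itself an $\eta$-bisimulation. Writing $f$ for $\fmorph{\calT}{X}{\cdot}$ and exploiting the homomorphism property $(\calT f) \compose \eta = \omega_{\calT} \compose f$, if $(x_1,x_2) \in K$ then $(\calT f)(\eta\,x_1) = (\calT f)(\eta\,x_2)$. Projecting on each factor yields $(\eta\,x_1)_1 = (\eta\,x_2)_1$, establishing condition~1, together with $f\,(\eta\,x_1)_2 = f\,(\eta\,x_2)_2$ and $f\,(\eta\,x_1)_3 = f\,(\eta\,x_2)_3$. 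Since every equivalence class $C \in X/K$ has the form $f^{-1}\SET{c}$ for some $c \in \Omega_{\calT}$, we obtain $(\eta\,x_i)_2 \cap C \neq \emptyset$ iff $c \in f\,(\eta\,x_i)_2$, and similarly for the third component, which gives conditions~2(a) and~2(b).

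I expect the main obstacle to be the careful bookkeeping in the well-definedness step of the forward direction: showing that the set-of-classes $q\,(\eta\,x)_2$ does not depend on the choice of representative~$x$ requires translating membership of equivalence classes into nonempty intersections, which is exactly the formulation chosen in Definition~\ref{def:altCMB} and Definition~\ref{def:ETB}. Everything else then unfolds from the explicit description of $\calT f$ and the finality argument, so the remaining verifications are routine and largely mirror the pattern of Theorem~4.3 in \cite{LMV15a}.
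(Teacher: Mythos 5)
Your proposal is correct and follows essentially the same route as the paper: the forward direction quotients by the bisimulation to obtain a $\calT$-coalgebra through which the final morphism factors, and the converse shows the kernel of the final morphism is an $\eta$-bisimulation by observing that its classes are preimages of singletons (the paper isolates exactly this observation as Lemma~\ref{lemma:aux}). The only cosmetic difference is that you quotient by $\ebisim$ itself, tacitly using that it is an $\eta$-bisimulation, whereas the paper works with an arbitrary $\eta$-bisimulation $B$ containing the given pair.
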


\begin{proof}
  Let $x_1, x_2 \in X$.  We
  first prove that ${x_1 \ebisim x_2}$ implies $x_1 \beq{\calT}{}
  x_2$.  So, assume ${x_1 \ebisim x_2}$.  Let $B \subseteq X \times X$
  be an $\eta$-bisimulation with $(x_1,x_2)\in B$ and recall that
  $(X,\eta)$ is a $\calT$-coalgebra. We turn the collection of
  equivalence classes~$X/B$ into a $\calT$-coalgebra
  $\model/B = (X/B,\varrho_B)$ where, for $s \in X$
  \begin{displaymath}
    \varrho_B \, [s]_B  = (
    (\eta \mkern1mu s)_1,
    \ZET{C\in X/B}{(\eta \mkern1mu s)_2 \cap C \neq \emptyset},
    \ZET{C\in X/B}{(\eta \mkern1mu s)_3 \cap C \neq \emptyset}) 
  \end{displaymath}
  This is well-defined since $B$ is an $\eta$-bisimulation: if
  $(s,s')\in B$ then 
  we have
  \begin{displaymath}
    \def\arraystretch{1.3}
    \begin{array}{rcll}
      \varrho_B \mkern1mu [s]_B 
      & = &
      ((\eta \mkern1mu s)_1, 
      \ZET{C\in X/B}{(\eta \mkern1mu s)_2 \cap C \neq \emptyset},
      \ZET{C\in X/B}{(\eta \mkern1mu s)_3 \cap C \neq \emptyset})
      \\ & & \qquad \text{(def.\ of $\varrho_B$)} 
      \\
      & = &
      ((\eta \mkern1mu s')_1,
      \ZET{C\in X/B}{(\eta \mkern1mu s')_2 \cap C \neq \emptyset},
      \ZET{C\in X/B}{(\eta \mkern1mu  s')_3 \cap C \neq \emptyset})
      \\ & & \qquad \text{($(s,s')\in B$; $B$ is an $\eta$-bisimulation)} 
      \\
      & = & \varrho_B \, [s']_B
      \qquad \text{(def.\ of $\varrho_B$).}
    \end{array}
    \def\arraystretch{1.0}
  \end{displaymath}
  The canonical mapping $\varepsilon_B : X \to X/B$ is a
  $\calT$-homomorphism, i.e.\ $(\calT \,\varepsilon_B) \compose \eta =
  \varrho_B \compose \varepsilon_B$ as can be verified as follows.
For  $s \in X$, we have
\begin{displaymath}
    \def\arraystretch{1.3}
    \begin{array}{rcll}
      (\calT\, \varepsilon_B)(\eta \mkern1mu s) 
      & = & ((
      \eta \mkern1mu s)_1, 
      \varepsilon_B \mkern1mu (\eta \mkern1mu s)_2, 
      \varepsilon_B \mkern1mu (\eta \mkern1mu s)_3 )
      & \text{(def.\ of $\calT$)} 
      \\ & = &
      ((\eta \mkern1mu s)_1, 
      \ZET{[t]_B}{t \in (\eta \mkern1mu s)_2},
      \ZET{[t]_B}{t \in (\eta \mkern1mu s)_3})
      \qquad \mbox{} & 
      \text{(def.\ of $\varepsilon_B$)}
      \\ & = &
      \multicolumn{2}{l}{%
        ((\eta \mkern1mu s)_1, 
        \ZET{C\in X/B }{(\eta \mkern1mu s)_2 \cap C \neq \emptyset}, 
        \ZET{C\in X/B }{(\eta \mkern1mu s)_3 \cap C \neq \emptyset})
      } 
      \\
      & = & \varrho_B [s]_B & \text{(def.\ of $\varrho_B$)}
      \\
      & = & \varrho_B (\varepsilon_B\, s) & \text{(def.\ of $\varepsilon_B$).}
    \end{array}
    \def\arraystretch{1.0}
  \end{displaymath}
  Thus, $(\calT \,\varepsilon_B) \compose \eta = \varrho_B \compose
  \varepsilon_B$, i.e.\ $\varepsilon_B$ is a $\calT$-homomorphism.
  Therefore, by uniqueness of a final morphism, we have
  $\fmorph{\calT}{\model}{\cdot} = \fmorph{\calT}{{\model_B}}{\cdot}
  \compose \, \varepsilon_B$.  In particular, with respect
  to~$\model$, this implies $\fmorph{\calT}{\model}{x_1} =
  \fmorph{\calT}{\model}{x_2}$ since $(x_1,x_2)\in B$ and so
  $(\varepsilon_B \, x_1) = (\varepsilon_B \, x_2)$.  Thus, $x_1
  \beq{\calT}{} x_2$.

  For the reverse---i.e.\ ${x_1 \beq{\calT}{} x_2 }$ implies ${x_1
    \ebisim x_2}$---assume $x_1 \beq{\calT}{} x_2$, i.e.
  $\fmorph{\calT}{\model}{x_1} = \fmorph{\calT}{\model}{x_2}$, for
  $x_1,x_2 \in X$. Define the relation $R \subseteq X \times X$
  such that $(x_1,x_2) \in R$ iff $\fmorph{\calT}{\model}{x_1} =
  \fmorph{\calT}{\model}{x_2}$. We first show that $R$ is an
  $\eta$-bisimulation. Suppose $(s',s'') \in R$ and recall that
  $\fmorph{\calT}{\model}{\cdot} : (X,\eta) \to (\Omega,\omega)$ is a
  $\calT$-homomorphism. For what concerns the first condition of
  Definition~\ref{def:ETB} we have
  \begin{displaymath}
    \def\arraystretch{1.0}
    \begin{array}{rcll}
      (\eta \mkern1mu s')_1 & = &
      ((\calT \, \fmorph{\calT}{}{\cdot})(\eta \,s'))_1
      & \text{(def.\ of $\calT$)}
      \\ & = &
      (((\calT \,\fmorph{\calT}{}{\cdot}) \compose \eta) \, s')_1
      \\ & = &
      (( \omega \compose \fmorph{\calT}{}{\cdot})s')_1
      & \text{($\fmorph{\calT}{}{\cdot} : (X,\eta) \to
        (\Omega,\omega)$ is homomorphism.)} 
      \\ & = &
      (\omega \, \fmorph{\calT}{}{s'})_1
      \\ & = &
      (\omega \, \fmorph{\calT}{}{s''})_1
      & \text{($\fmorph{\calT}{}{s'} =
        \fmorph{\calT}{}{s''}$ since $(s',s'') \in R$)} 
      \\ & = &
      ((\omega \compose \fmorph{\calT}{}{\cdot})s'')_1
      \\ & = &
      (((\calT \,\fmorph{\calT}{}{\cdot}) \compose \eta) \, s'')_1
      & \text{($(\calT \,\fmorph{\calT}{}{\cdot}) \compose \eta =
        \omega \compose \fmorph{\calT}{}{\cdot}$)} 
      \\ & = &
      ((\calT \, \fmorph{\calT}{}{\cdot})(\eta \,s''))_1
      \\ & = &
      (\eta \mkern1mu s'')_1
      & \text{(def.\ of $\calT$).}
    \end{array}
    \def\arraystretch{1.0}
  \end{displaymath}
  For what concerns the second condition of Definition~\ref{def:ETB}
  we have, for $h \in \SET{2,3}$ and all $C \in X/R$, that
\begin{displaymath}
  \def\arraystretch{1.3}
  \begin{array}{rcll}
    \multicolumn{4}{l}{(\eta \mkern1mu s')_h \cap C \neq \emptyset}
    \\ 
    \mbox{\qquad} & \Leftrightarrow &
    (\eta \, s')_h \cap \fmorph{\calT}{-1}{w} \neq \emptyset \mbox{\qquad} &
    \\ & & 
    \multicolumn{2}{l}{\qquad \text{(def.\ of ${\fmorph{\calT}{-1}{\cdot}}$;
      def.\ of~$R$; $w = \fmorph{\calT}{}{t}$ for all $t \in C$)}}
    \\ & \Leftrightarrow &
    w \in (\omega {\fmorph{\calT}{}{s'}})_h
    & \text{(by Lemma~\ref{lemma:aux} below)}
    \\ & \Leftrightarrow &
    w \in (\omega {\fmorph{\calT}{}{s''}})_h
    & \text{($ \fmorph{\calT}{}{s'} =
      \fmorph{\calT}{}{s''}$ since $(s',s'') \in R$)} 
    \\ & \Leftrightarrow &
    (\eta \mkern1mu s'')_h \cap \fmorph{\calT}{-1}{w} \neq \emptyset
    & \text{(by Lemma~\ref{lemma:aux} below)}
    \\ & \Leftrightarrow &
    (\eta \mkern1mu s'')_h \cap C \neq \emptyset
    \\ & & 
    \multicolumn{2}{l}{\qquad \text{(def.\ of ${\fmorph{\calT}{-1}{\cdot}}$;
        def.\ of~$R$; $w = \fmorph{\calT}{}{t}$ for all $t \in C$)}.}
  \end{array}
  \def\arraystretch{1.0}
\end{displaymath}
Since both conditions of Definition~\ref{def:ETB} are fulfilled, 
$R$ is an $\eta$-bisimulation relation and hence, since $(x_1,x_2)\in R$, 
we get ${x_1 \ebisim x_2}$. This completes the proof. \pfend
\end{proof}

\noindent
In the proof of Theorem~\ref{theo:etabisbeh} we have made use of the
following result.

\begin{lemma}
  \label{lemma:aux}
  For $h \in \SET{2,3}$, all $s \in X, w \in \Omega$ we have that $w
  \in (\omega \, {\fmorph{\calT}{}{s}})_h$ if and only if $(\eta \,
  s)_h \cap \fmorph{\calT}{-1}{w} \not=\emptyset$.
\end{lemma}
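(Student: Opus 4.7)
The plan is to unpack the homomorphism condition at the point $s$ in order to rewrite the $h$-th component of $\omega \, \fmorph{\calT}{}{s}$ as the direct image of $(\eta \, s)_h$ under $\fmorph{\calT}{}{\cdot}$, after which the statement becomes a purely set-theoretic tautology about direct images and preimages.

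First I would recall that $\fmorph{\calT}{}{\cdot} : (X,\eta) \to (\Omega,\omega)$ is a $\calT$-homomorphism, hence $(\calT \, \fmorph{\calT}{}{\cdot}) \compose \eta = \omega \compose \fmorph{\calT}{}{\cdot}$. Applying both sides to $s$ and consulting Definition~\ref{def:simplerfunctor} for the action of $\calT$ on arrows, I obtain
\begin{displaymath}
  \omega \, \fmorph{\calT}{}{s} \;=\; \bigl(\, (\eta \mkern1mu s)_1, \; \fmorph{\calT}{}{\cdot}(\eta \mkern1mu s)_2, \; \fmorph{\calT}{}{\cdot}(\eta \mkern1mu s)_3 \,\bigr),
\end{displaymath}
where $\fmorph{\calT}{}{\cdot}(\eta \mkern1mu s)_h = \ZET{\fmorph{\calT}{}{t}}{t \in (\eta \mkern1mu s)_h}$ is the direct image under the covariant (finite) powerset action. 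Therefore, for $h \in \SET{2,3}$, the projection on the $h$-th component gives $(\omega \, \fmorph{\calT}{}{s})_h = \ZET{\fmorph{\calT}{}{t}}{t \in (\eta \mkern1mu s)_h}$.

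Given this, the equivalence becomes immediate. For the forward direction, if $w \in (\omega \, \fmorph{\calT}{}{s})_h$, then there exists $t \in (\eta \mkern1mu s)_h$ with $\fmorph{\calT}{}{t} = w$, i.e.\ $t \in \fmorph{\calT}{-1}{w}$; hence $t$ witnesses $(\eta \mkern1mu s)_h \cap \fmorph{\calT}{-1}{w} \neq \emptyset$. Conversely, if some $t$ lies in $(\eta \mkern1mu s)_h \cap \fmorph{\calT}{-1}{w}$, then $\fmorph{\calT}{}{t} = w$ and $t \in (\eta \mkern1mu s)_h$, so $w$ belongs to the direct image $\ZET{\fmorph{\calT}{}{t'}}{t' \in (\eta \mkern1mu s)_h} = (\omega \, \fmorph{\calT}{}{s})_h$.

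There is no real obstacle here: the only thing that could go wrong is confusing the covariant-powerset action $\calT \mkern1mu f$ (which sends a set to its direct image) with a contravariant action. Once Definition~\ref{def:simplerfunctor} is invoked correctly, the homomorphism square does all the work and the remainder is a one-line manipulation of the definitions of direct image and of $\fmorph{\calT}{-1}{\cdot}$. No finiteness hypothesis on the closure is needed for this particular step, since the argument is pointwise on the $h$-th component.
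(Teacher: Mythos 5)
Your proof is correct and follows essentially the same route as the paper's: both unfold the homomorphism square $(\calT \, \fmorph{\calT}{}{\cdot}) \compose \eta = \omega \compose \fmorph{\calT}{}{\cdot}$ at $s$ to identify $(\omega \, \fmorph{\calT}{}{s})_h$ with the direct image $\ZET{\fmorph{\calT}{}{t}}{t \in (\eta \mkern1mu s)_h}$, and then conclude by the standard equivalence between membership in a direct image and nonempty intersection with the preimage. Nothing is missing.
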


\begin{proof}
See Appendix \ref{appendix}.
\end{proof}

%

\noindent
From Theorem~\ref{theo:etabisbeh} and
Lemma~\ref{lemma:ebisISaltcmbisim} we get complete correspondence of
behavioural equivalence and bisimilarity for quasi-discrete closure
models .

\begin{corollary}
  \label{coro:bisbeh}
  Given a quasi-discrete closure model $\model = ((X,\closurename{R}),
  \pevalname)$ based on~$R$, for all $x_1, x_2 \in X$ it holds that
  $x_1 \altcmbisim^{\model} x_2$ iff $x_1 \beq{\calT}{\model} x_2$.
\end{corollary}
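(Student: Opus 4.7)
The plan is to observe that the corollary is an immediate consequence of chaining together the two results already developed in this section. First, Lemma \ref{lemma:ebisISaltcmbisim} asserts that the relation $\ebisim$ on $X$, defined via the representation function $\eta_{\model}$, coincides with $\altcmbisim^{\model}$; this is essentially a direct comparison of Definitions \ref{def:altCMB} and \ref{def:ETB} once one unfolds $(\eta\,x)_1 = \invpeval{x}$, $(\eta\,x)_2 = \forwclosure{x}$ and $(\eta\,x)_3 = \backclosure{x}$. Second, Theorem \ref{theo:etabisbeh} establishes that $\eta$-bisimilarity coincides with behavioural equivalence $\beq{\calT}{\model}$ for the $\calT$-coalgebra $(X,\eta_{\model})$ representing $\model$.

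I would therefore simply concatenate these two equivalences. For arbitrary $x_1,x_2 \in X$,
\begin{displaymath}
  x_1 \altcmbisim^{\model} x_2
  \;\Longleftrightarrow\; x_1 \ebisim x_2
  \;\Longleftrightarrow\; x_1 \beq{\calT}{\model} x_2,
\end{displaymath}
where the first biconditional is Lemma \ref{lemma:ebisISaltcmbisim} and the second is Theorem \ref{theo:etabisbeh}. This yields the required equivalence.

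The only point worth flagging is that the coalgebraic representation $(X,\eta_{\model})$ is only guaranteed to land in the endofunctor $\calT$ under the standing assumption, made at the start of Section \ref{sec:CMCoalgebraically}, that $\model$ is both finitely closed and finitely backward closed; so the statement of the corollary must be read in that setting. Beyond this hypothesis-bookkeeping, there is no obstacle to overcome, since all the genuine work has already been discharged in the proof of Theorem \ref{theo:etabisbeh}, including the technical step extracted as Lemma \ref{lemma:aux}.
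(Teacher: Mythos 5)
Your proposal is correct and is exactly the argument the paper uses: the corollary is obtained by chaining Lemma~\ref{lemma:ebisISaltcmbisim} with Theorem~\ref{theo:etabisbeh}. Your remark that the statement implicitly relies on the standing assumption that $\model$ is finitely closed and finitely backward closed (so that $(X,\eta_{\model})$ is a $\calT$-coalgebra) is an accurate and worthwhile observation.
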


\begin{example}
  \label{exa:ExaTre}
  With reference to Example~\ref{exa:ExaDue}, we see that the minimal
  coalgebra for the models $\model_a, \model_b$ and~$\model_c$ is
  represented by $(\SET{\tau_1,\tau_2},\mu)$ where $\mu \mkern1mu
  \tau_1 = (\mathtt{blue}, \SET{\tau_1,\tau_2}, \SET{\tau_1,\tau_2})$,
  and $\mu \mkern1mu \tau_2 = (\mathtt{red}, \SET{\tau_1,\tau_2},
  \SET{\tau_1,\tau_2})$. The quotient morphisms are the obvious ones;
  for instance, letting model~$\model_c$ be represented by the
  coalgebra\footnote{For the sake of readability, here we use the same
    names~$c_{ij}$ for the elements of the carrier of the relevant
    coalgebra as those we used for defining the model $\model_c$,
    although everything is to be intended up to isomorphisms.}
  $(X_c, \eta_c)$, 
  $h_c :  (X_c, \eta_c) \to ((\tau_1,\tau_2),\mu)$ maps $c_{22},
  c_{23}, c_{32}$, and $c_{33}$ to~$\tau_2$, and all other elements
  to~$\tau_1$. 
  The minimal coalgebra for~$\model_d$ is
  $(\SET{\tau_3,\tau_4,\tau_5},\nu)$, where
  $\nu \mkern1mu \tau_3 = (\mathtt{red}, \SET{\tau_3,\tau_4},
  \SET{\tau_3,\tau_4})$, 
  $\nu \mkern1mu \tau_4 = (\mathtt{red}, \SET{\tau_3,\tau_4,\tau_5},
  \SET{\tau_3,\tau_4,\tau_5})$, and 
  $\nu \mkern1mu \tau_5 = (\mathtt{blue},\SET{\tau_4,\tau_5},
  \SET{\tau_4,\tau_5})$. The minimal models obtained using \texttt{MiniLogicA} are reported in Figure~\ref{fig:minimized-examples}. 
  Note that $h_d:(X_d, \eta_d) \to ((\tau_3,\tau_4,\tau_5),\nu)$ maps
  $d_{33}$ to~$\tau_3$, the elements of $\lbrace \, d_{ij} \mid 2
  \leqslant i,j \leqslant 4 \, \rbrace \setminus \SET{d_{33}}$
  to~$\tau_4$, and all the other elements to~$\tau_5$.
  Finally, consider the union~$\model_{abcd}$ of~model $\model_{abc}$
  and model~$\model_d$.  In this case, the minimal coalgebra is
  (isomorphic to) $(\SET{\tau_1, \tau_2, \tau_3, \tau_4, \tau_5},\mu+\nu)$
  where $(\mu+\nu) \mkern1mu \tau = \mu \mkern1mu \tau$ if $\tau =
  \tau_1, \tau_2$ and $(\mu+\nu) \mkern1mu \tau = \nu \mkern1mu \tau$
  if~$\tau = \tau_3, \tau_4, \tau_5$.
\end{example}

\begin{figure}
  \centering
  \fbox{
    \subfloat[][]
    { \includegraphics[trim=0pt 15pt 15pt 20pt,clip,height=50pt]{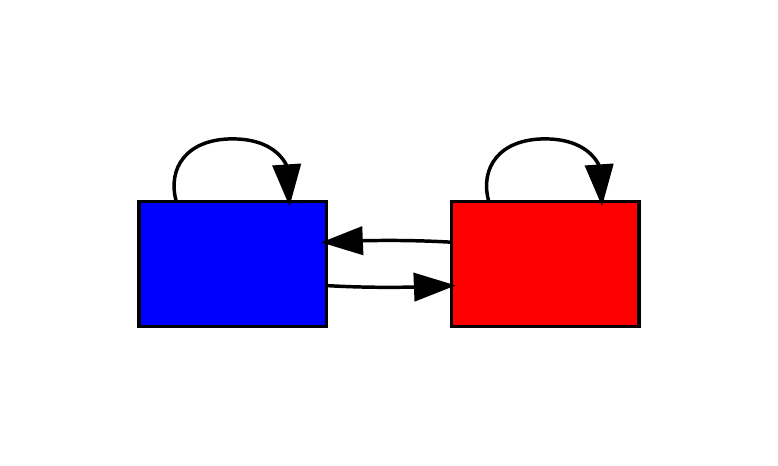}
      \label{fig:minimized-examples-a}
    }
    \subfloat[][]
    { \includegraphics[trim=0 15pt 15pt 20pt,clip,height=50pt]{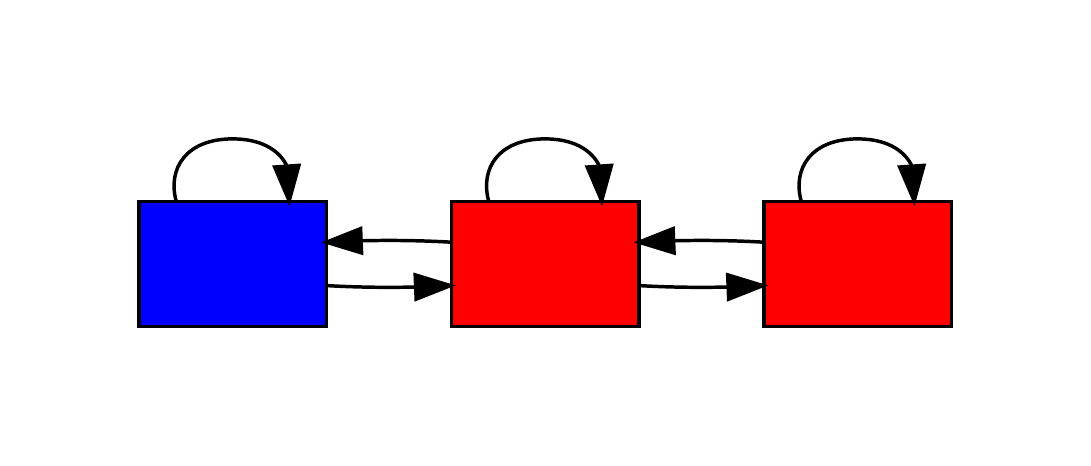}
      \label{fig:minimized-examples-b}
    }
  }
  \caption{\label{fig:minimized-examples}Minimal models obtained by running {\tt MiniLogicA} on the images of Figure~\ref{fig:ExaDue}. Models $\model_a, \model_b$ and~$\model_c$ are all equivalent to the model on the left, whereas $\model_d$ is equivalent to the model on the right.}
\end{figure}

\section{$\SLCS\ $ and logical equivalence}
\label{sec:SLCSEq}

We use the following version of the logic~$\SLCS$ for a given set
atomic propositions~$AP$. 

\begin{equation}
  \label{def:syntax}
  \form  ::=  p  \msep  \lneg \, \form  \msep  \form \, \lor \, \form  \msep 
  \lrcs{\form}{\form}   \msep  \lrcd{\form}{\form} 
\end{equation}
Satisfaction $\model, x \models \form$ of a formula~$\form$ at point
$x \in X$ in a quasi-discrete closure model $\model = ((X,\closurename{R}), \peval)$ is defined in Figure~\ref{def:satisfaction} by induction on the structure of formulas. 
 
\begin{figure}
\centering
\emph{%
\fbox{%
\begin{math}
\begin{array}{r c l c l c l l}
\model,x & \models  & p \in P & \Leftrightarrow & x  \in \peval p \\
\model,x & \models  & \lneg \,\form & \Leftrightarrow & \model,x  \models \form \mbox{ does not hold}\\
\model,x & \models  & \form_1\, \lor \,\form_2 & \Leftrightarrow &
\model,x  \models \form_1 \mbox{ or } \model,x  \models \form_2\\
\model,x & \models  & \lrcs{\form_1}{\form_2} & \Leftrightarrow &
\mbox{there exists a path } \pi  \mbox{ and an index } \ell \mbox{
  such that 
}\\
 & &  & & 
\mbox{\hspace{0.1in}}\pi(0)=x \, \mbox{ and } \,  \model, \pi(\ell) \models \form_1 \mbox{ and }\\
 & &  & & 
\mbox{\hspace{0.1in}}
\model, \pi(j) \models \form_2, \mbox{ for all } j \mbox{ with }
0<j<\ell \\
\model,x & \models  & \lrcd{\form_1}{\form_2} & \Leftrightarrow &
\mbox{there exists a path } \pi  \mbox{ and an index } \ell \mbox{
  such that
} \\
 & &  & & 
\mbox{\hspace{0.1in}}\pi(\ell)=x \, \mbox{ and } \,  \model, \pi(0) \models \form_1 \mbox{ and }\\
 & &  & & 
\mbox{\hspace{0.1in}}
\model, \pi(j) \models \form_2, \mbox{ for all } j \mbox{ with } 0<j<\ell\\
\end{array}
\end{math}
} 
} 
\caption{Definition of the satisfaction relation\label{def:satisfaction}}
\end{figure}

\noindent
Some useful abbreviations are defined in
Figure~\ref{def:abbreviations}. 
The operator~$\lnear$ is the \emph{near} operator, namely the logical
counterpart of the closure function of closure spaces: $x$~satisfies
$\lnear \form$ iff it is ``close'' to~$\form$, as defined
in~\cite{CLLM16}\footnote{In~\cite{CLLM16} the following definition
  has been used: $\model, x \models \lnear \form \Leftrightarrow x \in
  \closure{R}{\ZET{y}{\model, y \models \form}}$; it is easy to show
  that $\model, x \models \lnear \form$ if and only if $\model, x
  \models \lrcd{\form}{\lfalse}$.}.
The operator~$\lsurr$ is the \emph{surrounded}
\footnote{Named ``spatial until'' and denoted by ``${\cal U}$''
  in~\cite{CLLM14}.} operator: a point $x$ satisfies $\form_1 \,
\lsurr \, \form_2$ if it satisfies~$\form_1$ and no path starting
at~$x$ can reach any point satisfying $\neg \, \form_1$ without first
passing by a point satisfying~$\form_2$, i.e.\ $x$~lays in an area
that satisfies~$\form_1$ and that is surrounded by points satisfying
$\form_2$.
The operator~$\lprop$ is the \emph{propagation} operator introduced
in~\cite{CLLM16}: $x$~satisfies $\form_1 \, \lprop \, \form_2$ if it
satisfies~$\form_2$ and it is reachable from a point
satisfying~$\form_1$ via a path such that all of its points, except
possibly the starting point, satisfy~$\form_2$. For more derived
operators the reader is referred to~\cite{CLLM16}. 

\begin{example} Consider Example~\ref{exa:ExaDue}. All the red points in Figure~\ref{fig:ExaDue}(a-c) satisfy $\lnear\, {\tt blue}$. The middle point in Figure~\ref{fig:ExaDue}(d) (point $d_{33}$) does not satisfy such formula, although it satisfies $\lrcs{\tt blue}{\tt red}$.
\end{example}

\begin{figure} 
\centering
\emph{%
\fbox{%
\begin{math}
\begin{array}{l c l}
  \form_1 \, \land \, \form_2 & \equiv & \lneg (\lneg \form_1 \, \lor
  \, \lneg \form_2) \\
  \lfalse & \equiv & p \, \land \, \lneg p \\
  \ltrue & \equiv & \lneg\lfalse \\
  \lnear \form & \equiv & \lrcd{\form}{\lfalse} \\
  \form_1 \, \lsurr \, \form_2 & \equiv & \form_1 \, \land \,
  \lneg(\lrcs{\lneg(\form_1 \, \lor \, \form_2)}{\lneg \form_2}) \\
  \form_1 \, \lprop \, \form_2 & \equiv & \form_2 \land
  \lrcd{\form_1}{\form_2} 
\end{array}
\end{math}
} 
} 
\caption{Derived operators~\label{def:abbreviations}}
\end{figure}

\begin{definition}
  \label{def:SLCSEQ}
  The $\SLCS $\ equivalence relation with respect to model~$\model$,
  namely $\slcseq^{\model} \subseteq X \times X$, is defined as
  follows: $x_1 \slcseq^{\model} x_2$ iff for all $\SLCS$
  formulas~$\form$ we have that $\model, x_1 \models \form
  \Leftrightarrow \model, x_2 \models \form$.
\end{definition}

\noindent
In the following, for the sake of notational simplicity, we will write
$\slcseq$ instead of $\slcseq^{\model}$ whenever this cannot cause
confusion.

\begin{lemma}
  \label{lemma:slcseqsubcsbisim}
  If a quasi-discrete closure model~$\model$ is finitely closed and
  finitely backward closed, then $\slcseq \, \subseteq\,  \altcmbisim$.
\end{lemma}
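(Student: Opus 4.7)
The plan is to show that $\slcseq$ itself is a bisimulation relation in the sense of Definition~\ref{def:altCMB}; since $\slcseq$ is manifestly an equivalence relation, this inclusion then follows. So I need to verify, assuming $x_1 \slcseq x_2$, both (1) $\invpeval{x_1} = \invpeval{x_2}$ and (2) for every equivalence class $C \in X/\slcseq$ the forward and backward transfer conditions 2(a) and 2(b). Condition~(1) is immediate, since each atomic proposition $p \in AP$ is already a formula of the language: $x_1 \models p \Leftrightarrow x_2 \models p$ gives $p \in \invpeval{x_1} \Leftrightarrow p \in \invpeval{x_2}$.

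For condition 2(a), suppose $\forwclosure{x_1} \cap C \neq \emptyset$ and pick $y \in \forwclosure{x_1} \cap C$. The aim is to produce a \emph{single} $\SLCS$ formula $\phi$ satisfied by $y$ but by no element of $\forwclosure{x_2}$ that is not in $C$; then the forward analogue of the near-operator will yield a contradiction with $x_1 \slcseq x_2$. Concretely, if some $z \in \forwclosure{x_2}$ already lies in $C$ we are done. Otherwise, for each $z \in \forwclosure{x_2}$ we have $y \not\slcseq z$, so a distinguishing formula $\phi_z$ exists; replacing $\phi_z$ by its negation if necessary, we may assume $y \models \phi_z$ and $z \not\models \phi_z$. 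The crucial use of the finitely-closed hypothesis is that $\forwclosure{x_2}$ is \emph{finite}, so the conjunction $\phi = \bigwedge_{z \in \forwclosure{x_2}} \phi_z$ is a well-formed $\SLCS$ formula and no element of $\forwclosure{x_2}$ satisfies it.

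The last step is to encode ``some point of $\forwclosure{x}$ satisfies $\phi$'' as a single $\SLCS$ formula. Here I use $\lrcs{\phi}{\lfalse}$: unwinding the clause for $\lrcsname$ in Figure~\ref{def:satisfaction}, the requirement that $\pi(j) \models \lfalse$ for all $0 < j < \ell$ forces $\ell \in \{0,1\}$, and continuity of $\pi$ from $(\nats,\closurename{\succ})$ to $(X,\closurename{R})$ then gives $\pi(\ell) \in \forwclosure{\pi(0)}$, so $x \models \lrcs{\phi}{\lfalse}$ is equivalent to $\forwclosure{x} \cap \ZET{w}{w \models \phi} \neq \emptyset$. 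Since $y \in \forwclosure{x_1}$ and $y \models \phi$, we get $x_1 \models \lrcs{\phi}{\lfalse}$, hence $x_2 \models \lrcs{\phi}{\lfalse}$, producing the required witness in $\forwclosure{x_2}$ — contradiction. Condition 2(b) is symmetric, using instead $\lnear \phi = \lrcd{\phi}{\lfalse}$, which by the same unfolding captures existence of a $\phi$-point in $\backclosure{x}$, and relying on the finitely-backward-closed hypothesis to make the conjunction finite.

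The main obstacle, and really the only delicate point, is the auxiliary semantic lemma $x \models \lrcs{\phi}{\lfalse}$ iff $\forwclosure{x}$ meets $\SET{w : w \models \phi}$ (and its backward counterpart for $\lnear$). Both the ``$\Rightarrow$'' direction (ruling out $\ell \geq 2$ and extracting membership in $\forwclosure{x}$ from continuity of $\pi$) and the ``$\Leftarrow$'' direction (constructing a short continuous $\pi$ that stabilises after one step) must be justified carefully, though each is a short calculation. Once this lemma is in place, the finiteness of $\forwclosure{x_2}$ and $\backclosure{x_2}$ reduces the argument to a finite conjunction, and the rest is straightforward.
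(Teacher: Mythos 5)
Your proposal is correct and follows essentially the same route as the paper's proof: both show that $\slcseq$ is itself a bisimulation, use the finitely-(backward-)closed hypothesis to build a single finite conjunction of distinguishing formulas, and then transfer the contradiction through $\lrcs{\cdot}{\lfalse}$ and $\lrcd{\cdot}{\lfalse}$. The only cosmetic difference is that you verify the equivalence-class conditions of Definition~\ref{def:altCMB} directly, whereas the paper checks the five element-wise conditions of Definition~\ref{def:CMB} (the two being equivalent), and you make explicit the auxiliary characterisation of $\lrcs{\phi}{\lfalse}$ in terms of $\forwclosurename$ that the paper uses implicitly via its Lemma on one-step reachability.
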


\begin{proof}
See Appendix~\ref{appendix}.
\end{proof}

\begin{lemma}\label{lemma:csbisimsubslcseq}
If $\model$ is a quasi-discrete closure model, then $\altcmbisim\, \subseteq\,  \slcseq$.
\end{lemma}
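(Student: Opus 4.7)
The plan is to prove by structural induction on the $\SLCS$ formula $\form$ that if $x_1 \altcmbisim x_2$, then $\model, x_1 \models \form$ iff $\model, x_2 \models \form$. The atomic case $\form = p$ is immediate from condition~1 of Definition~\ref{def:altCMB}: since $\invpeval{x_1} = \invpeval{x_2}$, we have $x_1 \in \peval p$ iff $x_2 \in \peval p$. The Boolean cases for negation and disjunction are routine.

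The real work is in the two reachability operators. For $\form = \lrcs{\form_1}{\form_2}$, suppose $\model, x_1 \models \form$ is witnessed by a path $\pi_1$ with $\pi_1(0) = x_1$, some index $\ell$ such that $\model, \pi_1(\ell) \models \form_1$, and $\model, \pi_1(j) \models \form_2$ for $0 < j < \ell$. The idea is to construct, step by step, a finite sequence $x_2^{(0)}, x_2^{(1)}, \ldots, x_2^{(\ell)}$ with $x_2^{(0)} = x_2$, $x_2^{(i+1)} \in \forwclosure{x_2^{(i)}}$, and $\pi_1(i) \altcmbisim x_2^{(i)}$ for every~$i$. At each step, continuity of $\pi_1$ (Definition~\ref{def:qspath}) gives $\pi_1(i+1) \in \forwclosure{\pi_1(i)}$, and condition~2 of Definition~\ref{def:CMB} (which is equivalent to Definition~\ref{def:altCMB}) applied to the bisimilar pair $(\pi_1(i), x_2^{(i)})$ produces the required $x_2^{(i+1)}$. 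We then extend this finite sequence to a full quasi-discrete path $\pi_2 : (\nats, \closurename{\succ}) \to (X, \closurename{R})$ by setting $\pi_2(i) = x_2^{(i)}$ for $i \leqslant \ell$ and $\pi_2(i) = x_2^{(\ell)}$ for $i > \ell$; continuity of this extension relies on the elementary fact that $x_2^{(\ell)} \in \forwclosure{x_2^{(\ell)}}$, which follows from axiom~2 of Definition~\ref{def:ClosureSpaces}. Applying the induction hypothesis to $\form_1$ at position $\ell$ and to $\form_2$ at positions $0 < j < \ell$, we conclude $\model, x_2 \models \lrcs{\form_1}{\form_2}$.

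The case $\form = \lrcd{\form_1}{\form_2}$ is symmetric, but now the witnessing path satisfies $\pi_1(\ell) = x_1$ and we must construct a path \emph{ending} at $x_2$. We therefore build the sequence \emph{in reverse}: start from $x_2^{(\ell)} = x_2$, and at each step use that $\pi_1(i-1) \in \backclosure{\pi_1(i)}$ together with condition~4 of Definition~\ref{def:CMB} applied to $(\pi_1(i), x_2^{(i)})$ to obtain $x_2^{(i-1)} \in \backclosure{x_2^{(i)}}$ with $\pi_1(i-1) \altcmbisim x_2^{(i-1)}$. The resulting sequence $x_2^{(0)}, \ldots, x_2^{(\ell)}$ again yields a valid quasi-discrete path (since $y \in \backclosure{z}$ iff $z \in \forwclosure{y}$), padded as before past index~$\ell$; the induction hypothesis delivers the required satisfactions. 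The converse implications in both reachability cases are obtained either by appealing to symmetry of $\altcmbisim$ or by running the same argument with the roles of $x_1$ and $x_2$ exchanged, using conditions~3 and~5 of Definition~\ref{def:CMB} in place of~2 and~4.

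The only subtle point, and hence the main obstacle, is to make sure that the inductive step-by-step construction produces a bona fide continuous path from $(\nats, \closurename{\succ})$ to $(X, \closurename{R})$ rather than a mere finite sequence. This is resolved cleanly by the observation that quasi-discrete closure always contains identities, so stuttering at the endpoint preserves continuity. Notably, in contrast to Lemma~\ref{lemma:slcseqsubcsbisim}, no finiteness assumption on $\forwclosurename$ or $\backclosurename$ is needed here, because the transfer conditions of bisimilarity are used only pointwise along an already-given path.
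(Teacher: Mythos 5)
Your proof is correct, and the core engine is the same as the paper's: structural induction on $\form$, with the reachability cases handled by simulating the witnessing path of $x_1$ step by step through the transfer conditions of bisimilarity (conditions 2 and 4 of Definition~\ref{def:CMB}, justified for $\altcmbisim$ by its coincidence with $\cmbisim$). Where you diverge is in the logical organisation of the reachability cases. The paper argues by contradiction: it assumes $\model,x_1\models\lrcs{\form_1}{\form_2}$ and $\model,x_2\not\models\lrcs{\form_1}{\form_2}$, splits into the cases $\ell=0$, $\ell=1$ and $\ell>1$, handles $\ell=1$ via a separate auxiliary result (Lemma~\ref{lemma:CFBB}, which builds an explicit two-point stuttering path and checks its continuity), and for $\ell>1$ constructs the mimicking path only to reach a contradiction at its endpoint. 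You instead construct the witness for $x_2$ directly and uniformly in $\ell$, which subsumes the paper's three cases and absorbs Lemma~\ref{lemma:CFBB} into the general observation that a sequence with $\pi(i+1)\in\forwclosure{\pi(i)}$, padded by stuttering, is a continuous map out of $(\nats,\closurename{\succ})$ --- a point you are right to flag explicitly, and which the paper verifies only for the two-point case. Your closing remark that no finiteness hypothesis is needed (in contrast to Lemma~\ref{lemma:slcseqsubcsbisim}) is also accurate and matches the paper's statement of the lemma. The net effect is a cleaner, constructive rendering of the same argument; the paper's version buys nothing extra beyond making the contradiction with $\model,x_2\not\models\lrcs{\form_1}{\form_2}$ explicit at each stage.
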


\begin{proof}
See Appendix~\ref{appendix}.
\end{proof}


\noindent
Lemma~\ref{lemma:slcseqsubcsbisim} and
Lemma~\ref{lemma:csbisimsubslcseq} bring the following result.

\begin{theorem}
  \label{correspondence}
  If closure model $\model = ((X,\closurename{}), \peval)$ is
  quasi-discrete, finitely closed, and finitely backward closed,
  then for all $x_1,x_2 \in X$, $ x_1 \, \altcmbisim\, x_2$ iff we
  have $x_1 \, \slcseq\, x_2$.
\end{theorem}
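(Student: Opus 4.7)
The plan is to obtain Theorem~\ref{correspondence} as an immediate double-inclusion consequence of the two preceding lemmas. The hypotheses on $\model$---quasi-discrete, finitely closed, and finitely backward closed---are precisely what is needed to invoke both results simultaneously: Lemma~\ref{lemma:csbisimsubslcseq} requires only quasi-discreteness and yields the inclusion $\altcmbisim \,\subseteq\, \slcseq$, while Lemma~\ref{lemma:slcseqsubcsbisim} additionally requires the two finiteness conditions and yields $\slcseq \,\subseteq\, \altcmbisim$. Combining these inclusions gives the desired biconditional for arbitrary $x_1, x_2 \in X$.

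In more detail, I would fix two points $x_1, x_2 \in X$ and split $x_1 \altcmbisim x_2 \Leftrightarrow x_1 \slcseq x_2$ into its two implications. The left-to-right direction follows verbatim from Lemma~\ref{lemma:csbisimsubslcseq}, whose proof ultimately proceeds by a routine induction on SLCS formulas: the transfer conditions 2(a) and 2(b) of Definition~\ref{def:altCMB} are exactly strong enough to match the reachability operators $\lrcsname$ and $\lrcdname$ step by step along paths. The right-to-left direction is Lemma~\ref{lemma:slcseqsubcsbisim}; here one shows that the kernel of logical equivalence is itself a bisimulation relation in the sense of Definition~\ref{def:altCMB}, so that the equivalence class of each point coincides with the bisimulation class of that point.

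The main obstacle is therefore not in the theorem itself but hidden in Lemma~\ref{lemma:slcseqsubcsbisim}. To verify the transfer conditions 2(a) and 2(b), one must produce, for each SLCS-equivalence class $C$ meeting $\forwclosure{x_1}$, a formula that separates $C$ from every other class, and then combine finitely many such formulas via $\lrcsname$ (respectively $\lrcdname$) to express the required existence of a neighbour in $C$. The finiteness hypotheses on $\forwclosure{x}$ and $\backclosure{x}$ are essential precisely because only finitely many classes can then be encountered, so a finite distinguishing formula always exists; without them, this step could fail and one might need an infinitary extension of $\SLCS$. Granted the two lemmas, the theorem itself is a one-line corollary and requires no further technical machinery.
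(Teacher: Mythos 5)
Your proposal matches the paper's own argument: Theorem~\ref{correspondence} is obtained there exactly as the conjunction of Lemma~\ref{lemma:slcseqsubcsbisim} (which uses the two finiteness hypotheses to build finite distinguishing conjunctions and show $\slcseq$ is a bisimulation) and Lemma~\ref{lemma:csbisimsubslcseq} (induction on $\SLCS$ formulas). Your account of where the hypotheses are used and of the structure of the two lemmas is accurate, so the proof is correct and essentially identical to the paper's.
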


\begin{example} Let us provide formulas that uniquely characterise each of the points in Figure~\ref{fig:minimized-examples}, proving that there are no different, bisimilar points in that figure, when it is considered as a single model by taking the disjoint union of the models (a) and (b). 
Let $\Phi_1 = {\tt blue} \land \lnear {\tt red}$, $\Phi_2 = {\tt red} \land \lnear {\tt blue}$, $\Phi_3 =  {\tt red} \land \lnot \lnear {\tt blue}$. 
The blue point in Figure~\ref{fig:minimized-examples}(a) is the only point satisfying $\Phi_1 \land \lnot \lrcs{\Phi_3}{\ltrue}$. The red point in Figure~\ref{fig:minimized-examples}(a) is the only point satisfying $\Phi_2 \land \lnot \lnear \Phi_3$. The blue point in Figure~\ref{fig:minimized-examples}(b) is the only point satisfying $\Phi_1 \land \lrcs{\Phi_3}{\ltrue}$. The middle red point in Figure~\ref{fig:minimized-examples}(b) is the only point satisfying $\Phi_2 \land \lnear \Phi_3$. The rightmost red point in Figure~\ref{fig:minimized-examples}(b) is the only point satisfying $\Phi_3$.
\end{example}

\noindent
We close this section with a stronger version of
Lemma~\ref{lemma:slcseqsubcsbisim}, and consequently of
Theorem~\ref{correspondence}.  Let us consider the sub-logic $\SMCSm$
of~$\SLCS$ given by
\begin{equation}\label{eq:sublogic}
  \form  ::=  p  \msep  \lneg \, \form  \msep  \form \, \lor \, \form  \msep
  \lrcs{\form}{\lfalse} \msep \lrcd{\form}{\lfalse}
\end{equation}
and let $x_1 \, \slcsmeq\, x_2$ denote the logical equivalence with
respect to the sub-logic~$\SMCSm$.

Lemma~\ref{lemma:restrict} below lays the basis for showing that for
two points $x_1$ and~$x_2$ with $x_1 \, \slcsmeq\, x_2$ also holds
that $x_1 \, \slcseq\, x_2$, i.e.\ using the full version
$\lrcs{\form_1}{\form_2}$ and $\lrcd{\form_1}{\form_2}$ of the
$\lrcsname$ and~$\lrcdname$ operators does not add discriminatory
power with respect to using the restricted versions
$\lrcs{\form}{\lfalse}$ and $\lrcd{\form}{\lfalse}$.

\begin{lemma}
  \label{lemma:restrict}
  Formulas $\form_1,\form_2$ of~$\SMCSm$ of Equation~\ref{eq:sublogic}
  satisfy the following.
  \begin{enumerate}
  \item If $\model, x_1 \models \lrcs{\form_1}{\form_2}$ and $\model,
    x_2 \not\models \lrcs{\form_1}{\form_2}$ then there exists
    $\Lambda_{\form_1,\form_2}$ in the language of
    Equation~\ref{eq:sublogic} such that $\model, x_1 \models
    \Lambda_{\form_1,\form_2} $ and $\model, x_2 \not\models
    \Lambda_{\form_1,\form_2} $.
  \item If $\model, x_1 \models \lrcd{\form_1}{\form_2}$ and $\model,
    x_2 \not\models \lrcd{\form_1}{\form_2}$ then there exists
    $\Lambda_{\form_1,\form_2}$ in the language of
    Equation~\ref{eq:sublogic} such that $\model, x_1 \models
    \Lambda_{\form_1,\form_2} $ and $\model, x_2 \not\models
    \Lambda_{\form_1,\form_2} $.
\end{enumerate}
\end{lemma}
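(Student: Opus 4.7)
The plan is to construct, inside the sub-logic $\SMCSm$, finitary approximations of the reachability operators. Define a family $\{\Theta^{\rightarrow}_k\}_{k \geq 0}$ of formulas by
\[
  \Theta^{\rightarrow}_0 \;=\; \form_1
  \qquad\text{and}\qquad
  \Theta^{\rightarrow}_{k+1} \;=\; \Theta^{\rightarrow}_k \;\lor\; \lrcs{(\form_1 \lor (\form_2 \land \Theta^{\rightarrow}_k))}{\lfalse}.
\]
Since $\form_1, \form_2 \in \SMCSm$ and the grammar of Equation~\ref{eq:sublogic} is closed under booleans and under $\lrcs{-}{\lfalse}$, each $\Theta^{\rightarrow}_k$ belongs to $\SMCSm$. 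By induction on~$k$, I would then verify the characterisation: for every $y \in X$, $\model, y \models \Theta^{\rightarrow}_k$ iff there exist a path~$\pi$ and an index $\ell \leq k$ with $\pi(0) = y$, $\model, \pi(\ell) \models \form_1$, and $\model, \pi(j) \models \form_2$ for all $0 < j < \ell$.

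Assuming this characterisation, item~1 follows immediately. If $\model, x_1 \models \lrcs{\form_1}{\form_2}$ is witnessed by a path~$\pi$ and an index~$\ell \in \nats$, then $\model, x_1 \models \Theta^{\rightarrow}_\ell$. If $\model, x_2 \not\models \lrcs{\form_1}{\form_2}$, no such witness exists for $x_2$ at any length, so $\model, x_2 \not\models \Theta^{\rightarrow}_k$ for every $k \geq 0$, in particular for $k = \ell$. Taking $\Lambda_{\form_1,\form_2} = \Theta^{\rightarrow}_\ell$ therefore yields an $\SMCSm$ formula separating $x_1$ from~$x_2$.

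Item~2 is handled symmetrically. Define $\Theta^{\leftarrow}_k$ by the same inductive scheme, but replacing $\lrcs{-}{\lfalse}$ with $\lrcd{-}{\lfalse}$. The analogous characterisation (now in terms of paths with $\pi(\ell) = y$ and a $\form_1$-point at $\pi(0)$, relying on $\backclosurename$ instead of $\forwclosurename$) and the same concluding argument close the proof.

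The main obstacle I expect is the inductive step of the semantic characterisation of $\Theta^{\rightarrow}_k$. The delicate point is that $\lrcs{\form_1}{\form_2}$ imposes no condition on the starting point $\pi(0) = y$ but does force $\form_2$ on every strictly intermediate point, while $\lrcs{-}{\lfalse}$ quantifies over $\forwclosure{y}$, which always contains~$y$ itself. The construction $\form_1 \lor (\form_2 \land \Theta^{\rightarrow}_k)$ inside $\lrcs{-}{\lfalse}$ is tailored exactly to handle both the length-one case (where the successor is the $\form_1$-endpoint and no intermediate constraint applies) and the length-$(k{+}1)$ case with $k \geq 1$ (where the successor becomes the first intermediate, hence must satisfy $\form_2$, and the remaining tail of length at most~$k$ is captured by the inductive hypothesis). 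Verifying this correspondence rigorously, and checking that stuttering successors $z = y$ (allowed by continuity of~$\pi$) do not introduce spurious satisfactions outside the target set, is the heart of the argument.
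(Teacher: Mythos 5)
Your construction is correct and follows essentially the same route as the paper's proof: both replace $\lrcs{\form_1}{\form_2}$ by a finitary unfolding into nested $\lrcs{-}{\lfloor\!\!\!\;}$ \!\!-- that is, $\lrcs{-}{\lfalse}$ -- formulas whose depth is the length $\ell$ of the path witnessing satisfaction at $x_1$, and symmetrically for $\lrcdname$. The only difference is presentational: the paper does an explicit case split on $\ell\in\{0,1,>1\}$ and writes a single nested formula $\Psi$ for each $\ell$, whereas you package the same idea as a uniform family $\Theta^{\rightarrow}_k$ with explicit disjunctions covering all lengths up to $k$; your inductive semantic characterisation, which you correctly flag as the point needing care, does go through.
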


\begin{proof}
See Appendix~\ref{appendix}.
\end{proof}

\noindent
Using the above lemma, one can then prove the following result.

\begin{theorem}
  \label{correspondencem} 
  If closure model $\model = ((X,\closurename{}), \peval)$ is
  quasi-discrete, finitely closing, and finitely backward closing,
  then for all $x_1,x_2 \in X$ we have $ x_1 \, \altcmbisim\, x_2$ iff
  $x_1 \, \slcsmeq\, x_2$.
\end{theorem}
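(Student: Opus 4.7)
The plan is to establish the two inclusions of $\altcmbisim = \slcsmeq$ using $\slcseq$ as an intermediate bridge, exploiting Theorem~\ref{correspondence} on one side and Lemma~\ref{lemma:restrict} on the other.

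For $\altcmbisim \subseteq \slcsmeq$, I would first note that, since $\SMCSm$ is a syntactic sub-language of $\SLCS$, two points agreeing on every $\SLCS$ formula in particular agree on every $\SMCSm$ formula; that is, $\slcseq \subseteq \slcsmeq$ as equivalence relations. Combined with Lemma~\ref{lemma:csbisimsubslcseq}, this immediately yields $\altcmbisim \subseteq \slcseq \subseteq \slcsmeq$.

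For the substantive direction $\slcsmeq \subseteq \altcmbisim$, my plan is to first establish $\slcsmeq \subseteq \slcseq$ and then invoke Lemma~\ref{lemma:slcseqsubcsbisim} to chain into $\slcseq \subseteq \altcmbisim$. To prove $\slcsmeq \subseteq \slcseq$, I would proceed by structural induction on an $\SLCS$ formula $\form$, showing that whenever $x_1 \slcsmeq x_2$ we have $\model, x_1 \models \form \Leftrightarrow \model, x_2 \models \form$. The atomic and Boolean cases are immediate from the definition of $\slcsmeq$ and the induction hypothesis. The essential cases are $\form = \lrcs{\form_1}{\form_2}$ and $\form = \lrcd{\form_1}{\form_2}$, where I would apply Lemma~\ref{lemma:restrict} contrapositively: if such an $\form$ distinguished $x_1$ from $x_2$, the lemma would provide a distinguishing formula $\Lambda_{\form_1,\form_2} \in \SMCSm$, contradicting $x_1 \slcsmeq x_2$.

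The main obstacle I anticipate is the mismatch between the context of the structural induction---where the subformulas $\form_1, \form_2$ are a priori arbitrary $\SLCS$ formulas---and the hypothesis of Lemma~\ref{lemma:restrict}, which requires $\form_1, \form_2 \in \SMCSm$. My plan for closing this gap is to strengthen the inductive invariant so that, alongside the logical-equivalence claim, it produces for each $\SLCS$ subformula $\form_i$ an $\SMCSm$ surrogate that agrees with $\form_i$ on the bounded sub-model relevant to a witnessing path of the enclosing reachability modality. Here the finite-closure and finite-backward-closure assumptions become indispensable: each pertinent neighbourhood $\forwclosure{x_i}$ or $\backclosure{x_i}$ is finite, so the induction hypothesis supplies $\SMCSm$ distinguishers for every pair of non-$\slcsmeq$-equivalent points appearing on a witnessing path, and taking finite conjunctions and disjunctions of these distinguishers produces the needed $\SMCSm$ counterparts of $\form_1$ and $\form_2$. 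Once this substitution is in place, Lemma~\ref{lemma:restrict} applies and yields the distinguishing $\SMCSm$ formula, closing the induction and hence the theorem.
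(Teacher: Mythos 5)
Your overall architecture is the paper's own: $\altcmbisim \subseteq \slcsmeq$ follows from Lemma~\ref{lemma:csbisimsubslcseq} plus the fact that $\SMCSm$ is a sublanguage of $\SLCS$, and the substantive direction is reduced to $\slcsmeq \subseteq \slcseq$ via Lemma~\ref{lemma:restrict}. You also correctly identify the point the paper glosses over: Lemma~\ref{lemma:restrict} is stated only for $\form_1,\form_2 \in \SMCSm$, whereas the structural induction over $\SLCS$ hands you arbitrary subformulas. However, your repair, as described, has a gap. You propose $\SMCSm$ surrogates for $\form_1,\form_2$ that agree with them ``on the bounded sub-model relevant to a witnessing path'', built from distinguishers for pairs of non-$\slcsmeq$-equivalent points ``appearing on a witnessing path''. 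Controlling the surrogates only on the points of the path witnessing $\model,x_1 \models \lrcs{\form_1}{\form_2}$ suffices for the positive half (that $x_1$ satisfies the distinguishing formula $\Lambda$), but it does nothing for the negative half: to conclude $\model, x_2 \not\models \Lambda$ you must rule out every path of length at most $\ell$ from $x_2$ whose points satisfy the surrogates, and those points need not lie anywhere near $x_1$'s witnessing path. If a surrogate $\form_1'$ happens to hold at some point in the $\ell$-step forward ball around $x_2$ at which $\form_1$ fails, then $x_2$ may well satisfy $\Lambda$ and no contradiction arises. The repair must constrain the surrogates on the finite set consisting of the witnessing path's points \emph{together with} the radius-$\ell$ ball around $x_2$ (finite by the finitely-closed hypothesis); on that set the induction hypothesis guarantees that any two points disagreeing on $\form_i$ are not $\slcsmeq$-related, hence are separated by some $\SMCSm$ formula (this comes from the definition of $\slcsmeq$, not from the induction hypothesis), and the finite disjunction-of-conjunctions of separators then yields surrogates with the two properties actually needed: implied by $\form_i$ on the witness path, and implying $\form_i$ on the ball around $x_2$.

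A shorter route avoids both the induction over full $\SLCS$ and Lemma~\ref{lemma:restrict} altogether: show directly that $\slcsmeq$ is a bisimulation in the sense of Definition~\ref{def:CMB} by replaying the proof of Lemma~\ref{lemma:slcseqsubcsbisim}, observing that every formula that proof manufactures, namely $\lrcs{(\bigwedge_j \form_j)}{\lfalse}$ and $\lrcd{(\bigwedge_j \form_j)}{\lfalse}$ with the $\form_j$ chosen as $\SMCSm$ distinguishers, already lies in $\SMCSm$. This gives $\slcsmeq \subseteq \altcmbisim$ in one step and, combined with the easy inclusion, closes the theorem.
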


\begin{remark}
  \label{rem:BaFLog}
  With reference to Remark~\ref{rem:BaFbis} it is easy to see that
  while $\model, x_1 \models \lrcd{p}{\lfalse}$, we have $\model, x_2
  \not\models \lrcd{p}{\lfalse}$. 
\end{remark}


\section{A tool for spatial minimization}
\label{sec:tool}

One of the major advantages of defining bisimilarity coalgebraically
is the availability of the partition refinement algorithm, sometimes
referred to as \emph{iteration along the final sequence} (see
e.g.~\cite{AdamekBHKMS12}). In the category $\Set$, the formulation of
the algorithm is particularly simple and quite similar to classical
results such as~\cite{HOPCROFT1971189}. In
Algorithm~\ref{fig:coalgebraic-partition-refinement}, we illustrate
the algorithm. For $q$ a function, we let $\mathit{ker}(q)$ be its
\emph{kernel}, namely the partition of the domain induced by~$q$.

\begin{algorithm}
    \textbf{function} \texttt{minimizeRec} ($\eta : X \to \calF X$,$q :
    X \to \calF^k \{*\}$) 

    \quad \textbf{let} $q' = (\calF q) \circ \eta$

    \quad \textbf{if} ($\mathit{ker}(q) = \mathit{ker}(q')$)
    \textbf{then} 

    \qquad \textbf{return}~$q$ 
    
    \quad \textbf{else} 

    \qquad \textbf{return} \texttt{minimizeRec($\eta$,$q'$)}

    \textbf{function} \texttt{minimize}($\eta : X \to \calF X$)

    \quad \textbf{return} \texttt{minimizeRec($\mathit{\eta}$,$\lambda
      x . * $)} 
\caption{The coalgebraic partition refinement algorithm in $\Set$.}
\label{fig:coalgebraic-partition-refinement}
\label{alg:minimize}
\end{algorithm}

\noindent The function \texttt{minimize} accepts as input a $\calF$-coalgebra $\eta$ and returns the bisimilarity quotient of its carrier set. Minimization is implemented via the function \texttt{minimizeRec}, which accepts as input the coalgebra map $\eta$, and a surjective function $q$, whose kernel is a partition of the carrier set. Such function is initialised to $\lambda x . *$, where $*$ is the only element of the singleton $\{ * \}$, that is, the algorithm starts by assuming that all the elements of the carrier are bisimilar. The algorithm then applies one \emph{refinement} step, by applying the functor $\calF$ to $q$ and composing the result with $\eta$; this yields a new function $q' = (\calF q) \circ \eta$. Note that such function is ``almost always'' surjective\footnote{Function $q'$ may actually fail to be surjective when the carrier is empty. All $\Set$ functors preserve epimorphisms from non-empty sets. If the carrier is empty then so are both $q$ and $q'$, therefore the algorithm terminates in one step.}. Intuitively, at each iteration,  function $q'$ is obtained from $q$ by splitting the partitions induced by $q$ according to the ``observations'' that are obtained ``in one more step'' from $\eta$.     If $q$ and $(\calF q) \circ \eta$ represent the same partition -- that is, the two functions have the same kernel -- the algorithm returns $q$, which denotes the coarsest partition that does not identify non-bisimilar states; otherwise, the procedure is iterated. Termination is guaranteed on finite models as for each finite model, there are  only a finite number of partitions.

Algorithm \ref{alg:minimize}, instantiated using the functor $\calT$ of
Section~\ref{sec:CMCoalgebraically}, has been implemented in a multi-platform
tool called \textbf{MiniLogicA}, which is available for the major operating
systems at \url{https://github.com/vincenzoml/MiniLogicA} under a permissive
open source license. The tool is implemented in the language F\#\footnote{See
\url{https://fsharp.org}.}. The tool can load arbitrary (possibly directed)
graphs, with explicit labelling of nodes with atomic propositions. Such labelled
graphs are interpreted as quasi-discrete closure models. Additionally, the tool
can load digital images, that are interpreted as symmetric, grid-shaped graphs,
therefore as quasi-discrete models. More precisely, each pixel is interpreted as
a node of a graph, and atomic propositions are derived from RGB colour
components, whereas connectivity is derived from the union of the relations
between pixels ``have an edge in common'' and ``have a vertex in common'' (in 2
dimensions, this corresponds to the classical \emph{orthodiagonal connectivity},
that is, each non-border pixel is connected to 8 other pixels). The tool
currently supports 2D images, but support of the same formats as \voxlogica{} is
planned. The tool outputs graphs in the graphviz format\footnote{See
\url{https://www.graphviz.org}.}, with labels using atomic propositions, or
colours according to the pixel colours of the input in the case of images.

\section{Extension to Generic Closure Spaces}
\label{sec:generalization}

In this section we provide first a set-theoretic and next a
coalgebraic notion of bisimilarity for closure models that aren't
necessarily quasi-discrete, and we prove that both coincide with
logical equivalence as induced by an infinitary modal logic, here
called \INFLOG, which, compared to $\SLCS$, does not include
reachability operators. Instead, $\lnear \form$~is the basic
operator---endowed with the classical closure semantics. Also,
infinitary conjunction is allowed.

\begin{equation}
  \label{def:infsyntax}
  \form  ::=  p  \msep \lneg \, \form \msep \textstyle{\linfand{i \in I}}
  \: \form_i \msep \lnear \form 
\end{equation}
where $p \in AP$, and $I$~is a set. 

For a closure model $((X,\closurename{}),\calV)$ we have, as expected,
$\model, x \models p$ $\Leftrightarrow$ $x \in \calV p$, $\model, x
\models \lnear \form$ $\Leftrightarrow$ $x \in \closure{} {\ZET{y
    \!}{\! \model, y \models \form}}$, $\model,x \models \lnot \form$
  $\Leftrightarrow$ $\model,x \not \models \form$, and finally
  $\model,x \models \linfand{i \in I} \form_i$ $\Leftrightarrow$ $\model,x
  \models \form_i$ for all~$i \in I$.

\begin{definition}
  \label{def:LEQ}
  The equivalence relation $\inflogeq^{\model} \subseteq X \times X$
  is defined by $x_1 \inflogeq^{\model} x_2$ iff for all
  \INFLOG\ formulas $\form$ we have that $\model, x_1 \models \form
  \Leftrightarrow \model, x_2 \models \form$.
\end{definition}

\noindent
In the sequel, $\inflogeq^{\model}$ will be often abbreviated
by~$\inflogeq$.  The following definition extends the notion of
bisimulation for topological spaces (see~\cite{BeB07}, for example) to
general closure models.

\begin{definition}
  \label{def:GCMB}
  Given a closure model $\model = ((X,\closurename{}), \pevalname)$, a
  non-empty equivalence relation $B \subseteq X \times X$ is called a
  \emph{bisimulation relation} if, for all $x_1,x_2 \in X$ such that
  $(x_1,x_2) \in B$, the next two conditions are satisfied.
  \begin{enumerate}
  \item $\invpeval{x_1} = \invpeval{x_2}$.
  \item For all $X_1 \subseteq X$ such that $x_1 \in
    \interior{}{X_1}$, there is $X_2 \subseteq X$ such that $x_2 \in
    \interior{}{X_2}$ and, reversely, for all $x_2' \in X_2$ there
    exists $x_1' \in X_1$ such that $(x_1',x_2') \in B$.
  \end{enumerate}
  We say that $x_1$ and~$x_2$ are {\em bisimilar}, notation $x_1
  \gcmbisim x_2$, if there exists a bisimulation relation~$B$ for~$X$
  such that $(x_1,x_2) \in B$.
\end{definition}

\begin{remark}
  The definition of~\cite{BeB07} (given for topological models)
  differs from the definition above in that the sets~$X_{i}$ are
  required to be \emph{open} neighbourhoods. In topology, a subset~$S$
  is an open neighbourhood of a point~$x$ whenever there is an open
  set~$O$ with $x \in O \subseteq S$, or, equivalently, $x \in
  \interior{}{S}$. Therefore, in a topological space,
  Definition~\ref{def:GCMB} coincides with the one
  of~\cite{BeB07}. However, in general closure models, this is
  different. For instance consider a graph with three nodes $a,b,c$,
  and relation $R = \{(a,b),(b,c)\}$. Let $S = \{b,c\}$. We have
  $\interior{}{S} = S \setminus \closure{}{\overline S} = S \setminus
  \closure{}{\{a\}} = S \setminus \{a,b\} = \{c\} \neq S$, therefore
  $S$~is not open (see also~\cite{CLLM16}, Remark 2.19).  Similarly,
  $\{c\}$~is not open as $\interior{}{\{c\}} = \emptyset$. Thus
  $S$~does not include an open set containing~$c$. However, $c \in
  \interior{}{S}$.
\end{remark}

\noindent 
Below, we show that logical equivalence in~\INFLOG coincides with
bisimilarity from Definition~\ref{def:GCMB}.
The following two lemmas are required.

\begin{lemma}
  \label{lem:closure-cap-interior}
  For all $X_1,X_2 \subseteq X$, if $(\closure{}{X_1}) \cap
  (\interior{}{X_2}) \neq \emptyset$ then $X_1 \cap X_2 \neq
  \emptyset$.
\end{lemma}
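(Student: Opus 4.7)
The plan is to prove the contrapositive: assume $X_1 \cap X_2 = \emptyset$ and deduce $(\closure{}{X_1}) \cap (\interior{}{X_2}) = \emptyset$. This direction is more natural because both $\closurename{}$ and $\interiorname{}$ are well-behaved under set inclusion (the former by the axioms of Definition~\ref{def:ClosureSpaces}, the latter by duality).

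The first preparatory step is to observe that $\closurename{}$ is monotone: if $A \subseteq B$, then $B = A \cup B$, so by axiom~3 we have $\closure{}{B} = \closure{}{A} \cup \closure{}{B}$, hence $\closure{}{A} \subseteq \closure{}{B}$. Next, I would rewrite $\interiorname{}$ via its defining identity $\interior{}{X_2} = \overline{\closure{}{\overline{X_2}}}$, so the goal becomes showing $(\closure{}{X_1}) \cap \overline{\closure{}{\overline{X_2}}} = \emptyset$, i.e.\ $\closure{}{X_1} \subseteq \closure{}{\overline{X_2}}$.

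Now I invoke the hypothesis $X_1 \cap X_2 = \emptyset$, which is equivalent to $X_1 \subseteq \overline{X_2}$. Applying the monotonicity of closure established above yields $\closure{}{X_1} \subseteq \closure{}{\overline{X_2}}$, which is exactly what I needed. Taking complements gives $\closure{}{X_1} \cap (\interior{}{X_2}) = \emptyset$, completing the contrapositive.

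I do not anticipate any real obstacle here: the argument relies only on set-theoretic complementation, the definition of $\interiorname{}$ as the dual of $\closurename{}$, and monotonicity of $\closurename{}$ derived from axiom~3. In particular, idempotence of closure is not needed, so the result holds at the level of generic closure spaces and not just topological ones.
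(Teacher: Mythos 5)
Your proof is correct and follows essentially the same route as the paper: both argue the contrapositive, pass from $X_1 \cap X_2 = \emptyset$ to $X_1 \subseteq \overline{X_2}$, apply monotonicity of closure, and conclude via the identity $\interior{}{X_2} = \overline{\closure{}{(\overline{X_2})}}$. Your explicit derivation of monotonicity from axiom~3 is a small addition the paper leaves implicit, but it does not change the argument.
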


\begin{proof}
  We prove that $X_1 \cap X_2 = \emptyset$ implies $(\closure{}{X_1})
  \cap (\interior{}{X_2}) = \emptyset$. Suppose $X_1 \cap X_2 =
  \emptyset$. Then $X_1 \subseteq \overline X_2$, thus
  $\closure{}{X_1} \subseteq \closure{}{\overline X_2}$. Since
  $\interior{}{X_2} = \overline{\closure{}{(\overline X_2)}}$, it
  follows that $(\interior{}{X_2}) \cap (\closure{}{X_1}) =
  \emptyset$. \qed
\end{proof}

\begin{lemma}
  \label{lem:interior-implies-closure}
  For all $S \subseteq X$ and $y \in X$, if for all $C \subseteq X$ it
  holds that $y \in \interior{}{C}$ implies $C \cap S \neq \emptyset$,
  then $y \in \closure{}{S}$.
\end{lemma}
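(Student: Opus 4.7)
The plan is to argue by contraposition: assume $y \notin \closure{}{S}$ and exhibit a single set $C \subseteq X$ such that $y \in \interior{}{C}$ yet $C \cap S = \emptyset$, which contradicts the hypothesis.

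The natural candidate is $C = \overline{S}$, since it trivially satisfies $C \cap S = \emptyset$. It remains to verify that $y \in \interior{}{\overline{S}}$. I would unfold the definition of interior given earlier in the preliminaries, namely $\interior{}{A} = \overline{\closure{}{\overline{A}}}$. Applying this with $A = \overline{S}$ gives
\begin{equation*}
  \interior{}{\overline{S}} \;=\; \overline{\closure{}{\overline{\overline{S}}}} \;=\; \overline{\closure{}{S}}.
\end{equation*}
Hence $y \notin \closure{}{S}$ is precisely the statement $y \in \interior{}{\overline{S}}$, so setting $C := \overline{S}$ furnishes a $C$ violating the hypothesis.

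The main (and only) subtlety is purely notational: one must be careful to invoke the definition of interior as the complement of the closure of the complement, rather than, say, an ``open neighbourhood'' characterisation, since, as noted in the remark preceding the lemma, in general closure spaces these notions do not coincide. Once that identity is in hand, the argument is a one-line contraposition.
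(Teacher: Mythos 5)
Your proof is correct and is essentially the paper's own argument: both take $C=\overline{S}$ and use the identity $\interior{}{\overline{S}} = \overline{\closure{}{S}}$ to turn the assumption $y \notin \closure{}{S}$ into a violation of the hypothesis. The only difference is presentational (contraposition versus an explicit contradiction), which is immaterial.
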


\begin{proof}
  By contradiction.  Suppose $y \notin \closure{}{S}$ under the
  hypothesis of the lemma. Then $y \in \overline{\closure{}{S}}$,
  i.e.\ $y \in \overline{\overline{\interior{}{(\overline{S})}}} =
  \interior{}{(\overline{S})}$. But then, by the hypothesis, taking $C
  = \overline{S}$ since $\overline{S} \subseteq X$, we would have that
  $\overline{S} \cap S \neq \emptyset$. \qed
\end{proof}

\noindent
With the two lemmas in place, we are in a position to prove the next
results. 

\begin{theorem}
  Given a closure model $\model = ((X,\closurename{}), \pevalname)$, any
  bisimulation~$B$ according to Definition~\ref{def:GCMB} is included
  in~$\inflogeq^\model$.
\end{theorem}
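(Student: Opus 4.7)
The plan is to proceed by structural induction on \INFLOG{} formulas, showing that for every formula $\form$ and every pair $(x_1,x_2) \in B$, we have $\model, x_1 \models \form \Leftrightarrow \model, x_2 \models \form$. The base case for atomic propositions $p$ is immediate from condition 1 of Definition~\ref{def:GCMB}: since $\invpeval{x_1} = \invpeval{x_2}$, we have $x_1 \in \peval p$ iff $x_2 \in \peval p$. The cases for $\lneg \form$ and $\linfand{i \in I} \form_i$ are routine applications of the induction hypothesis, requiring nothing about the structure of $B$ beyond the fact that it relates $x_1$ to $x_2$.

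The main obstacle is the inductive case for $\lnear \form$, where the interplay between closure and the bisimulation condition must be handled carefully. By symmetry it suffices to show that $x_1 \in \closure{}{S}$ implies $x_2 \in \closure{}{S}$, where $S = \ZET{y \in X}{\model, y \models \form}$. The strategy will be to invoke Lemma~\ref{lem:interior-implies-closure} applied at~$x_2$: it is enough to show that for every $C \subseteq X$ with $x_2 \in \interior{}{C}$, the intersection $C \cap S$ is non-empty.

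Fix such~$C$. Here I exploit that $B$ is an equivalence relation, hence symmetric, so $(x_2,x_1) \in B$. Applying condition 2 of Definition~\ref{def:GCMB} to the pair $(x_2,x_1)$ with the neighbourhood~$C$ of~$x_2$ yields some $D \subseteq X$ with $x_1 \in \interior{}{D}$ such that every $x_1' \in D$ admits an $x_2' \in C$ with $(x_2',x_1') \in B$, equivalently (by symmetry of~$B$) $(x_1',x_2') \in B$. Now combine this with $x_1 \in \closure{}{S}$: since $x_1 \in (\closure{}{S}) \cap (\interior{}{D})$, Lemma~\ref{lem:closure-cap-interior} gives $S \cap D \neq \emptyset$. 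Pick $x_1' \in S \cap D$; by construction there exists $x_2' \in C$ with $(x_1',x_2') \in B$, and by the induction hypothesis $x_2' \models \form$, so $x_2' \in S \cap C$. This establishes $C \cap S \neq \emptyset$, and Lemma~\ref{lem:interior-implies-closure} concludes $x_2 \in \closure{}{S}$, as desired.

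The delicate step, and the one I would be most careful about writing up, is the symmetric application of condition 2, because the condition itself is formulated asymmetrically (the ``reverse'' transfer of points from $X_2$ back to $X_1$), and one must rely on $B$ being an equivalence relation—rather than merely a bisimulation in some back-and-forth sense—to flip it. Once this is pinned down, everything else is bookkeeping with the two lemmas and the induction hypothesis.
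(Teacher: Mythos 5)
Your proof is correct, and in the key case it takes a genuinely different route from the paper's. The paper handles $\form = \lnear\form'$ by contradiction: assuming $x \not\models \lnear\form'$ and $y \models \lnear\form'$, it notes that $x \in \interior{}{(\overline F)}$ for $F$ the extension of $\form'$, applies the transfer condition of Definition~\ref{def:GCMB} to this \emph{single, canonical} neighbourhood of the non-satisfying point, and derives a contradiction using only Lemma~\ref{lem:closure-cap-interior}. You instead argue directly: you quantify over \emph{all} sets $C$ with $x_2 \in \interior{}{C}$, show each meets $S$ by applying the transfer condition at the flipped pair $(x_2,x_1)$ together with Lemma~\ref{lem:closure-cap-interior}, and then invoke Lemma~\ref{lem:interior-implies-closure} to conclude $x_2 \in \closure{}{S}$ --- a lemma the paper needs only for the converse direction, Theorem~\ref{theo:inflogeq:bisim}. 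Both arguments hinge on the symmetry of $B$: the paper uses it implicitly in its ``without loss of generality'', you use it explicitly to apply the asymmetrically stated condition~2 with the roles of $x_1$ and $x_2$ exchanged, and your remark that this step deserves care is well taken (it applies equally to the paper's WLOG). The trade-off is that the paper's version is more economical (one auxiliary lemma, one witness neighbourhood), while yours is a direct proof that makes the duality between the two auxiliary lemmas, and hence between this theorem and its converse, more visible.
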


\begin{proof}
  By induction on the structure of~$\form$.
  $\form = \lnear \form'$. Suppose $B$ is a bisimulation, $(x,y) \in
  B$ and, without loss of generality, $\model,x \not \models \form$
  and $\model, y \models \form$. Let $F \subseteq X$ be the set of
  points satisfying~$\form'$. We have $y \in \closure{}{F}$ and $x \in
  \overline{\closure{}{F}} =
  \overline{\overline{\interior{}{(\overline F)}}} =
  \interior{}{(\overline F)}$. Let $X_1 = \overline F$. By $x \in
  \interior{}{X_1}$, let $X_2$ be chosen according to
  Definition~\ref{def:GCMB}, with $y \in \interior{}{X_2}$. By
  Lemma~\ref{lem:closure-cap-interior} we have $F \cap X_2 \neq
  \emptyset$, since $y \in (\closure{}{F}) \cap
  (\interior{}{X_2})$. Let $y' \in F \cap X_2$. We have $\model, y'
  \models \Phi'$, since $y' \in F$. Since $B$ is a bisimulation
  according to Definition~\ref{def:GCMB}, there is $x' \in X_1$ with
  $(x',y')\in B$. By the induction hypothesis $\model, x' \models
  \form'$, thus $x' \in F$, which contradicts $x' \in X_1 = \overline
  F$.
\end{proof}

\begin{theorem}\label{theo:inflogeq:bisim}
  Given model $\model=((X,\closurename{}), \pevalname)$,
  $\inflogeq^\model$ is a bisimulation according to
  Definition~\ref{def:GCMB}.
\end{theorem}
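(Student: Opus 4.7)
The plan is to verify the two conditions of Definition~\ref{def:GCMB} for $B = \inflogeq^\model$, which is trivially a non-empty equivalence relation. Condition~1 is immediate since every $p \in AP$ is a formula of $\INFLOG$, so $x_1 \inflogeq^\model x_2$ forces $\invpeval{x_1} = \invpeval{x_2}$. For condition~2, given $x_1 \inflogeq^\model x_2$ and $X_1 \subseteq X$ with $x_1 \in \interior{}{X_1}$, I take the $\inflogeq^\model$-saturation $X_2 = \{z \in X : \exists y \in X_1,\, y \inflogeq^\model z\}$, which makes the ``reverse'' clause of the definition automatic. It remains to show $x_2 \in \interior{}{X_2}$.

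To establish this, I argue by contradiction, assuming $x_2 \in \closure{}{\overline{X_2}}$. Note $x_1 \in X_1$ (since $\interior{}{A} \subseteq A$ holds in any closure space) and so $X_1 \neq \emptyset$. By construction of $X_2$, every $z \in \overline{X_2}$ satisfies $z \not\inflogeq^\model x_1'$ for all $x_1' \in X_1$. For each such pair $(z,x_1')$, pick a distinguishing $\INFLOG$-formula $\form_{z,x_1'}$, using $\lneg$ if necessary to arrange $\model, z \models \form_{z,x_1'}$ and $\model, x_1' \not\models \form_{z,x_1'}$. Set $\form_z = \linfand_{x_1' \in X_1} \form_{z,x_1'}$, and then $\psi = \lneg \linfand_{z \in \overline{X_2}} \lneg \form_z$, i.e.\ the infinitary disjunction of the $\form_z$. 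Both constructions are legitimate $\INFLOG$-formulas since $X_1$ and $\overline{X_2}$ are subsets of the set $X$.

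Let $P = \{w \in X : \model, w \models \psi\}$. By construction, $\overline{X_2} \subseteq P$ (each $z$ satisfies $\form_z$, hence $\psi$), while $P \cap X_1 = \emptyset$ (no $x_1' \in X_1$ satisfies any $\form_z$, as $x_1' \not\models \form_{z,x_1'}$). Hence $P \subseteq \overline{X_1}$, so $\closure{}{P} \subseteq \closure{}{\overline{X_1}}$, and since $x_1 \in \interior{}{X_1} = \overline{\closure{}{\overline{X_1}}}$, we conclude $\model, x_1 \not\models \lnear \psi$. On the other hand, $x_2 \in \closure{}{\overline{X_2}} \subseteq \closure{}{P}$ yields $\model, x_2 \models \lnear \psi$, contradicting $x_1 \inflogeq^\model x_2$.

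The principal obstacle is the double indexing in $\form_z$: a formula built only to distinguish each $z$ from $x_1$ itself would not guarantee $P \cap X_1 = \emptyset$, and without this containment $P \subseteq \overline{X_1}$ fails, so the closure argument ruling out $\model, x_1 \models \lnear \psi$ collapses. It is precisely the availability of infinitary conjunction over arbitrary sets that enables $\form_z$ to rule out every element of $X_1$ simultaneously, which in turn is what makes the non-quasi-discrete, infinitary setting work smoothly with the topological-style bisimulation of Definition~\ref{def:GCMB}.
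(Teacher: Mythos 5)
Your proof is correct, and it takes a mildly but genuinely different route from the paper's. The paper argues by contradiction on the \emph{existence} of a suitable $X_2$: it supposes no such set works, extracts from each candidate $C$ with $x_2 \in \interior{}{C}$ a witness $y_C \in C$ equivalent to nothing in $X_1$, collects these into a set $S$, and then needs Lemma~\ref{lem:interior-implies-closure} to conclude $x_2 \in \closure{}{S}$ (plus a separate degenerate case ruling out that no $C$ with $x_2 \in \interior{}{C}$ exists at all). You instead commit to a canonical witness up front, namely the $\inflogeq^\model$-saturation of $X_1$, which makes the back-transfer clause of Definition~\ref{def:GCMB} hold by construction and reduces everything to $x_2 \in \interior{}{X_2}$; the negation of that is directly $x_2 \in \closure{}{\overline{X_2}}$ by the interior/closure duality, so the auxiliary lemma and the case split both disappear. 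From that point on the two arguments coincide: your $\psi = \lnot\bigwedge_{z}\lnot\form_z$ with $\form_z = \bigwedge_{x_1'}\form_{z,x_1'}$ is exactly the negation of the paper's separating formula $\Phi = \bigwedge_s \lnot \bigwedge_a \Phi_{(a,s)}$ (with $\overline{X_2}$ in the role of $S$), and both conclude via monotonicity of closure and interior that $\lnear\psi$ distinguishes $x_1$ from $x_2$. Your version is slightly more economical and self-contained; the paper's version makes the failure of the bisimulation condition more explicit and isolates the topological content in a reusable lemma. Your closing observation about why the distinguishing formulas must be doubly indexed (over $\overline{X_2}\times X_1$, not just against $x_1$) correctly identifies the point where infinitary conjunction is essential.
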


\begin{proof}
  Suppose $x \inflogeq y$. Let $X_1$ be such that $x \in
  \interior{}{X_1}$. Suppose there is no~$X_2$ respecting the
  conditions of Definition~\ref{def:GCMB}. Then, either there is no $C
  \subseteq X$ such that $y \in \interior{}{C}$ or for each such~$C$
  there is $y_{\mkern1mu C} \in C$ such that $x'\inflogeq y_{\mkern1mu
    C}$ for \emph{no} $x' \in X_1$. In the first case we would have
  that, for all $C \subseteq X$, $y \notin
  \overline{\closure{}{(\overline{C})}}$, i.e.\ $y \in
  \closure{}{(\overline{C})}$. This would imply in turn that $y \in
  \closure{}{(\overline{X})} = \emptyset$, which is absurd. In the
  second case, let $S$ be the set of all the~$y_{\mkern1mu C}$ as
  above. We have $y \in \closure{}{S}$ by
  Lemma~\ref{lem:interior-implies-closure}. For each $a \in X_1$ and
  $s \in S$, $a$ and~$s$ are not logically equivalent: let
  $\form_{(a,s)}$ be a formula such that $\model,a \not \models
  \form_{(a,s)}$ and $\model,s \models \form_{(a,s)}$. Let $\form =
  \bigwedge_{\, s} \lnot \bigwedge_{\, a} \form_{(a,s)}$. We have that
  $\model, x' \models \form$ for all $x' \in X_1 $ and $\model, y'
  \not \models \form$ for all $y' \in S$. To see the latter, observe
  that $\lnot \Phi = \bigvee_{\! s} \bigwedge_{\, a}
  \Phi_{(a,s)}$. For each $a$, each $y' \in S$ satisfies at least
  $\bigwedge_{\, a} \form_{(a,y')}$. Thus, we have a formula~$\Phi$ with
  $X_1 \subseteq F = \{ \, z \in X \mid z \models \Phi \, \}$ and $S
  \subseteq \overline F$. By $x \in \interior{}{X_1}$ and monotonicity
  of interior, we have $x \in \interior{}{F}$, thus $\model, x \models
  \lnot \lnear (\lnot \form)$. On the other hand, by $y \in
  \closure{}{S}$ and monotonicity of closure, we have $y \in
  \closure{}{(\overline F)}$, thus $\model, y \models \lnear (\lnot
  \form)$, contradicting the hypothesis $x \inflogeq y$.  \qed
\end{proof}

\noindent
The characterisation given by Definition~\ref{def:GCMB} has the merit of
extending the existing topological definition to closure spaces. However, in the
setting of this paper it is worthwhile to investigate also a coalgebraic
definition, which we do in the remainder of this section.  Since our main objective is to characterise logical equivalence,
we will not define \emph{frames}, but just \emph{models}, which we will call
\emph{closure coalgebras}.

\begin{definition}\label{def:closureCoalgebra}
  A \emph{closure coalgebra} is a coalgebra for the \emph{closure
    functor} $\closurefunctor X = \pws{(AP)} \times \pws{(\pws{X})}$,
  where $\pws{-}$ is the covariant powerset functor. The action of the functor on arrows maps $f : X \to Y$ to 
  $\closurefunctor f: (\pws{(AP)} \times \pws{(\pws{X})}) \to (\pws{(AP)} \times \pws{(\pws{Y})})$ such that 
  $\closurefunctor f (v, S)=(v,\{(\pws f) A | A \in S\})$.
\end{definition}

\noindent
We note in passing
that a general coalgebraic treatment of modal logics -- even the non-normal ones
-- can be done starting from \emph{neighbourhood frames}~\cite{HKP09}, employing
coalgebras for the functor $2^{2^{-}}$ (where $2^{-}$ is the contravariant power
set functor). Our definition is similar, but in contrast we employ the
\emph{covariant} powerset functor~$\pws{}$, which we find particularly
profitable, as the obtained theory is akin\footnote{In order to make Definition~\ref{def:closureCoalgebra} a proper generalisation of Definition~\ref{def:simplerfunctor}, one needs to identify the correct notion of \emph{path} for closure coalgebras (more on this in Section~\ref{sec:discussion}).} to the developments of Section~\ref{sec:CMCoalgebraically}.
The remainder of this section is aimed at determining a correspondence
between closure models, closure coalgebras, and their quotients.

\begin{definition}
  \label{def:closureCoalgebraOfAModel}
  Given a closure model $((X,\closurename{}),\pevalname)$, define the
  coalgebra $\eta : X \to \closurefunctor X$ by $\eta(x) =
  ( \mkern1mu \invpevalname{}x,\{ \, A \subseteq X \mid x \in
  \closurename{} A \mkern1mu \})$.
\end{definition}

\noindent It is straightforward to check that if $f : \calX \to \calY$ is a
$\closurefunctor$-coalgebra homomorphism, and both $\calX$ and $\calY$ have been
obtained from closure models using
Definition~\ref{def:closureCoalgebraOfAModel}, then $f$ is a continuous function
in the sense of Definition~\ref{def:ContinuousFunction}. From now on, we shall
not rely on the existence of a final coalgebra, as this is not the case for the
(unbounded) powerset functor. However, we can employ maximal quotients instead,
for the purpose of this paper. Therefore, we will redefine behavioural
equivalence from Section~\ref{sec:Preliminaries}.

\begin{definition}
  \label{def:behaviouralEquivalenceNew}
  Given a set functor~$\calF$ and a $\calF$-coalgebra $\calX =
  (X,\alpha)$ with $\alpha : X \to \calF X$, the relation
  $\beq{\calF}{\calX}$, defined by 
  \smallskip \\
  \centerline{%
  \begin{math}
    x \beq{\calF}{\calX} y
    \Leftrightarrow 
    \exists \calY = (Y, \beta) . \exists f : \calX \to \calY . f(x) = f(y)
  \end{math}}
  \smallskip
  is called \emph{behavioural equivalence}.
\end{definition}

\noindent
In Definition~\ref{def:behaviouralEquivalenceNew} we use the word
\emph{equivalence}, but this should not be taken for granted, of
course. Clearly, $\beq{}{}$~is reflexive and symmetric, but
transitivity is in principle to be shown. However (see
\cite{Hughes2001}, Theorem 1.2.4) pushouts in a $\Set$-based category
of coalgebras exist and are computed in the base category, which
immediately yields transitivity of~$\beq{}{}$. It is also obvious that
when a final coalgebra exists, $\beq{}{}$~coincides with the kernel of
the final morphism from~$\calX$.

\begin{lemma}
  \label{lem:quotientsOfClosureCoalgebras}
Consider a model $\model = ((X,\closurename{X}),\pevalname)$ and $\calX =
(X,\eta)$ as in Definition \ref{def:closureCoalgebraOfAModel}. Let $\calY =
(Y,\theta)$ be a $\closurefunctor$-coalgebra. Let $f : \calX \twoheadrightarrow
\calY$ be a surjective coalgebra homomorphism. Define $\closure{Y}{(B \subseteq
Y)} = \{ \, y \in Y \mid B \in (\theta y)_2 \, \}$. Then $(Y,\closurename{Y})$
is a closure space.
\end{lemma}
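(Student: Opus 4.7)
The plan is to verify each of the three axioms of Definition~\ref{def:ClosureSpaces} for $(Y, \closurename{Y})$, using the homomorphism property of $f$ to translate questions about $\closurename{Y}$ back into questions about $\closurename{X}$.

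The first step, which I would do up front since it is used throughout, is to derive an explicit description of $(\theta\, y)_2$. Since $f$ is a $\closurefunctor$-homomorphism, $\theta \compose f = \closurefunctor f \compose \eta$. Unfolding the action of $\closurefunctor$ on arrows and the definition of $\eta$, this gives, for any $x \in X$ with $f(x) = y$,
\[
  \theta(y) \;=\; \bigl(\, \invpeval{x},\ \SET{\, f(A) \mid A \subseteq X,\ x \in \closure{X}{A}\,}\,\bigr).
\]
So $B \in (\theta\, y)_2$ iff there exists some $A \subseteq X$ with $f(A) = B$ and $x \in \closure{X}{A}$, for some (equivalently, any) preimage $x$ of $y$. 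Surjectivity of $f$ guarantees that preimages exist. I would also observe, as a standing remark, that axioms 2 and 3 together imply that $\closurename{X}$ is monotone.

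For axiom 1, I would note that if $\emptyset \in (\theta\, y)_2$, the witness $A$ would have to satisfy $f(A) = \emptyset$, i.e., $A = \emptyset$, and then $x \in \closure{X}{\emptyset} = \emptyset$, a contradiction. For axiom 2, given $y \in B$, I would take $A = f^{-1}(B)$ as witness: surjectivity of $f$ gives $f(f^{-1}(B)) = B$, and $y \in B$ means any preimage $x$ of $y$ lies in $f^{-1}(B) \subseteq \closure{X}{f^{-1}(B)}$, so $B \in (\theta\, y)_2$.

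The main work, and the step I expect to be the most delicate, is axiom 3. The inclusion $\closure{Y}{B_1} \cup \closure{Y}{B_2} \subseteq \closure{Y}{B_1 \cup B_2}$ is straightforward: from a witness $A$ for $B_i \in (\theta\, y)_2$ I would enlarge it to $A \cup f^{-1}(B_{3-i})$ and use monotonicity of $\closurename{X}$ to conclude $B_1 \cup B_2 \in (\theta\, y)_2$. The opposite inclusion is the key step. Given a witness $A \subseteq X$ with $f(A) = B_1 \cup B_2$ and $x \in \closure{X}{A}$, I would set $A_i = A \cap f^{-1}(B_i)$ and verify $A = A_1 \cup A_2$, so that axiom 3 for $\closurename{X}$ yields $x \in \closure{X}{A_1}$ or $x \in \closure{X}{A_2}$. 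The subtlety is that $f(A_i)$ need not equal $B_i$, only be contained in it; I would resolve this by replacing $A_i$ with $f^{-1}(B_i) \supseteq A_i$, invoking monotonicity of $\closurename{X}$ and surjectivity of $f$ to get $x \in \closure{X}{f^{-1}(B_i)}$ and $f(f^{-1}(B_i)) = B_i$, hence $B_i \in (\theta\, y)_2$ and $y \in \closure{Y}{B_i}$. This completes the verification. \qed
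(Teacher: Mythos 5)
Your proof is correct; every step checks out, including the delicate $\subseteq$ direction of the union axiom. The route, however, differs from the paper's in its decomposition. The paper first proves Lemma~\ref{lem:closure-and-image} ($\closure{X}{A} = f^{-1}(\closure{Y}{(\pws f)A})$ for any homomorphism $f$) and then Corollary~\ref{cor:closure-and-epi-image}, which for surjective $f$ gives the closed form $\closure{Y}{B} = (\pws f)(\closure{X}{f^{-1}B})$; with that identity in hand, all three axioms become one-line computations, since $f^{-1}$ and $\pws f$ both preserve unions and $\pws f$ preserves $\emptyset$. You instead work directly with the existential description $B \in (\theta\, y)_2 \iff \exists A.\, (\pws f)A = B \wedge x \in \closure{X}{A}$, and re-normalise witnesses to $f^{-1}B$ case by case --- which is exactly where the extra work appears in your axiom~3, with the splitting $A_i = A \cap f^{-1}(B_i)$ followed by the enlargement to $f^{-1}(B_i)$. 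The paper's factoring buys shorter verifications and a reusable identity (the corollary is invoked again in the proof of Theorem~\ref{thm:closureCoalgebraLogicalEquivalence}); your version is more self-contained and makes explicit why an arbitrary witness can always be replaced by the canonical one, which is the content the paper's corollary packages away. Your standing remark that monotonicity of $\closurename{X}$ follows from axioms 2 and 3 is needed and correctly noted.
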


\begin{proof}
  See Appendix~\ref{appendix}.
\end{proof}
  
\noindent
The proof of Theorem~\ref{thm:closureCoalgebraLogicalEquivalence} below 
requires the following lemma, whose proof crucially relies on the fact
that $A \subseteq B$ implies $\closurename{} \mkern-2mu A \subseteq
\closurename{}B$.

\begin{lemma}
  \label{lem:closure-of-equivalent-points}
  Let $f$ be the function mapping each element of~$X$ into its
  equivalence class up to~$\inflogeq$. Then it holds that $((x_1
  \inflogeq x_2) \land x_1 \in \closure{} \mkern-2mu A)$ implies $x_2
  \in \closurename{} f^{-1} (\pws f) A$, for all $x_1,x_2 \in X$ and
  $A \subseteq X$.
\end{lemma}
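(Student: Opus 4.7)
The plan is to find a single $\INFLOG$-formula $\Psi$ that defines the set $B = f^{-1}((\pws f)A)$, and then use the semantics of $\lnear$ together with the hypothesis $x_1 \inflogeq x_2$ to transport closure-membership from $x_1$ to $x_2$.

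First, I would unpack $B$. By definition of the quotient map, $f(y) \in (\pws f)A$ means $y \inflogeq a$ for some $a \in A$, so
\[
B \;=\; \{\, y \in X \mid \exists\, a \in A . \; y \inflogeq a \,\} \;=\; \bigcup_{a \in A} [a]_{\inflogeq}.
\]
In particular $A \subseteq B$. Axiom~3 of Definition~\ref{def:ClosureSpaces} gives monotonicity of $\closurename{}$, so from $x_1 \in \closure{}{A}$ I deduce $x_1 \in \closure{}{B}$. This is precisely where the highlighted axiom enters the proof.

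Next, I would show that $B$ is $\INFLOG$-definable. For each $z \in X$, set $T_z = \{\form : \model, z \models \form\}$ (treated up to logical equivalence to keep it a genuine set) and define $\form_z = \linfand{\form \in T_z} \form$, which is itself an $\INFLOG$-formula. I would then argue that $\model, y \models \form_z$ iff $y \inflogeq z$: one direction is immediate from the definition of $\inflogeq$, while for the other, if $\model, y \models \form_z$ but $y \not\inflogeq z$, there is a distinguishing formula $\Theta$, and either $\Theta$ or $\lnot \Theta$ lies in $T_z$, contradicting $\model, y \models \form_z$. Using $\lnot$ and $\linfand{}$ to encode infinitary disjunction, set
\[
\Psi \;=\; \lnot \linfand{a \in A} \lnot \form_a.
\]
Then $\model, y \models \Psi$ iff $y \in B$.

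Finally, by the semantics of $\lnear$, the statement $x_1 \in \closure{}{B}$ is equivalent to $\model, x_1 \models \lnear \Psi$. The assumption $x_1 \inflogeq x_2$ then yields $\model, x_2 \models \lnear \Psi$, i.e.\ $x_2 \in \closure{}{B} = \closure{}{f^{-1}((\pws f)A)}$, which is exactly the claim. The main obstacle I expect is the technical verification that each $\form_z$ genuinely characterises $[z]_{\inflogeq}$: one must treat $T_z$ modulo logical equivalence (or restrict to a sufficiently large set of representatives) so that the infinitary conjunction is a well-formed $\INFLOG$-formula. Once that is in place, the rest is a direct application of axiom~3 and the defining properties of $\lnear$ and $\inflogeq$.
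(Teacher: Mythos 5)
Your proposal is correct and follows essentially the same route as the paper: both define an $\INFLOG$-formula whose extension is exactly $f^{-1}((\pws f)A)$ via characteristic formulas of the $\inflogeq$-classes, use $A \subseteq f^{-1}((\pws f)A)$ and monotonicity of closure to get $\model,x_1 \models \lnear\Psi$, and then transfer this to $x_2$ by logical equivalence. The only cosmetic differences are that the paper writes the defining formula as $\bigwedge_{z \notin f^{-1}(\pws f)A}\lnot\chi_z$ rather than your negated conjunction over $a \in A$, and it builds $\chi_z$ from one distinguishing formula per \emph{other equivalence class} (a set-indexed conjunction), which sidesteps the size issue you flag for the full theory $T_z$.
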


\begin{proof}
  See Appendix~\ref{appendix}.
\end{proof}
  
\noindent
With Lemma~\ref{lem:quotientsOfClosureCoalgebras} and
Lemma~\ref{lem:closure-of-equivalent-points} available, we arrive at
the following result.

\begin{theorem}
  \label{thm:closureCoalgebraLogicalEquivalence}
  Consider a closure model $\model = ((X,\closurename{}),\pevalname)$
  and $\calX = (X,\eta)$, with~$\eta$ as in
  Definition~\ref{def:closureCoalgebraOfAModel}. It holds that the
  relations $\inflogeq^{\model}$ and $\beq{\closurefunctor}{\calX}$
  coincide.
\end{theorem}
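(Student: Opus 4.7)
The proof splits into the two inclusions of the claimed equality of relations.

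\textbf{From behavioural equivalence to logical equivalence.} Suppose $x_1 \beq{\closurefunctor}{\calX} x_2$, witnessed by a closure coalgebra $\calY = (Y,\theta)$ and a homomorphism $f : \calX \to \calY$ with $f(x_1) = f(x_2)$. By image factorisation in $\Set$ (lifted to coalgebras) we may assume $f$ to be surjective onto some sub-coalgebra of~$\calY$, still denoted $\calY = (Y,\theta)$. Using Lemma~\ref{lem:quotientsOfClosureCoalgebras}, endow~$Y$ with the closure operator $\closurename{Y} B = \{\, y \in Y \mid B \in (\theta y)_2 \,\}$, and define a valuation $\pevalname_Y$ on~$Y$ by $\pevalname_Y(p) = \{\, y \in Y \mid p \in (\theta y)_1 \,\}$. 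This gives a closure model $\model_Y$. I would then prove by induction on $\form \in \INFLOG$ that $\model, x \models \form$ iff $\model_Y, f(x) \models \form$, applied at $x = x_1$ and $x = x_2$. The atomic, Boolean and conjunction cases are immediate from the homomorphism condition on first components and from the inductive hypothesis. The interesting case is $\form = \lnear \form'$: letting $F = \{\, z \in X \mid \model, z \models \form' \,\}$, the inductive hypothesis gives $F = f^{-1}(f(F))$ and $f(F) = \{\, w \in Y \mid \model_Y, w \models \form' \,\}$. Using that $\theta \circ f = \closurefunctor f \circ \eta$, the condition $f(F) \in (\theta(fx))_2$ is equivalent to the existence of some $A$ with $x \in \closurename{} A$ and $f(A) = f(F)$; the reverse implication here uses $A \subseteq f^{-1}(f(A)) = f^{-1}(f(F)) = F$ together with monotonicity of $\closurename{}$, thus reducing to $x \in \closurename{} F$, that is, $\model, x \models \lnear \form'$.

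\textbf{From logical equivalence to behavioural equivalence.} Suppose $x_1 \inflogeq^{\model} x_2$. I would construct the quotient closure coalgebra on $Y = X/\!\!\inflogeq$, with canonical surjection $f : X \twoheadrightarrow Y$, and define the coalgebra map $\theta : Y \to \closurefunctor Y$ by $\theta([x]) = \closurefunctor f(\eta x)$, i.e.\ by forcing $f$ to be a homomorphism. The whole difficulty is well-definedness: if $x_1 \inflogeq x_2$ we must show $\closurefunctor f(\eta x_1) = \closurefunctor f(\eta x_2)$. The first component coincides because atomic propositions are formulas of \INFLOG. For the second component, assume $x_1 \in \closurename{} A$; by Lemma~\ref{lem:closure-of-equivalent-points}, $x_2 \in \closurename{} f^{-1}(\pws f)(A)$, and since $f$ is surjective we have $(\pws f)(f^{-1}(\pws f)(A)) = (\pws f)(A)$, so $(\pws f)(A)$ also belongs to $\{ (\pws f)(B) \mid x_2 \in \closurename{} B \}$; the other inclusion is symmetric.

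With $\theta$ well-defined, $f : \calX \to (Y,\theta)$ is a $\closurefunctor$-coalgebra homomorphism by construction, and $f(x_1) = [x_1]_{\inflogeq} = [x_2]_{\inflogeq} = f(x_2)$, establishing $x_1 \beq{\closurefunctor}{\calX} x_2$ as required.

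The main technical obstacle is precisely the well-definedness step in the second direction, since neither the first component of $\eta$ nor set-theoretic bookkeeping alone suffices to match the collections $\{ A \subseteq X \mid x_i \in \closurename{} A \}$ under $\pws f$. It is exactly here that Lemma~\ref{lem:closure-of-equivalent-points}, resting on monotonicity of $\closurename{}$ and on the $\inflogeq$-saturation of preimages under the quotient, does the essential work; the $\lnear$-case of the induction in the first direction is its logical counterpart, again exploiting monotonicity of closure on $f$-saturated sets.
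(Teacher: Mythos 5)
Your proof is correct and takes essentially the same route as the paper's: the quotient coalgebra on $X/\!\!\inflogeq$ with well-definedness of $\theta$ secured by Lemma~\ref{lem:closure-of-equivalent-points} in one direction, and epi-mono factorisation plus Lemma~\ref{lem:quotientsOfClosureCoalgebras} followed by an induction whose $\lnear$-case rests on monotonicity of closure in the other. The only cosmetic difference is that your $\lnear$-case packages both implications into a single biconditional using the exact $f$-saturation $F = f^{-1}(f\,F)$, whereas the paper argues the two directions separately via the inclusions $(\pws f)\,S^X_\Phi \subseteq S^Y_\Phi$ and $A \subseteq S^X_\Phi$.
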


\begin{proof}
    First, let us prove that if we have $\model, x_1 \models \Phi
    \Leftrightarrow \model, x_2 \models \Phi$ for all $\form$ and $x_1,
    x_2 \in X$, then there are a coalgebra $\calY = (Y,\theta)$ and a
    coalgebra homomorphism $f :X \to Y$ with $f x_1 = f x_2$.  Let $Y$
    be the set of equivalence classes of~$X$ under
    $\inflogeq^{\model}$.  Let $f$ be the canonical map, mapping each
    $x \in X$ to its equivalence class~$[x]$ with respect
    to~$\inflogeq^{\model}$. Note that each element of~$Y$ is of the
    form~$f \mkern-1mu x$ for some~$x$. Define $tx = \{ \, (\pws f) A
    \mid A \in (\eta x)_2 \, \}$, and let $\theta(f x) = ((\eta
    \mkern1mu x)_1,t(x))$. Observe that such a definition makes $f$ a
    coalgebra homomorphism by construction, that is, $\theta \circ f =
    (\closurefunctor f) \circ \eta$. We need to show that the
    definition of~$\theta$ is independent from the representative~$x$,
    i.e.\ whenever $x_1 \inflogeq x_2$, we have $\theta (f x_1) =
    \theta (f x_2)$.  Indeed, it is obvious that $(\eta x_1)_1 = (\eta
    x_2)_1$, since by logical equivalence $x_1$ and~$x_2$ satisfy the
    same atomic propositions. We thus need to show that $t\, x_1 = t\,
    x_2$. All elements of~$t\, x_1$ are of the form $(\pws f) A$ with
    $x_1 \in \closure{X} \mkern-2mu A$. By
    Lemma~\ref{lem:closure-of-equivalent-points}, we then have $x_2
    \in \closurename{X}(f^{-1}(\pws f)A)$, thus $f^{-1}((\pws f)A) \in
    (\eta x_2)_2$ by definition of~$\eta$.  Therefore, $(\pws f)
    (f^{-1} (\pws f) \mkern-0mu A) \in t x_2$ by definition of~$t$,
    and since $(\pws f) f^{-1} ((\pws f) \mkern-0mu A) = (\pws f)
    \mkern-0mu A$, we obtain $t\, x_1 \subseteq t\, x_2$.  The same
    reasoning can also be used in the other direction, proving that
    the two sets are equal.
    
    Next, we shall prove that if $((X,\closurename X),\pevalname{})$
    is a closure model, with corresponding $\closurefunctor$-coalgebra
    $(X,\eta)$, $(\calY,\theta)$ is a
    $\closurefunctor$-coalgebra, $f : X \to Y$ is a coalgebra
    homomorphism, and $f x_1 = f x_2$, then we have that $\model, x_1
    \models \Phi \iff \model, x_2 \models \Phi$ for all~$\Phi$.  We
    will actually prove a slightly stronger statement, based upon
    Lemma~\ref{lem:quotientsOfClosureCoalgebras}. Given that the
    category of $\closurefunctor$-coalgebras has a epi-mono
    factorization system inherited from~$\Set$ (that is, each
    coalgebra homomorphism can be written as $m\circ e$ where $e$~is
    surjective and $m$~is injective), let us restrict, without loss of
    generality, to the case when $f$ is surjective.  By
    Lemma~\ref{lem:quotientsOfClosureCoalgebras}, there is a closure
    operator $\closurename Y$ such that $\model' = ((Y,\closurename
    Y),\theta_1)$ is a closure model. Therefore, we can also interpret
    formulas on points of~$Y$.  Once this is established, under the
    hypothesis that $f$~is a (surjective) homomorphism, we shall prove
    that for all $x \in X$, we have $\model, x \models \Phi \iff
    \model', fx \models \Phi$ for all~$\Phi$.  This entails the main
    thesis as follows: whenever $f x_1 = f x_2$, for all~$\Phi$, we
    have $\model, x_1 \models \Phi \iff \model',f x_1 \models \Phi
    \iff \model', f x_2 \models \Phi \iff \model, x_2 \models \Phi$.  The proof proceeds by induction
    on the structure of~$\Phi$. The relevant case is that for formulas
    of the form~$\lnear \Phi$. The proof of this case is split into two
    directions.  Below, for any $\Phi$, we denote by~$S^X_\Phi$ the
    set $\{ \, x \in X \mid \model, x \models \Phi \, \}$ and
    with~$S^Y_\Phi$ the set $\{ \, y \in Y \mid \model', y \models
    \Phi \, \}$.
    
    ($\Rightarrow$) If $\model, x \models \lnear \Phi$, then $x \in
    \closurename X S^X_\Phi$ by definition of satisfaction, hence
    $S^X_\Phi \in (\eta \mkern1mu x)_2$ by definition of~$\eta$, thus
    $(\pws f) S^X_\Phi \in (\theta f x)_2$ since $f$~is a coalgebra
    homomorphism, and therefore $f x \in \closurename{Y} ((\pws
    f)S^X_\Phi)$. 
    Now observe that whenever $y \in (\pws f) S^X_\Phi$, we have that
    $y = f \mkern-1mu x$ and $\model, x \models \Phi$ for
    some~$x$. Therefore, by inductive hypothesis, $\model',y \models
    \Phi$. In other words, $(\pws f) S^X\Phi \subseteq S^Y_\Phi$. By
    properties of closure, we have $\closure Y {((\pws f)S^X\Phi)}
    \subseteq \closure Y {S^Y_\Phi}$. Thus, by the above derivation,
    we have $f \mkern-1mu x \in \closure Y {S^Y_\Phi}$, that is
    $\model', f x \models \lnear \Phi$.

    ($\Leftarrow$) If $\model', f x \models \lnear \Phi$, then $f x
    \in \closure{Y} S^Y_\Phi$ by definition of~$S^Y_\Phi$, hence
    $S^Y_\Phi \in (\theta f x)_2$ by definition of~$\theta$, and
    $S^Y_\Phi \in ((\closurefunctor f) (\eta \mkern1mu x))_2$ since $f$~is a
    coalgebra homomorphism. Thus $(\pws f) A = S^Y_\Phi$ for some $A
    \in (\eta \mkern1mu x)_2$, hence $(\pws f) A = S^Y_\Phi$ and $x
    \in \closure X A$, from which it follows that $\model',f x'
    \models \Phi$ for all $x' \in A$. By induction hypothesis,
    $\model,x' \models \Phi$ for all $x' \in A$, hence $A \subseteq
    S^X_\Phi$ and $\closure X \mkern-2mu A \subseteq \closure X
    \mkern-1mu S^X_\Phi$ by 
    monotonicity of closure. It follows that $x \in \closure X \mkern-1mu
    S^X_\Phi$ and $\model, x \models \lnear \Phi$, as was to be shown.
    \qed
\end{proof}

\section{Concluding Remarks}
\label{sec:discussion}

In the context of spatial logics and model checking for closure spaces, we have
developed a coalgebraic definition of spatial bisimilarity, a minimization
algorithm, and a free and open source minimisation tool. Bisimilarity
characterises logical equivalence of a finitary logic with two spatial
reachability operators. Furthermore, we have generalised the definition of
topo-bismilarity from topological spaces to closure spaces, proving that the
more general definition still behaves as topo-bisimilarity, in that it
characterises equivalence of infinitary modal logic. Finally, we have provided a
coalgebraic characterisation in the more general setting. Indeed, one of the
primary motivations for our work is the expectation that the tool can be
refined, and the implementation can be integrated with the state-of-the-art
spatial model checker \voxlogica{}, to improve its efficiency, especially when
spatial structures are procedurally generated (e.g.\ by a graph rewriting
procedure or by a process calculus). However, we can identify a number of
theoretical questions, that have the potential to lead to interesting
developments of the research line of spatial model checking.

One major issue that has not yet been addressed is a treatment of logics with
reachability, in the more general setting of Section~\ref{sec:generalization}.
One major difficulty here is that the notion of a \emph{path} has not been
defined in the literature for closure spaces; in \cite{CLLM16} it was emphasized
(see Section 2.4) that the well-known topological definition does not generalise
in the expected way, as it is not compatible with another fundamental notion,
that of paths in a finite graph. Identifying a general notion of path would
allow us to interpret reachability operators in general closure spaces. Such
development is not a merely theoretical exercise. We expect that there are
classes of non-quasi-discrete spaces, that may be finitely represented. For
instance, variants of the polyhedra-based approach of~\cite{BMMP18} may be
relevant for dealing with Euclidean spaces, and in practical terms, for
reasoning about 3D~\emph{meshes} that are of common use in Computer Graphics.
Also spaces that are the union of different components, based either on
polyhedra or on graphs, can give rise to a \emph{hybrid spatial model checking}
approach in the same vein as the celebrated results on model checking of hybrid
systems in the temporal case (see~\cite{Doyen2018}).

Future work should also be devoted to clarifying the generality of the notion of
a closure coalgebra, and to provide a more thorough comparison of closure
coalgebras and neighbourhood frames. In this context, it is also relevant to
investigate the link between closure coalgebras and the treatment of monotone
logics of~\cite{HK04}, given that monotonicity of closure is used in both
directions for the proof of
Theorem~\ref{thm:closureCoalgebraLogicalEquivalence}.

\bibliographystyle{splncs03}
\bibliography{BisLogCoa}

\vfill \mbox{}
\pagebreak

\appendix
\section{Appendix: additional proofs}
\label{appendix}


{\bf Lemma~\ref{lemma:aux}}. {\em For $h \in \SET{2,3}$, all $s \in X,
  w \in \Omega$ we have that $w \in (\omega \,
  {\fmorph{\calT}{}{s}})_h$ if and only if $(\eta \, s)_h \cap
  \fmorph{\calT}{-1}{w} \not=\emptyset$.}

\begin{proof}
Since $\fmorph{\calT}{}{\cdot}$ is a $\calT$-homomorphism, we can make the following derivation\\\\
\noindent
$
\deriv
(\omega \, \fmorph{\calT}{}{s})_h
\hint{=}{Def. of $\compose$}
((\omega \compose \fmorph{\calT}{}{\cdot})s)_h
\hint{=}{$\fmorph{\calT}{}{\cdot}$ is a $\calT$-homomorphism}
(((\calT \, \fmorph{\calT}{}{\cdot}) \compose \eta) s)_h
\hint{=}{Def. of $\compose$}
((\calT \, \fmorph{\calT}{}{\cdot})(\eta \, s))_h
\hint{=}{Def. of $(\calT\,\fmorph{\calT}{}{\cdot})$}
\fmorph{\calT}{}{(\eta\, s)_h}
$\\\\
So, $w \in (\omega \, \fmorph{\calT}{}{s})_h$ if and only if $w\in \fmorph{\calT}{}{(\eta\, s)_h}$.
But $w\in \fmorph{\calT}{}{(\eta\, s)_h}$ if and only if there exists $s' \in (\eta\, s)_h$ such that $w = \fmorph{\calT}{}{s'}$, i.e. if and only if $s' \in \fmorph{\calT}{-1}{w}$.
So, $w \in (\omega \, \fmorph{\calT}{}{s})_h$ if and only there exists
$s' \in (\eta\, s)_h  \cap \fmorph{\calT}{-1}{w}$. \\
This proves the assert.
\qed
\end{proof}


\noindent
{\bf Lemma~\ref{lemma:slcseqsubcsbisim}}. {\em If quasi-discrete closure model $\model$ is finite-closure and back-finite-closure, then $\slcseq \, \subseteq\,  \altcmbisim$.}

\begin{proof}
We prove that the equivalence relation $\slcseq$ is a bisimulation  by showing that for all 
$(x_1,x_2) \in \slcseq$ the five conditions of Definition~\ref{def:CMB} are satisfied\footnote{Note 
that Definition~\ref{def:CMB} is used for defining $\cmbisim$, but recall that 
$\altcmbisim$ coincides with $\cmbisim$.}:
\begin{enumerate}
\item $(x_1,x_2) \in \slcseq$ implies $\model, x_1 \models p$ if and only if $\model, x_2 \models p$ for 
all $p \in AP$, which implies in turn that $\invpeval x_1 = \invpeval x_2 $;
\item suppose there exists $x_1' \in \forwclosure{x_1}$ such that $(x_1',x_2') \not\in\slcseq$ for all $x_2' \in \forwclosure{x_2}$. Note that  $x_1' \not=x_1$ because $x_2 \in 
\forwclosure{x_2}\not=\emptyset$ and $(x_1,x_2) \in \slcseq$; moreover
$\forwclosure{x_2}$ is finite since $\model$ is finite-closure. Let then 
$\forwclosure{x_2}=\SET{y_1, \ldots , y_n}$, with $(x_1',y_i) \not\in\slcseq$, for $i=1\ldots n$.
This implies that there would exist formulas $\form_1, \ldots , \form_n$ such that 
$\model, x_1' \models \form_i$ and $\model, y_i \not\models \form_i$, for $i=1\ldots n$, by definition of $\slcseq$.
Thus we would have $\model, x_1' \models \bigwedge_{j=1}^{n} \form_j$ and
$\model, y_i \not\models \bigwedge_{j=1}^{n} \form_j$ for $i=1\ldots n$, which would imply
$\model, x_1 \models \lrcs{(\bigwedge_{j=1}^{n} \form_j)}{\lfalse}$ and
$\model, x_2 \not\models \lrcs{(\bigwedge_{j=1}^{n} \form_j)}{\lfalse}$, and this would 
contradict $(x_1,x_2) \in\slcseq$. Thus we get that for all $x_1' \in \forwclosure{x_1}$ there exists
$x_2' \in \forwclosure{x_2}$ such that $(x_1',x_2')\in \slcseq$;
\item symmetric to the case above;
\item suppose  there exists $x_1' \in \backclosure{x_1}$ such that $(x_1',x_2') \not\in\slcseq$ for all $x_2' \in \backclosure{x_2}$. Note that  $x_1' \not=x_1$ because $x_2 \in 
\backclosure{x_2}\not=\emptyset$ and $(x_1,x_2) \in \slcseq$; moreover
$\backclosure{x_2}$ is finite since $\model$ is finite-back-closure. Let then 
$\backclosure{x_2}=\SET{y_1, \ldots , y_n}$, with $(x_1',y_i) \not\in\slcseq$, for $i=1\ldots n$.
This implies that there would exist formulas $\form_1, \ldots , \form_n$ such that 
$\model, x_1' \models \form_i$ and $\model, y_i \not\models \form_i$, for $i=1\ldots n$, by definition of $\slcseq$.
Thus we would have $\model, x_1' \models \bigwedge_{j=1}^{n} \form_j$ and
$\model, y_i \not\models \bigwedge_{j=1}^{n} \form_j$ for $i=1\ldots n$, which would imply
$\model, x_1 \models \lrcd{(\bigwedge_{j=1}^{n} \form_j)}{\lfalse}$ and
$\model, x_2 \not\models \lrcd{(\bigwedge_{j=1}^{n} \form_j)}{\lfalse}$, and this would 
contradict $(x_1,x_2) \in\slcseq$. Thus we get that for all $x_1' \in \backclosure{x_1}$ there exists
$x_2' \in \backclosure{x_2}$ such that $(x_1',x_2')\in \slcseq$;
\item symmetric to the case above. \qed
\end{enumerate}
\end{proof}


\noindent
{\bf Lemma~\ref{lemma:csbisimsubslcseq}}. {\em 
If $\model$ is a quasi-discrete closure model, then $ \altcmbisim\, \subseteq\,  \slcseq$.}

\medskip
\noindent The proof requires the following lemma.

\begin{lemma}\label{lemma:CFBB}
    For all quasi-discrete models $\model = ((X,\closurename{R}), \peval)$,
    formulas $\Phi$ and $\Psi$, and $x,x' \in X$ the following holds:
    \begin{enumerate}
    \item 
    if $x' \in \forwclosure{x}$ and $\model,x' \models \Phi$ then $\model, x \models \lrcs{\Phi}{\Psi}$;
    \item
    if $x' \in \backclosure{x}$ and $\model,x' \models \Phi$ then $\model, x \models \lrcd{\Phi}{\Psi}$.
    \end{enumerate}
\end{lemma}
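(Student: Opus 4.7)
The plan is to exploit the fact that both required formulas only demand a path of length at most one (plus a trivial tail), so the argument splits cleanly into the two cases $x'=x$ and $x' \neq x$, and the two items are perfectly symmetric via the swap $R \leftrightarrow R^{-1}$, so I would prove item~(1) in full and then indicate that item~(2) follows by reversing the path direction.

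For item~(1), assume $x' \in \forwclosure{x}$ and $\model, x' \models \Phi$. By the definition of $\closurename{R}$ we have either $x'=x$ or $x\,R\,x'$. First I would check that a constant path $\pi(n)=x$ for all $n\in\nats$ is a legitimate quasi-discrete path: continuity amounts to $\pi(n{+}1)\in\closure{R}\{\pi(n)\}$, and since $\pi(n)\in\closure{R}\{\pi(n)\}$ by the extensivity axiom, constant paths are always continuous. In the case $x'=x$, choose $\ell=0$ and this constant path; then $\pi(0)=x$, $\model,\pi(\ell)=\model,x=\model,x'\models\Phi$, and the universal premise ``for all $j$ with $0<j<\ell$'' is vacuously satisfied, so $\model, x \models \lrcs{\Phi}{\Psi}$ regardless of~$\Psi$. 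In the case $x\,R\,x'$, take $\ell=1$ and let $\pi(0)=x$, $\pi(n)=x'$ for $n\geqslant 1$; continuity at step~$0$ is witnessed by $x\,R\,x'$ (so $x'\in\closure{R}\{x\}$), and continuity from step~$1$ onward is by extensivity. Again $\pi(0)=x$, $\pi(\ell)=x'\models\Phi$, and the condition on indices strictly between~$0$ and~$1$ is vacuous.

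For item~(2), the construction is strictly dual. If $x'\in\backclosure{x}$ then either $x'=x$ or $x'\,R\,x$. When $x'=x$, the constant path at~$x$ with $\ell=0$ works, giving $\pi(\ell)=x$ and $\pi(0)=x\models\Phi$. When $x'\,R\,x$, take $\ell=1$ and set $\pi(0)=x'$, $\pi(n)=x$ for $n\geqslant 1$; the continuity check at step~$0$ uses $x\in\closure{R}\{x'\}$, which holds because $x'\,R\,x$, and continuity from step~$1$ onward is again extensivity. Then $\pi(\ell)=x$ and $\pi(0)=x'\models\Phi$, with the middle condition vacuous.

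There is no real obstacle: the only thing to get right is to observe that the reflexive condition built into the closure operator (equivalently, the reflexive condition on the relation $R$ when discussing quasi-discrete paths) lets constant paths serve as padding, so the witnessing paths for $\lrcs{\Phi}{\Psi}$ and $\lrcd{\Phi}{\Psi}$ need not visit any intermediate point at all. Consequently the hypothesis on~$\Psi$ never needs to be invoked, which is precisely why this lemma is the key ingredient for reducing reachability witnesses from a bisimilar partner at distance~$\leqslant 1$.
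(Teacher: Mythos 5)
Your proof is correct and takes essentially the same route as the paper's: both exhibit the explicit witnessing path that sits at $x$ (resp.\ $x'$) at time $0$ and jumps to $x'$ (resp.\ $x$) from time $1$ onward, so that the condition on $\Psi$ is vacuous. The only cosmetic differences are that the paper treats $x'=x$ and $x\,R\,x'$ uniformly with a single path and $\ell=1$, verifying continuity directly from the subset-based definition by case analysis on $N\subseteq\nats$, whereas you split the two cases and invoke the (correct, quasi-discreteness-justified) pointwise characterisation $\pi(n{+}1)\in\closure{R}{\SET{\pi(n)}}$ of continuity.
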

    
\begin{proof}(of Lemma~\ref{lemma:CFBB})
    Keeping in mind that $Y \subseteq \closure{}{Y}$ for all $Y \subseteq X$
    \begin{enumerate}
    \item
    take  $\pi: \nats \to X$ with $\pi(0) = x$ and $\pi(j) = x'$ for all $j\in \nats, j>0$;
    $\pi$ is a path since for all $N \subseteq \nats$ we have
    $$
    \pi(\closure{Succ}{N})=
    \left\{
    \begin{array}{l}
    \emptyset, \mbox{ if } N=\emptyset,\\
    \SET{x'}, \mbox{ if } 0\not\in N\not=\emptyset,\\
    \SET{x,x'}, \mbox{ if } 0\in N.
    \end{array}
    \right.
    \quad
    \closure{R}{\pi(N)}=
    \left\{
    \begin{array}{l}
    \emptyset, \mbox{ if } N=\emptyset,\\
    \closure{R}{\SET{x'}}, \mbox{ if } 0\not\in N\not=\emptyset,\\
    \closure{R}{\SET{x,x'}}, \mbox{ if } 0\in N.\\
    \end{array}
    \right.
    $$
    so that $\pi(\closure{Succ}{N}) \subseteq \closure{R}{(\pi N)}$;
    \item
    note that if $x' \in \backclosure{x}$ then $x \in \forwclosure{x'}$ and 
    take  $\pi: \nats \to X$ with $\pi(0) = x'$ and $\pi(j) = x$ for all $j\in \nats, j>0$;
    $\pi$ is a path since for all $N \subseteq \nats$ we have
    $$
    \pi(\closure{Succ}{N})=
    \left\{
    \begin{array}{l}
    \emptyset, \mbox{ if } N=\emptyset,\\
    \SET{x}, \mbox{ if } 0\not\in N\not=\emptyset,\\
    \SET{x,x'}, \mbox{ if } 0\in N.
    \end{array}
    \right.
    \quad
    \closure{R}{\pi(N)}=
    \left\{
    \begin{array}{l}
    \emptyset, \mbox{ if } N=\emptyset,\\
    \closure{R}{\SET{x}}, \mbox{ if } 0\not\in N\not=\emptyset,\\
    \closure{R}{\SET{x,x'}}, \mbox{ if } 0\in N.
    \end{array}
    \right.
    $$
    so that $\pi(\closure{Succ}{N}) \subseteq \closure{R}{(\pi N)}$.\qed
    \end{enumerate}
\end{proof}

\begin{proof}(of Lemma~\ref{lemma:csbisimsubslcseq})
By induction on the structure of $\form$ we prove that, for all
$x_1, x_2 \in X$ and for all $ \SLCS$  formulas $\form$ ,  if $(x_1,x_2) \in \altcmbisim$, then
$\model, x_1 \models \form$ if and only if $\model, x_2 \models \form$.

\noindent
{\bf Base case} $p$:\\
$(x_1,x_2) \in \altcmbisim$ implies $\invpeval x_1 =\invpeval x_2 $ which implies in turn 
$\model, x_1 \models p$ if and only if $\model, x_2 \models p$, for all $p\in AP$.\\

\noindent
{\bf Induction steps}\\
We assume the induction hypothesis---for all $x_1,x_2 \in X$, if $(x_1,x_2) \in \altcmbisim$, then the following holds
$\model, x_1 \models \form$ if and only if $\model, x_2 \models \form$ for any $\SLCS$  formula $\Phi$---and we prove the following cases, for any $(x_1,x_2) \in \altcmbisim$:\\
{\bf Case} $\lneg\form$:\\
Suppose $\model, x_1 \models \lneg\form$ and $\model, x_2 \not\models \lneg\form$.
This would imply $\model, x_1 \not\models \form$ and $\model, x_2 \models \form$ and since
$(x_1,x_2) \in \altcmbisim$, this would contradict the induction hypothesis. 

\noindent
{\bf Case} $\Phi\, \land \Psi$:\\
Suppose $\model, x_1 \models \Phi\, \land \Psi$ and $\model, x_2 \not\models \Phi\, \land \Psi$
and w.l.g. assume $\model, x_2 \not\models \Phi$. 
Then we would get $\model, x_1 \models \Phi$ and $\model, x_2 \not\models \Phi$ and since
$(x_1,x_2) \in \altcmbisim$, this would contradict the induction hypothesis. 

\noindent
{\bf Case} $\lrcs{\Phi}{\Psi}$:\\
Suppose $\model, x_1 \models \lrcs{\Phi}{\Psi}$ and $\model, x_2 \not\models \lrcs{\Phi}{\Psi}$.
$\model, x_1 \models \lrcs{\Phi}{\Psi}$ means there exists path $\pi_1$ and index $\ell$ such that
$\pi_1(0)=x_1, \model,\pi_1(\ell) \models \Phi$ and $\model,\pi_1(j) \models \Psi$ for all 
$j \in I_{\ell}$ where we define $I_n$ as $I_n = \SET{1,\ldots n-1}$. We distinguish three cases:
\begin{itemize}
\item $\ell=0$: in this case, by definition of $\lrcs{\Phi}{\Psi}$, $\model, x_1 \models \Phi$; 
on the other hand, since  $\model, x_2 \not\models \lrcs{\Phi}{\Psi}$
by hypothesis, it should hold $\model, x_2 \not\models \Phi$, but since
$(x_1,x_2) \in \altcmbisim$, this would contradict the induction hypothesis;
\item  $\ell=1$: in this case $\model, \pi_1(1) \models \Phi$ and, by continuity of $\pi_1$, 
we would have that $\pi_1(1)\in\closure{R}{\SET{x_1}}$;
in fact, continuity of $\pi_1$ implies $\pi_1(\closure{R}{\SET{0}}) \subseteq
\closure{\succ}{(\pi_1(\SET{0}))}$, so that we get the following derivation:
$
\pi_1(1) \in \SET{\pi_1(0),\pi_1(1)} \\ = \pi_1(\SET{0,1}) =
\pi_1(\closure{\succ}{\SET{0}}) \subseteq
\closure{R}{(\pi_1(\SET{0}))} =
\closure{R}{\SET{\pi_1(0)}} =
\closure{R}{\SET{x_1}}
$;
that is, there exists $x_1' \in\forwclosure{x_1}$  such that 
$\model, x_1' \models \Phi$; on the other hand, since  $\model, x_2 \not\models \lrcs{\Phi}{\Psi}$
by hypothesis, it should hold $\model, x_2' \not\models \Phi$ for all $x_2' \in\forwclosure{x_2}$,
due to Lemma~\ref{lemma:CFBB}(1) below;
moreover, we know that $(x_1,x_2) \in \altcmbisim$, which, by definition of $\altcmbisim$, and recalling that 
$\altcmbisim$ coincides with $\cmbisim$, implies that there would exist $x_2' \in\forwclosure{x_2}$
such that $(x_1',x_2') \in \altcmbisim$; but then we would have
$\model, x_1' \models \Phi$ and $\model, x_2' \not\models \Phi$ that contradicts the induction hypothesis;
\item $\ell>1$: in this case we can build a path $\pi_2$ as follows: $\pi_2(0)=x_2$, 
{$\pi_2(j) \in \; \forwclosure{\pi_2(j-1)}$} for $j\in I_{\ell}$, and $(\pi_1(j),\pi_2(j))\in \altcmbisim$ for 
$j=0, \ldots \ell-1$; in fact $(\pi_1(0),\pi_2(0))\in \altcmbisim$ by hypothesis and this
implies there exists $x_2' \in \forwclosure{\pi_2(0)}$ such that $(\pi_1(1),x_2')\in \altcmbisim$ and we 
let $\pi_2(1)=x_2'$; a similar reasoning can now be applied starting from 
$(\pi_1(1),\pi_2(1))\in \altcmbisim$, $(\pi_1(2),\pi_2(2))\in \altcmbisim$
and so on till $(\pi_1(\ell-1),\pi_2(\ell-1))\in \altcmbisim$; since $\model,\pi_1(j) \models \Psi$ for
all $j\in I_{\ell}$, by the induction hypothesis we get that also $\model,\pi_2(j) \models \Psi$ for
all $j\in I_{\ell}$; note moreover that  $\model,\pi_2(j) \not\models \Phi$ for all $j=0,\ldots,\ell-1$ since, by hypothesis,
$\model, x_2 \not\models \lrcs{\Phi}{\Psi}$ and, for the same reason, it should also be the case that
$\model,z \not\models \Phi$ for all $z \in \forwclosure{\pi_2(\ell-1)}$; and since 
$(\pi_1(\ell-1),\pi_2(\ell-1))\in \altcmbisim$ there should be a $z \in \forwclosure{\pi_2(\ell-1)}$
such that  $(\pi_1(\ell), z) \in \altcmbisim$; but then, the induction hypothesis would be violated
by $\model, \pi_1(\ell) \models \Phi$ and $\model, z \not\models \Phi$.
\end{itemize}

\noindent
{\bf Case} $\lrcd{\Phi}{\Psi}$:\\
Suppose $\model, x_1 \models \lrcd{\Phi}{\Psi}$ and $\model, x_2 \not\models \lrcd{\Phi}{\Psi}$.
$\model, x_1 \models \lrcd{\Phi}{\Psi}$ means there exists path $\pi_1$ and index $\ell$ such that
$\pi_1(\ell)=x_1, \model,\pi_1(0) \models \Phi$ and $\model,\pi_1(j) \models \Psi$ for all 
$j \in I_{\ell}$. We distinguish three cases:
\begin{itemize}
\item $\ell=0$: in this case, by definition of $\lrcd{\Phi}{\Psi}$, $\model, x_1 \models \Phi$;
on the other hand, since  $\model, x_2 \not\models \lrcd{\Phi}{\Psi}$
by hypothesis, it should hold that $\model, x_2 \not\models \Phi$, but since
$(x_1,x_2) \in \altcmbisim$, this would contradict the induction hypothesis;
\item  $\ell=1$: in this case we have $\model, \pi_1(0) \models \Phi$ and  $\pi_1(1) = x_1$.
We first note that, by continuity of $\pi_1$, we have $x_1 \in \closure{R}{\SET{\pi_1(0)}}$:
$
x_1 =
\pi_1(1) \in
\SET{\pi_1(0),\pi_1(1)} =
\pi_1(\SET{0,1})=
\pi_1(\closure{\succ}{\SET{0}}) \subseteq
\closure{R}{\pi_1(\SET{0})}=
\closure{R}{(\SET{\pi_1(0)})}
$.
This means that there exists $x_1' \in \backclosure{x_1}$ such that $\model, x_1' \models \Phi$, namely $x_1'=\pi_1(0)$. We also know that $\model, x_2' \not\models \Phi$ for all
 $x_2' \in \backclosure{x_2}$, otherwise, by Lemma~\ref{lemma:CFBB}(2) below, $\model, x_2 \models \lrcd{\Phi}{\Psi}$ would hold,
 which is not the case by hypothesis. On the other hand, again by hypothesis we know that
$(x_1,x_2) \in \altcmbisim$, and so, given that $x_1' \in \backclosure{x_1}$, there must also be 
some $x_2'' \in \backclosure{x_2}$ such that $(x_1',x_2'') \in \altcmbisim$. But this, 
by the induction hypothesis, implies that $\model, x_2'' \models \Phi$ which 
contradicts the fact that $\model, x_2' \not\models \Phi$ for all
 $x_2' \in \backclosure{x_2}$.
\item $\ell>1$: in this case we can build a path $\pi_2$ as follows: $\pi_2(\ell)=x_2$, 
$\quad \pi_2(j) \; \in \; \forwclosure{\pi_2(j-1)}$ for $j\in I_{\ell}$, and $(\pi_1(j),\pi_2(j))\in \altcmbisim$ for 
$j=0, \ldots \ell-1$; in fact $(\pi_1(\ell),\pi_2(\ell))\in \altcmbisim$ by hypothesis and this
implies there exists $x_2' \in \backclosure{x_2}$ such that $(\pi_1(\ell-1),x_2')\in \altcmbisim$,
because $\pi_1(\ell -1)\in \backclosure{x_1}$.
We let $\pi_2(\ell-1)=x_2'$; a similar reasoning can now be applied starting from 
$(\pi_1(\ell-1),\pi_2(\ell-1))\in \altcmbisim$, 
and so on till $(\pi_1(1),\pi_2(1))\in \altcmbisim$. 
Since $\model,\pi_1(j) \models \Psi$ for all $j\in I_{\ell}$, by the induction 
hypothesis we get that also $\model,\pi_2(j) \models \Psi$ for
all $j\in I_{\ell}$; note moreover that  $\model,\pi_2(j) \not\models \Phi$ for all $j=0,\ldots,\ell-1$, because 
$\model, x_2 \not\models \lrcd{\Phi}{\Psi}$ and, for the same reason, it should also be the case that
$\model,z \not\models \Phi$ for all $z \in \backclosure{\SET{\pi_2(1)}}$.
But we know that $(\pi_1(1),\pi_2(1)) \in \altcmbisim$ and that 
$\pi_1(0) \in\backclosure{\pi_1(1)}$, so there should be $z \in \backclosure{\pi_2(1)}$
and $(\pi_1(0),z) \in \altcmbisim$ and that $\model, \pi_1(0)\models \Phi$. For the
induction hypothesis, then, we should have also $\model, z \models \Phi$, which brings
to contradiction.
\end{itemize}
This completes the proof.
\end{proof}


\noindent
{\bf Lemma~\ref{lemma:restrict}}. {\em 
For $\form_1,\form_2$ formulas of $\SMCSm$ of Equation~\ref{eq:sublogic}, 
\begin{enumerate}
\item
if $\model, x_1 \models \lrcs{\form_1}{\form_2}$ 
and $\model, x_2 \not\models \lrcs{\form_1}{\form_2}$
then there exists $\Lambda_{\form_1,\form_2}$ 
in the language of Equation~\ref{eq:sublogic} such that
$\model, x_1 \models \Lambda_{\form_1,\form_2} $ 
and $\model, x_2 \not\models \Lambda_{\form_1,\form_2} $;
\item
if $\model, x_1 \models \lrcd{\form_1}{\form_2}$ 
and $\model, x_2 \not\models \lrcd{\form_1}{\form_2}$
then there exists $\Lambda_{\form_1,\form_2}$
in the language of Equation~\ref{eq:sublogic} such that
$\model, x_1 \models \Lambda_{\form_1,\form_2} $ 
and $\model, x_2 \not\models \Lambda_{\form_1,\form_2} $.
\end{enumerate}
}

\begin{proof}
$ $\\
For what concerns item (1),
there are three cases for $\model, x_1 \models \lrcs{\form_1}{\form_2}$.\\
{\bf Case 1:} $\ell = 0$.\\
By definition of the $\lrcsname$ operator, in this case we have $\model, x_1 \models \form_1$.
On the other hand, since $\model, x_2 \not\models \lrcs{\form_1}{\form_2}$, 
we have $\model, x_2 \not\models \form_1$, otherwise 
$\model, x_2 \models \lrcs{\form_1}{\form_2}$ would hold, by definition of $\lrcsname$.
So, in this case  $\Lambda_{\form_1,\form_2} = \form_1$.\\
{\bf Case 2:} $\ell = 1$.\\
By definition of the $\lrcsname$ operator, in this case we have that 
there exists a path $\pi_1$ such that $\pi_1(0)=x_1$ and
$\model, \pi_1(1) \models \form_1$. 
This means that 
$\model, x_1 \models \lrcs{\form_1}{\lfalse}$. On the other hand, 
from the fact that $\model, x_2 \not\models \lrcs{\form_1}{\form_2}$ we get, 
again by definition of  $\lrcsname$,
$\model, x_2 \not\models \lrcs{\form_1}{\lfalse}$. 
So, in this case, $\Lambda_{\form_1,\form_2} = \lrcs{\form_1}{\lfalse}$.
\\
{\bf Case 3:} $\ell = k>1$.\\
By definition of the $\lrcsname$ operator, in this case we have that 
there exists a path $\pi_1$ such that $\pi_1(0)=x_1$,
$\model, \pi_1(k) \models \form_1$ and $\model, \pi_1(j) \models \form_2$, for 
$0<j<k$. 

\noindent
It  is easy to see that:\\
$\model, \pi_1(k-1) \models \form_2 \, \land \, 
\lrcs{\form_1}{\lfalse}$\\
$\model, \pi_1(k-2) \models \form_2 \, \land \, 
\lrcs{(\form_2 \, \land \, \lrcs{\form_1}{\lfalse})}{\lfalse}$\\
\vdots\\
$\model, x_1  \models \Psi$ 
where
$\Psi =
\underbrace{
\lrcs
{(\form_2 \, \land \,
\lrcs{(\ldots \, \land \, \lrcs{(\form_2 \, \land \, \lrcs{\form_1}{\lfalse})}{\lfalse} \ldots}{\lfalse})
)
}
{\lfalse}
}_{k \mbox{ times } \lrcsname}
$,\\
and that:
$\model, x_2  \models \Psi$
does not hold, otherwise one could easily build a
path $\pi_2$ with $\pi_2(0)=x_2$,
$\model, \pi_2(k) \models \form_1$ and $\model, \pi_2(j) \models \form_2$, for 
$0<j<k$ and, consequently we would have $\model, x_2 \models \lrcs{\form_1}{\form_2}$.
So, in this case, $\Lambda_{\form_1,\form_2} = \Psi$.

\noindent
For what concerns point (2),
there are three cases for $\model, x_1 \models \lrcd{\form_1}{\form_2}$.\\
{\bf Case 1:} $\ell = 0$.\\
By definition of the $\lrcdname$ operator, in this case we have $\model, x_1 \models \form_1$.
On the other hand, since $\model, x_2 \not\models \lrcd{\form_1}{\form_2}$, 
we have $\model, x_2 \not\models \form_1$, otherwise 
$\model, x_2 \models \lrcd{\form_1}{\form_2}$ would hold, by definition of $\lrcdname$.
So, in this case  $\Lambda_{\form_1,\form_2} = \form_1$.\\
{\bf Case 2:} $\ell = 1$.\\
By definition of the $\lrcdname$ operator, in this case we have that 
there exists a path $\pi_1$ such that $\model, \pi_1(0) \models \form_1$ and
$\pi_1(1)=x_1$. 
This means that 
$\model, x_1 \models \lrcd{\form_1}{\lfalse}$. On the other hand, 
from the fact that $\model, x_2 \not\models \lrcd{\form_1}{\form_2}$ we get, 
again by definition of  $\lrcdname$,
$\model, x_2 \not\models \lrcd{\form_1}{\lfalse}$. 
So, in this case, $\Lambda_{\form_1,\form_2} = \lrcd{\form_1}{\lfalse}$.
\\
{\bf Case 3:} $\ell = k>1$.\\
By definition of the $\lrcdname$ operator, in this case we have that 
there exists a path $\pi_1$ such that $\model, \pi_1(0) \models \form_1$, 
$\pi_1(k)=x_1$, and $\model, \pi_1(j) \models \form_2$, for 
$0<j<k$. 

\noindent
It  is easy to see that:\\
$\model, \pi_1(1) \models \form_2 \, \land \, 
\lrcd{\form_1}{\lfalse}$\\
$\model, \pi_1(2) \models \form_2 \, \land \, 
\lrcd{(\form_2 \, \land \, \lrcd{\form_1}{\lfalse})}{\lfalse}$\\
\vdots\\
$\model, x_1  \models \Gamma$
where
$\Gamma =
\underbrace{
\lrcd
{(\form_2 \, \land \,
\lrcd{(\ldots \, \land \, \lrcd{(\form_2 \, \land \, \lrcd{\form_1}{\lfalse})}{\lfalse} \ldots}{\lfalse})
)
}
{\lfalse}
}_{k \mbox{ times } \lrcdname}
$,\\
and that:\\ 
$\model, x_2  \models \Gamma$
does not hold, otherwise one could easily build a
path $\pi_2$ with $\model, \pi_2(0) \models \form_1$,  $\pi_2(k)=x_2$,
and $\model, \pi_2(j) \models \form_2$, for 
$0<j<k$ and, consequently we would have $\model, x_2 \models \lrcd{\form_1}{\form_2}$.
So, in this case $\Lambda_{\form_1,\form_2} = \Gamma$.
\qed
\end{proof}


\noindent{\bf Lemma~\ref{lem:quotientsOfClosureCoalgebras}.} 
    Consider $((X,\closurename{X}),\pevalname)$ and $\calX = (X,\eta)$ as in
    Definition \ref{def:closureCoalgebraOfAModel}. Let $\calY = (Y,\theta)$ be a
    $\closurefunctor$-coalgebra. Let $f : \calX \twoheadrightarrow \calY$ be a
    surjective coalgebra homomorphism. Define $\closure{Y}{(B \subseteq Y)} =
    \{ y \in Y \mid B \in (\theta y)_2 \}$. Then $(Y,\closurename{Y})$ is a
    closure space. 

    \medskip

    \noindent The proof requires the following lemma and its corollary. 

    \begin{lemma}\label{lem:closure-and-image}
        Consider $((X,\closurename{X}),\pevalname)$ and $\calX = (X,\eta)$ as in
    Definition \ref{def:closureCoalgebraOfAModel}. Let $\calY = (Y,\theta)$ be a $\closurefunctor$-coalgebra. Let $f : \calX
    \to \calY$ be a (not necessarily surjective)
    coalgebra homomorphism. Define $\closure{Y}{(B \subseteq Y)} = \{ y \in Y
    \mid B \in (\theta y)_2 \}$. It holds that $\forall A \subseteq X . \forall
    x \in X . x \in \closure{X}{A} \iff f x \in \closure{Y}{(\pws f)A}$, that
    is, $\closure{X}{A} = f^{-1}(\closure{Y}{(\pws f)A})$.
    \end{lemma}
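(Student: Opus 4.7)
The plan is to prove both directions of the biconditional by unfolding the relevant definitions and invoking the homomorphism property $\theta \circ f = \closurefunctor f \circ \eta$. Combining Definition~\ref{def:closureCoalgebraOfAModel}, which gives $(\eta x)_2 = \{A \subseteq X \mid x \in \closure{X}{A}\}$, with the action of $\closurefunctor$ on arrows—mapping $(v,S)$ to $(v, \{(\pws f) A \mid A \in S\})$—yields the central identity
\[
(\theta(fx))_2 \;=\; \{\, (\pws f) B \mid B \subseteq X,\ x \in \closure{X}{B} \,\}.
\]
Together with the definition $\closure{Y}{C} = \{y \mid C \in (\theta y)_2\}$, this recasts the target statement as
\[
x \in \closure{X}{A} \iff \exists B \subseteq X : (\pws f) B = (\pws f) A \;\text{and}\; x \in \closure{X}{B}.
\]

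The forward direction is then immediate: taking $B := A$, the assumption $x \in \closure{X}{A}$ gives $A \in (\eta x)_2$, and hence $(\pws f) A \in (\theta(fx))_2$ by the homomorphism condition, which means precisely that $fx \in \closure{Y}{(\pws f) A}$.

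The reverse direction will be the main obstacle. Given a witness $B$ with $x \in \closure{X}{B}$ and $(\pws f) B = (\pws f) A$, one must conclude $x \in \closure{X}{A}$, even though $A$ and $B$ may differ whenever $f$ identifies distinct points. My plan is to combine three ingredients: (i)~monotonicity of $\closure{X}$, observing that both $A$ and $B$ are contained in the saturation $f^{-1}((\pws f) A)$, so that $x \in \closure{X}{(f^{-1}(\pws f) A)}$; (ii)~the upward-closure of $(\eta x)_2$ under set inclusion, which is inherited from monotonicity of $\closure{X}$; and (iii)~Axiom~3 of Definition~\ref{def:ClosureSpaces}, $\closure{}{(A_1 \cup A_2)} = \closure{}{A_1} \cup \closure{}{A_2}$, used to decompose the closure of the saturation along the $f$-fibres and localise the closure-membership back to $A$. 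The delicate step is this final localisation, and it is where one must exploit in full that $\calX$ arises from a genuine closure model (rather than from an abstract $\closurefunctor$-coalgebra): the plan is to choose a minimal witnessing $B$ that projects to the same image as $A$ under $f$ and argue, by additivity of closure, that only the part of $B$ corresponding to $A$ itself is responsible for $x$ lying in $\closure{X}{B}$.
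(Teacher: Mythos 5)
Your reduction of the statement to ``$x \in \closure{X}{A}$ iff there exists $B \subseteq X$ with $(\pws f)B = (\pws f)A$ and $x \in \closure{X}{B}$'' is exactly right, and your proof of the forward implication is complete and coincides with what the paper does. The gap is the reverse direction, which you only sketch — and the localisation step you plan cannot be carried out, because that implication is false in general. Additivity only gives $\closure{X}{(f^{-1}((\pws f)A))} = (\closure{X}{A}) \cup (\closure{X}{((f^{-1}((\pws f)A))\setminus A)})$, i.e.\ a disjunction, and nothing forces the first disjunct. Concretely: let $X = \{a,b\}$ carry the identity closure $\closure{X}{S}=S$ with $\invpeval{a} = \invpeval{b}$, let $Y = \{*\}$ with $\theta\, * = (\invpeval{a}, \{\{*\}\})$, and let $f$ be the constant map. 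Then $f$ is a (surjective) $\closurefunctor$-coalgebra homomorphism, yet $a \notin \closure{X}{\{b\}}$ while $f a = * \in \closure{Y}{((\pws f)\{b\})} = \{*\}$; the offending witness is $B = \{a\}$, which has the same $f$-image as $A = \{b\}$ but a different closure. So no refinement of your ingredients (i)--(iii) — or any other argument — can close this direction as stated.

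Your instinct about where the difficulty sits is in fact vindicated by the paper itself: its proof is a four-step chain of equivalences whose second step, $A \in (\eta\, x)_2 \iff (\pws f)A \in ((\closurefunctor f)(\eta\, x))_2$, is only valid left to right, since the right-hand side merely asserts the existence of \emph{some} $B \in (\eta\, x)_2$ with $(\pws f)B = (\pws f)A$, and $\pws f$ does not reflect membership when $f$ is non-injective. What does survive is the forward implication together with the consequence actually used downstream (Corollary~\ref{cor:closure-and-epi-image}): for surjective $f$ and $B \subseteq Y$ one has $\closure{Y}{B} = (\pws f)(\closure{X}{(f^{-1}B)})$, which needs only monotonicity of $\closurename{X}$ and the identity $(\pws f)(f^{-1}B) = B$. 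Equivalently, the biconditional of the lemma does hold for \emph{saturated} $A$, i.e.\ $A = f^{-1}((\pws f)A)$, and for that restricted statement your monotonicity argument suffices: any witness $B$ with $(\pws f)B = (\pws f)A$ satisfies $B \subseteq f^{-1}((\pws f)A) = A$, so $x \in \closure{X}{B} \subseteq \closure{X}{A}$. I recommend you prove that weaker statement and flag the unrestricted claim as failing on the two-point example above.
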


    \begin{corollary}\label{cor:closure-and-epi-image}
        Under the conditions of Lemma~\ref{lem:quotientsOfClosureCoalgebras}, that is, whenever Lemma~\ref{lem:closure-and-image} holds, and $f$ is surjective, for $B \subseteq Y$, we have $\closure{Y}{B} = (\pws f) \closure{X}{f^{-1}B}$.
    \end{corollary}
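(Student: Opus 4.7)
The plan is to derive the corollary as a direct consequence of Lemma~\ref{lem:closure-and-image}, using surjectivity of~$f$ to collapse the compositions $(\pws f)\circ f^{-1}$ to the identity on $\pws Y$. The whole argument is a short chain of equalities, and the only real content is correctly invoking the right instance of the previous lemma and applying surjectivity twice.

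First I would instantiate Lemma~\ref{lem:closure-and-image} at the subset $A = f^{-1}B \subseteq X$, obtaining $\closure{X}{f^{-1}B} = f^{-1}(\closure{Y}{(\pws f)(f^{-1}B)})$. Since $f$ is surjective, $(\pws f)(f^{-1}B) = B$ (this is the standard identity $f(f^{-1}B) = B$ for surjections), and hence $\closure{X}{f^{-1}B} = f^{-1}(\closure{Y}{B})$.

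Next I would apply the direct image $(\pws f)$ to both sides, yielding $(\pws f)\closure{X}{f^{-1}B} = (\pws f)(f^{-1}(\closure{Y}{B}))$. Again by surjectivity of~$f$, the right-hand side simplifies to $\closure{Y}{B}$, giving the desired equation $\closure{Y}{B} = (\pws f)\closure{X}{f^{-1}B}$.

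There is no genuine obstacle here: the only point that deserves care is the double use of the surjectivity hypothesis, namely that $(\pws f)\circ f^{-1}$ acts as the identity on $\pws Y$, which is exactly what makes Lemma~\ref{lem:closure-and-image} transportable from preimages of closures on the $X$-side to closures of subsets on the $Y$-side. Since both steps are routine set-theoretic manipulations justified by a single preceding lemma, the proof is essentially a two-line calculation.
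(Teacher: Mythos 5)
Your proof is correct and matches the derivation the paper intends: the corollary is stated without an explicit proof precisely because it follows from Lemma~\ref{lem:closure-and-image} instantiated at $A = f^{-1}B$ together with the identity $(\pws f)(f^{-1}C) = C$ for surjective $f$, which is exactly your two-step calculation. Both invocations of surjectivity are needed and you flag them correctly, so nothing is missing.
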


    \begin{proof}(of Lemma~\ref{lem:closure-and-image})
    $\nh{}
    x \in \closure{X}{A} \iff A \in (\eta x)_2 \iff (\pws f) A \in ((\closurefunctor f) \eta x)_2 \hint{\iff}{\text{$f$ is a coalgebra homomorphism}} (\pws f) A \in (\theta f x)_2 \iff f x \in \closure{Y}{(\pws f) A}
    $\qed
    \end{proof}

\begin{proof}(of Lemma~\ref{lem:quotientsOfClosureCoalgebras})    
    
    \medskip
    \noindent {\bf If $\closure{X}{\emptyset} = \emptyset$ holds}, we have:
    $\nh{}
    \closure{Y}{\emptyset} 
    \hint = {\text{Corollary~\ref{cor:closure-and-epi-image}}}
    (\pws f)(\closure{X}{\emptyset})
    \hint{=}{\text{Hypothesis on $\closurename{X}$}}
    \pws f \emptyset
    \nh = 
    \emptyset
    $\qed

    \medskip \noindent {\bf If $\forall A \subseteq X . A \subseteq \closure{X}{A}$ holds}, for $B \subseteq Y$, by the hypothesis $f^{-1}B \subseteq \closure{X}{f^{-1} B}$, and $f$ being surjective, $B \subseteq (\pws f) \closure{X}{f^{-1} B}$, and by Corollary~\ref{cor:closure-and-epi-image}, $B \subseteq \closure{Y}{B}$. \qed

    \medskip \noindent {\bf If $\forall A,B \subseteq X . (\closure{X}{A})\cup (\closure{X}{B}) = \closure{X}{(A \cup B)}$ holds}, for $C,D \subseteq Y$, we have 
    $\nh{}
    \closure{Y}{(C \cup D)} 
    \hint = {\text{Corollary~\ref{cor:closure-and-epi-image}}} 
    (\pws f)(\closure{X}{f^{-1}(C \cup D)}) 
    \nh = 
    (\pws f)(\closure{X}{((f^{-1}C) \cup (f^{-1}D))})
    \hint = {\text{Hypothesis on $\closurename{X}$}}
    (\pws f)((\closure{X}{f^{-1}C}) \cup  (\closure{X}{f^{-1}D}))
    \nh =  
    ((\pws f)(\closure{X}{f^{-1}C})) \cup  ((\pws f)(\closure{X}{f^{-1}D}))
    \hint = {\text{Corollary~\ref{cor:closure-and-epi-image}}} 
    (\closure{Y}{C}) \cup (\closure{Y}{D})
    $\qed
\end{proof}

\noindent \textbf{Lemma \ref{lem:closure-of-equivalent-points}.}
    Let $f$ be the function mapping each element of $X$ into its equivalence classs up to $\inflogeq$. For all $x_1,x_2 \in X$ and $A \subseteq X$,  it holds that $((x_1 \inflogeq x_2) \land x_1 \in \closure{X} A) \implies x_2 \in \closurename{X} f^{-1} (\pws f) A$.
\medskip

\begin{proof}
    For any $x \in X$ let $\chi_x$ be a formula that holds on any $x'$ if and only if $x' \inflogeq x$; such a formula is the (possibly infinite) conjunction of the formulas telling apart $[x]$ from the other equivalence classes of $\inflogeq$. Let $\Sigma = \bigwedge_{z \notin f^{-1} (\pws f) A} \lnot \chi_z$. We have $\model,y \models \Sigma \iff y \in f^{-1} (\pws f) A$. 

    It is true that $A \subseteq f^{-1} (\pws f) A$. Therefore, by properties of closure spaces, we have $\closure{X}{A} \subseteq \closure{X}f^{-1} (\pws f) A$. Thus, by the hypothesis $x_1 \in \closure{X} A$, we have $\model, x_1 \models \lnear \Sigma$. By logical equivalence, also $\model, x_2 \models \lnear \Sigma$. Therefore $x_2 \in \closurename{X}{f^{-1}(\pws f) A}$. \qed
\end{proof}

\end{document}